\newcommand{\eps}{\varepsilon}
\newcommand{\lm}{\lim_{\tau\rightarrow\infty}}
\newcommand{\lms}{\lim_{\tau\rightarrow\tau_M^-}}
\newcommand{\beq}{\begin{equation}}
\newcommand{\eeq}{\end{equation}}
\newcommand{\bu}{\bar{u}_}
\newcommand{\bea}{\begin{eqnarray}}
\newcommand{\eea}{\end{eqnarray}}
\newcommand{\beas}{\begin{eqnarray*}}
\newcommand{\eeas}{\end{eqnarray*}}
\newcommand{\refb}[1]{(\ref{#1})}
\newtheorem{theorem}{Theorem}[section]
\newtheorem{proposition}{Proposition}[section]
\newtheorem{lemma}{Lemma}[section]
\newtheorem{corollary}{Corollary}[section]
\begin{document}

\title[Collapse of a self-similar cylindrical scalar field with non-minimal coupling I: Solutions with a regular axis]{Collapse of a self-similar cylindrical scalar field with non-minimal coupling}

\author{Eoin Condron and Brien C. Nolan}

\address{School of Mathematical Sciences, Dublin City University, Glasnevin, Dublin 9, Ireland.}
\eads{ \mailto{eoin.condron4@mail.dcu.ie},\mailto{brien.nolan@dcu.ie}}

\begin{abstract}
We investigate self-similar scalar field solutions to the Einstein equations in whole cylinder symmetry. Imposing self-similarity on the spacetime gives rise to a set of single variable functions describing the metric. Furthermore, it is shown that the scalar field is dependent on a single unknown function of the same variable and that the scalar field potential has exponential form. The Einstein equations then take the form of a set of ODEs. Self-similarity also gives rise to a singularity at the scaling origin. We discuss the number of degrees of freedom at an arbitrary point and prove existence and uniqueness of a 2-parameter family of solutions with a regular axis. We discuss the evolution of these solutions away from the axis toward the past null cone of the singularity, determining the maximal interval of existence in each case. 
\end{abstract}

\section{Introduction.}
The issue of cosmic censorship is one of the principal outstanding questions in general relativity. The cosmic censorship hypothesis comes in two versions, the weak and the strong. The weak hypothesis asserts that generically, in the gravitational collapse in asymptotically flat spacetimes, any singularity formed is shielded by an event horizon: singularities are globally censored. The strong hypothesis states that generic collapse leads to a globally hyperbolic spacetime: singularities are locally censored. There are, however, numerous examples of spacetimes which exhibit naked singularity (NS) formation, and the ultimate aim of this work is to determine if the class of spacetimes which give the title to this paper is numbered among them. 
Self-similarity plays an important role in many theories of classical physics. In general relativity, Carr's self-similarity hypothesis asserts that under certain physical conditions, solutions naturally evolve to a self-similar form \cite{Carr}. There is a body of evidence that supports this hypothesis: see e.g. \cite{CC}. Thus self-similar solutions are highly relevant to the study of gravitational collapse. Harada and Jhingan recently extended the self-similarity hypothesis to cylindrical spacetimes using perturbation analysis of Einstein-Rosen waves \cite{HJ}. The possibility of self-similar solutions acting as end states to more general solutions in the context of cylindrical symmetry is also discussed in \cite{NHKM}. The principal motivation for studying self-similar solutions is that the assumption of self-similarity brings about a significant simplification, reducing the field equations to ODEs. Self-similar spherically symmetric spacetimes are now well understood (see e.g. \cite{CC} and references therein). We note in particular the results of Brady \cite{Brady} and Christodoulou \cite{Christ1} which demonstrated the occurrence of naked singularities in the self-similar, spherically symmetric collapse of a (minimally coupled) scalar field. Christodoulou \cite{Christ2} proceeded to show that the NS solutions in the class are unstable in the general class of spherically symmetric scalar field spacetimes. The existence of a class of NS solutions for a well-behaved matter field is always of interest as these may act as seeds for a more general (non-self-similar) class of NS solutions. If such a class is found, the relevant question changes to that of their stability. Thus it is of interest to determine whether or not NS arise in a given well-defined class of spacetimes.\par
As a departure from spherical symmetry and in an effort to elucidate non-spherical collapse, some work has been done on cylindrical symmetry, for example \cite{A&T}-\cite{HNN}. 
This paper is the first of two, whose overall aim is to add to this body of work with a rigorous analysis of self-similar, cylindrically symmetric spacetimes coupled to a non-linear scalar field. 
We assume self-similarity of the first kind, where the homothetic Killing vector field (HKVF) is orthogonal to cylinders of symmetry, and show that there exists a curvature singularity where the homothetic Killing vector is identically zero, known as the scaling origin $\mathcal{O}$. We study solutions with a regular axis \cite{Hayward}. The associated regular axis conditions give rise to initial values for the independent variables of the problem, allowing us to cast the Einstein-Scalar Field equations as an initial value problem. To determine whether this class of spacetimes exhibits NS formation, we aim to find the global structure of solutions and determine whether the the future null cone of the singularity, which we label $\mathcal{N}_+$, exists as part of the spacetime. In the case where $\mathcal{N}_+$ is part of the spacetime, it corresponds to an absolute Cauchy horizon.
The field equations have three singular points; along the axis, along the past null cone of the origin, labelled $\mathcal{N}_-$, and along $\mathcal{N}_+$. This gives a natural division of the problem into two stages; solutions between the axis and $\mathcal{N}_-$, called region \textbf{I}, and between $\mathcal{N}_-$ and $\mathcal{N}_+$, called region \textbf{II}. The content of this paper deals with region \textbf{I}. In a follow up paper we investigate the structure of solutions which extend beyond $\mathcal{N}_-$ \cite{paper2}. In the second paper we find that all of relevant solutions satisfy a strong cosmic censorship condition. \par
The layout of the paper is as follows. Section 2 gives the formulation of the field equations and initial data along the axis, and a proof that a singularity exists at the scaling origin. We also show that the minimally coupled-case is mathematically equivalent to the vacuum case. In the general case, the system has two parameters and a free initial datum, different ranges of which give different global structures. We also summarise the main results of the paper in this section, with the proofs given in later sections. 
In section 3 we prove existence and uniqueness of solutions to the initial value problem formulated in section 2, using a fixed point argument. In sections 4 and 5 we determine the global structure of solutions in region \textbf{I}. In some cases, exact solutions may be found and these are deferred to section 5. Section 6 gives the proof of the main theorem of this paper and we conclude with a brief summary in section 7. We use units such that $8\pi G=c=1$ throughout. 
\section{Self-similar cylindrically symmetric spacetimes coupled to a non-linear scalar field with a regular axis}
Here we give the Einstein equations and initial conditions that correspond to self-similar cylindrically symmetric scalar field spacetimes. In section 2.1, we give the line element for whole cylinder symmetry, and write down the Einstein equations for the case of a non-linear scalar field. In section 2.2, we treat the minimally coupled case, showing that it is mathematically equivalent to the vacuum case. In section 2.3 we specialise to the case of self-simliarity of the first kind, and show how this assumption reduces the EFEs to a set of ODEs. We also determine that the scalar field potential has exponential form. In section 2.4, we discuss the regular axis conditions, and show how these
give rise to initial conditions for the ODEs derived in section 2.3. We then prove in section 2.5 that, except in the case of flat spacetime, there is a singularity at the scaling origin. In section 2.6 we write down the initial value problem that is studied in the remainder of the paper and we conclude with a statement of the main results of the paper in section 2.7. 
\subsection{The Einstein equations for a cylindrically symmetric scalar field}
We consider cylindrically symmetric spacetimes with whole-cylinder symmetry \cite{A&T}. This class of spacetimes admits a pair of commuting, spatial Killing vectors $\boldsymbol{\xi}_{(\theta)}$,$\boldsymbol{\xi}_{(z)}$ called the axial and translational Killing vectors, respectively. Introducing double null coordinates $(u,v)$ on the Lorentzian 2-spaces orthogonal to the surfaces of cylindrical symmetry, the line element may be written as:
\begin{equation}
\label{LE uv}ds^2 = -2e^{2\bar{\gamma}+2\bar{\phi}}dudv + e^{2\bar{\phi}}r^2d\theta^2 + e^{-2\bar{\phi}}dz^2,
\end{equation}
where $r$ is the radius of cylinders, $\bar{\gamma} ,\phi$  and  $r$  depend on $u$ and $v$ only.\\
We take the matter source to be a cylindrically symmetric, self-interacting scalar field $\psi(u,v)$ with stress-energy tensor given by
\begin{equation} 
\label{EMT}T_{ab}= \nabla_{a}\psi\nabla_{b}\psi-\frac{1}{2}g_{ab}\nabla^{c}\psi\nabla_{c}\psi-g_{ab}V(\psi),
\end{equation}
where $V(\psi)$ is the scalar field potential.
The form of the line element is preserved by the coordinate transformations
\begin{equation}
\label{Coord Freedom}u \rightarrow \bar{u}(u),\quad v \rightarrow \bar{v}(v), \quad z \rightarrow \lambda z,
\end{equation}
for constant $\lambda$. Note that $\theta\in[0,2\pi)$ and so transformations of the kind $\theta\rightarrow \lambda \theta$ are not allowed in general. 
The full set of Einstein equations for these spacetimes is
\begin{subequations}
\begin{eqnarray}
\label{EFE uv a}2r_{u}\bar{\gamma}_{u}-r_{uu} -2r\bar{\phi}_{u}^2 = r\psi_{u}^2,\\
\label{EFE uv b}r_{uv}=   r e^{2\bar{\gamma}+2\bar{\phi}} V(\psi),\\
\label{EFE uv c}2r_{v}\bar{\gamma}_{v}-r_{vv} - 2r\bar{\phi}_{v}^2 = r\psi_{v}^2,\\
\label{EFE uv d}2(\bar{\phi}_u\bar{\phi}_v+\bar{\gamma}_{uv})= - \psi_u\psi_v+e^{2\bar{\gamma}+2\bar{\phi}} V(\psi)
\end{eqnarray}
\begin{equation}
\label{EFE uv e}2(r_u\bar{\phi}_v+r_v\bar{\phi}_u+r\bar{\phi}_u\bar{\phi}_v+r_{uv}+r\bar{\gamma}_{uv}+2r\bar{\phi}_{uv})=-r\psi_u\psi_v+re^{2\bar{\gamma}+2\bar{\phi}} V(\psi)
\end{equation}\end{subequations}
The subscripts denote partial derivatives here. Our field $\psi$ satisfies
\bea
\label{Scalar wave}\nabla^a\nabla_a\psi-V'(\psi)=0, 
\eea
which implies $\nabla^aT_{ab}=0$. This yields
\beq
\label{wave psi uv}2r\psi_{uv}+r_v\psi_u+r_u\psi_v+ r e^{2\bar{\gamma}+2\bar{\phi}}V'(\psi)=0,
\eeq
which is the wave equation for the scalar field $\psi$. A useful simplification is obtained by subtracting $r$\refb{EFE uv d} from \refb{EFE uv e}, which gives
\beq
\label{wave phi} 2r\bar{\phi}_{uv}+r_{u}\bar{\phi}_{v}+r_{v}\bar{\phi}_{u}+r_{uv}=0.\\
\eeq
Note that \refb{wave psi uv} can be derived from \refb{EFE uv a}-\refb{EFE uv e} and that from this point onward we make use of \refb{EFE uv a}-\refb{EFE uv e},\refb{wave psi uv} and \refb{wave phi} in our analysis. 
\subsection{The minimally coupled case}
In this section we deal with the case where the scalar field potential $V$ is equal to zero, and so the scalar field is minimally coupled ($\nabla_a\nabla^a\psi=0)$. The field equations simplify greatly in this case and we show that solving the Einstein equations is effectively the same as in the vacuum case. With $V=0$, equation \refb{EFE uv b} gives 
\beq
r_{uv}=0, \qquad r= f(u)+g(v).
\eeq
We require the absence of trapped cylinders in the initial configuration so the gradient of $r$ must be spacelike \cite{Kip}. This reduces to the condition
\beq
f'(u)g'(v)<0.
\eeq
Using the coordinate freedom \refb{Coord Freedom}, we then set 
\beq
r=\frac{v-u}{\sqrt{2}}. 
\eeq
To demonstrate equivalence to the vacuum case, we follow the example of \cite{HNN} and introduce time and radial coordinates 
\beq
T=\frac{v+u}{\sqrt{2}}, \qquad X=\frac{v-u}{\sqrt{2}}. 
\eeq
The line  element is then given by 
\beq
ds^2 = e^{2\bar{\gamma}+2\bar{\phi}}(dX^2-dT^2)+X^2e^{2\bar{\phi}}d\theta^2+e^{-2\bar{\phi}}dz^2,
\eeq
and the remaining field equations reduce to 
\begin{subequations}
\bea
\label{gamma X}\bar{\gamma}_X=X\left(\bar{\phi}_T^2+\bar{\phi}_X^2+\frac{\psi_T^2}{2}+\frac{\psi_X^2}{2}\right),\\
\label{gamma T}\bar{\gamma}_T=X(2\bar{\phi}_T\bar{\phi}_X+\psi_X\psi_T),\\
\psi_{TT}-\psi_{XX}-\frac{\psi_X}{X}=0,\\
\bar{\phi}_{TT}-\bar{\phi}_{XX}-\frac{\bar{\phi}_X}{X}=0.\eea
\end{subequations}
Given regular initial data, the linear wave equations for $\psi$ and $\bar{\phi}$ yield unique, globally hyperbolic, singularity-free solutions.  Solutions of $\bar{\gamma}$ may be then be obtained from \refb{gamma X} and \refb{gamma T}. We note that this is, essentially, mathematically equivalent to the vacuum case and refer the reader to \cite{ABS} and \cite{HNN} (for the self-similar case) for a full treatment of the problem. 
\subsection{Self-similarity}
We assume self-similarity of the first kind \cite{Carr} which is equivalent to the existence of a HKVF $\boldsymbol{\boldsymbol{\xi}}$, such that
\begin{equation}
\label{SS}\mathcal{L}_{\boldsymbol{\xi}} g_{ab} = 2 g_{ab},
\end{equation}
where $\mathcal{L}_{\boldsymbol{\xi}}$ denotes the Lie derivative along the vector $\boldsymbol{\boldsymbol{\xi}}$. We consider only cylindrically symmetric HKVF's. 
They have the form
\begin{equation}
\label{HKV}\boldsymbol{\boldsymbol{\xi}} = \alpha(u,v)\frac{\partial}{\partial u}+\beta(u,v)\frac{\partial}{\partial v},
\end{equation}
so that $\boldsymbol{\boldsymbol{\xi}}$ is orthogonal to the the cylinders of symmetry. We note that $\boldsymbol{\xi}$ could have $\partial_\theta,\partial_z$ components and that non-axis-orthogonal HKVF's may give a richer structure to the space of solutions. In \cite{HNN}, the authors consider self-similar cylindrical vacuum solutions. When the HKVF is assumed to be cylindrical, they find that the spacetime is actually flat. They then consider more general one-parameter family of HKVF's $\boldsymbol{w}$ of the form 
\beq
\boldsymbol{w} = \frac{1}{1-\kappa}\frac{\partial}{\partial t} + \frac{1}{1-\kappa}\frac{\partial}{\partial x} +\frac{1-2\kappa}{1-\kappa}\frac{\partial}{\partial z},
\eeq
where $t$ and $x$ are time and radial coordinates. 
Depending on the choice of the parameter $\kappa$, these spacetimes are found to describe the interior of an
exploding (imploding) shell of gravitational waves or the collapse (explosion) of gravitational waves.
However, we have chosen to restrict our study to HKVF of the form \refb{HKV} for the sake of simplicity. We note that this gives a well-defined class of HKVF's. Equation
\refb{SS} is equivalent to 
\beq
\nabla_{\mu}\boldsymbol{\xi}_{\nu}+\nabla_{\nu}\boldsymbol{\xi}_{\mu}= 2g_{\mu \nu},\eeq
which leads to $\alpha = \alpha(u)$ and $\beta=\beta(v)$. 
We then use the coordinate freedom \refb{Coord Freedom} to rescale $u$ and $v$ such that $\alpha(u) = 2u$ and $\beta(v) = 2v$.\\
Having made this transformation, equations \refb{SS}
 yield
\begin{equation}
\label{Single Var}\bar{\gamma} =\gamma(\eta), \qquad \bar{\phi}=\phi(\eta)-\log|u|^{1/2},\qquad r =|u|S(\eta),
\end{equation}
where 
\beq
\eta = \frac{v}{u},\eeq
is the similarity variable and $\gamma,\phi,S$ are metric functions for the self-similar metric, which is given by
\begin{equation}
\label{LE eta}ds^2 = -2|u|^{-1}e^{2\gamma(\eta)+ 2\phi(\eta)}dudv+|u|e^{2\phi(\eta)}S^2(\eta)d\theta+|u|e^{-2\phi(\eta)}dz^2.
\end{equation}
The coordinate transformations that preserve this form of the metric are 
\begin{equation}
\label{Coord Freedom 2}u \rightarrow \lambda u,\quad v \rightarrow\mu v, \quad z \rightarrow \sigma z,
\end{equation}
for constants $\lambda,\mu,\sigma$. The following result is stated without proof in \cite{KHM} and \cite{Ellis}, however, we found it useful to give a proof here. 
\begin{proposition} For a self-similar scalar field $\psi$ with energy-momentum tensor \refb{EMT} and $V(\psi)\neq0$, admitting a homothetic Killing vector $\boldsymbol{\xi}$ such that \refb{SS} holds, the potential $V(\psi)$ has the exponential form 
\beq 
\label{V} V(\psi)=\bar{V}_0e^{-2\psi/k},
\eeq
where $\bar{V}_0\neq0 ,k\neq0$ are constants. \end{proposition}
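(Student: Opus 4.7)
The plan is to exploit the homothetic invariance at the level of Einstein's equations rather than to work in self-similar coordinates directly. First I would trace-reverse the field equations \refb{EFE uv a}--\refb{EFE uv e}: in four dimensions the trace of \refb{EMT} is $T=g^{cd}T_{cd}=-\nabla^c\psi\,\nabla_c\psi-4V(\psi)$, so the identity $R_{ab}=T_{ab}-\tfrac{1}{2}Tg_{ab}$ collapses to
\[
R_{ab}=\nabla_a\psi\,\nabla_b\psi+g_{ab}V(\psi).
\]
Because the right-hand side of \refb{SS} has constant coefficient, a standard calculation gives $\mathcal{L}_{\boldsymbol\xi}\Gamma^c_{ab}=0$ and hence $\mathcal{L}_{\boldsymbol\xi}R_{ab}=0$. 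Writing $\Psi:=\mathcal{L}_{\boldsymbol\xi}\psi=\xi^c\nabla_c\psi$ and Lie-differentiating the matter side using $\mathcal{L}_{\boldsymbol\xi}g_{ab}=2g_{ab}$ yields the tensor identity
\begin{equation*}
\nabla_a\Psi\,\nabla_b\psi+\nabla_a\psi\,\nabla_b\Psi+g_{ab}\bigl(2V(\psi)+V'(\psi)\Psi\bigr)=0.
\end{equation*}

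The next step is to extract from this that $\Psi$ is a nonzero constant. Contracting the identity with $\nabla^a\psi$ yields a relation of the form $A\,\nabla_b\psi+\bigl(\nabla^c\psi\,\nabla_c\psi\bigr)\nabla_b\Psi=0$ for some scalar $A$, so wherever $\nabla\psi$ is neither null nor zero, $\nabla_b\Psi$ is parallel to $\nabla_b\psi$ and hence $\Psi=\Psi(\psi)$. Substituting $\nabla_a\Psi=\Psi'(\psi)\,\nabla_a\psi$ back into the identity reduces it to
\[
2\Psi'(\psi)\,\nabla_a\psi\,\nabla_b\psi+g_{ab}\bigl(2V(\psi)+V'(\psi)\Psi\bigr)=0.
\]
Since $g_{ab}$ has full rank while $\nabla_a\psi\,\nabla_b\psi$ is of rank at most one, these symmetric tensors are linearly independent on the locus where $\nabla\psi\neq 0$, so their coefficients must vanish separately. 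This gives $\Psi'(\psi)=0$, so $\nabla_b\Psi=0$ and $\Psi\equiv k$ is constant, together with $V'(\psi)\,k+2V(\psi)=0$. If $k=0$ the latter forces $V\equiv 0$, contradicting the hypothesis; hence $k\neq 0$, and integrating the linear ODE yields $V(\psi)=\bar V_0\,e^{-2\psi/k}$ with $\bar V_0\neq 0$ since $V\not\equiv 0$.

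I expect the main technical difficulty to be the step $\Psi=\Psi(\psi)$, since the contraction argument breaks down precisely where $\nabla\psi$ is null or zero. I would handle these loci by an analyticity argument: the self-similar ansatz \refb{Single Var} renders all unknowns real-analytic functions of $\eta$, so a conclusion holding on a generic open set propagates to the whole spacetime. Alternatively, one could work directly with the displayed identity by evaluating it against two vector fields transverse to $\nabla\psi$ to first isolate and kill the isotropic $g_{ab}$-coefficient, and then extract the remaining symmetric part.
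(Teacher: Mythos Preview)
Your approach is essentially the same as the paper's: both derive the identical tensor identity $\nabla_a\Psi\,\nabla_b\psi+\nabla_a\psi\,\nabla_b\Psi+g_{ab}(2V+V'\Psi)=0$ (you via $\mathcal{L}_{\boldsymbol\xi}R_{ab}=0$, the paper via $\mathcal{L}_{\boldsymbol\xi}T_{ab}=0$ followed by a trace to eliminate the $V$--terms), and both then contract with $\nabla^a\psi$ to force $\Psi$ constant and obtain $2V+kV'=0$.

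The one place where the paper is sharper is the null locus $\nabla^c\psi\,\nabla_c\psi=0$, which you flag as a difficulty and propose to handle by analyticity. The paper instead treats it directly: from the once-contracted identity one has $\psi^c\nabla_c\Psi=0$, and then contracting the original identity with $\mathcal{L}_{\boldsymbol\xi}\psi^b$ shows $\nabla_b\Psi$ is null and orthogonal to $\psi_b$, hence parallel to it; substituting $\nabla_b\Psi=\Lambda\psi_b$ back into the identity forces $\Lambda=0$. This is precisely your ``alternative contraction'' remedy, made explicit. Your analyticity fallback is not available here without further work, since analyticity of the self-similar metric functions in $\eta$ is not established in the paper. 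Apart from this, your $\Psi=\Psi(\psi)$ detour is a slightly longer path to $\nabla_b\Psi=0$ than the paper's double contraction (contract twice with $\psi^a$ to get $\psi^c\nabla_c\Psi=0$ directly, then read $\nabla_b\Psi=0$ off the once-contracted equation), but it is correct.
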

\begin{proof}
It can be shown that \refb{SS} leads to $\mathcal{L}_{\boldsymbol{\xi}}T_{ab}=0$ \cite{Carr}. For $T_{ab}$ given by \refb{EMT} we have
\begin{eqnarray}
\nonumber\psi_a \mathcal{L}_{\boldsymbol{\xi}}\psi_b+\psi_b \mathcal{L}_{\boldsymbol{\xi}}\psi_a-g_{ab}\bigg(\psi^c\psi_c+\frac{1}{2}\psi^c\mathcal{L}_{\boldsymbol{\xi}}\psi_c
\label{SS EMT}+\frac{1}{2}\psi_c \mathcal{L}_{\boldsymbol{\xi}}\psi^c\\ \hspace{150pt}+2V+V'(\psi) \mathcal{L}_{\boldsymbol{\xi}}\psi\bigg)=0.
\end{eqnarray}
Now $\mathcal{L}_{\boldsymbol{\xi}}\psi_c=\mathcal{L}_{\boldsymbol{\xi}}g_{bc}\psi^b=2\psi_c+g_{bc}\mathcal{L}_{\boldsymbol{\xi}}\psi^b$, and so 
\begin{equation}
\label{switch}\psi^c\mathcal{L}_{\boldsymbol{\xi}}\psi_c=2\psi^c\psi_c+\psi_c \mathcal{L}_{\boldsymbol{\xi}}\psi^c.
\end{equation}
Combining this with \refb{SS EMT} and taking the trace then yields
\begin{equation}
\label{trace 1}-\psi^c \mathcal{L}_{\boldsymbol{\xi}}\psi_c=4V+2V'(\psi) \mathcal{L}_{\boldsymbol{\xi}}\psi.
\end{equation}
Using \refb{trace 1} to eliminate $2V+V'(\psi) \mathcal{L}_{\boldsymbol{\xi}}\psi$ from \refb{SS EMT} and simplifying produces
\begin{equation}
\label{tensor ab}\psi_a \mathcal{L}_{\boldsymbol{\xi}}\psi_b+\psi_b \mathcal{L}_{\boldsymbol{\xi}}\psi_a-\frac{1}{2}g_{ab}\psi^c\mathcal{L}_{\boldsymbol{\xi}}\psi_c=0.
\end{equation}
Contracting with $\psi^a$ gives
\begin{equation}
\label{tensor b}\psi^c\psi_c \mathcal{L}_{\boldsymbol{\xi}}\psi_b+\frac{1}{2}\psi_b\psi^c \mathcal{L}_{\boldsymbol{\xi}}\psi_c=0,
\end{equation}
and contracting with $\psi^b$ gives
\begin{equation}
\label{scalar}\frac{3}{2}\psi^b\psi_b(\psi^c \mathcal{L}_{\boldsymbol{\xi}}\psi_c)=0.
\end{equation}
In the case $\psi^c\psi_c=0$ we have $\psi^c\mathcal{L}_{\boldsymbol{\xi}}\psi_c=0$, from \refb{tensor b}, since we are assuming $\psi_b\neq0$. It then follows from \refb{switch} that $\psi_c\mathcal{L}_{\boldsymbol{\xi}}\psi^c=0$. Contracting \refb{tensor ab} with $\mathcal{L}_{\boldsymbol{\xi}}\psi^b$ produces
\begin{equation}
\label{tensor a 2}\psi_a  \mathcal{L}_{\boldsymbol{\xi}}\psi^b\mathcal{L}_{\boldsymbol{\xi}}\psi_b+\psi_b\mathcal{L}_{\boldsymbol{\xi}}\psi^b\mathcal{L}_{\boldsymbol{\xi}}\psi_a=\psi_a  \mathcal{L}_{\boldsymbol{\xi}}\psi^b\mathcal{L}_{\boldsymbol{\xi}}\psi_b=0,
\end{equation}
and we see that $\mathcal{L}_{\boldsymbol{\xi}}\psi_b$ is null. Since it is also orthogonal to $\psi_b$, it must be parallel to it, i.e., $\mathcal{L}_{\boldsymbol{\xi}}\psi_b=\Lambda\psi_b$ for some quantity $\Lambda$. Putting this into \refb{tensor ab} gives $2\Lambda\psi_a\psi_b=0$, which reveals that $\Lambda$ must be zero, i.e. $\mathcal{L}_{\boldsymbol{\xi}}\psi_b=0$.\\
In the case $\psi^c\psi_c\neq 0$, we also have $\psi^c \mathcal{L}_{\boldsymbol{\xi}}\psi_c=0$, by \refb{scalar}. It follows immediately from \refb{tensor b} that $\mathcal{L}_{\boldsymbol{\xi}}\psi_b=0$ in this case also.
It is straightforward to show that $\partial_b\mathcal{L}_{\boldsymbol{\xi}}\psi=\mathcal{L}_{\boldsymbol{\xi}}\psi_b$, so we have $\partial_b\mathcal{L}_{\boldsymbol{\xi}}\psi=0$, and thus $\mathcal{L}_{\boldsymbol{\xi}}\psi=k$,
for some constant $k$. Equation \refb{trace 1} then simplifies to $2V+kV'=0$, which yields \refb{V} for $k\neq0$. Note that $k=0$ gives $V=0$, which has been dealt with above.
\end{proof} 

\begin{corollary}
If $\boldsymbol{\xi}$ has the form \refb{HKV} with $\alpha=2u$ and $\beta =2v$, then $\psi$ and $V(\psi)$ may be written as 
\begin{equation}
\label{SS SF}\psi=F(\eta)+\frac{k}{2}\log|u|, \qquad V(\psi) = \frac{\bar{V}_0e^{-\frac{2}{k}F(\eta)}}{|u|}.
\end{equation}
\end{corollary}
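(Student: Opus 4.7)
The plan is to exploit the key relation established in the proof of the proposition, namely $\mathcal{L}_{\boldsymbol{\xi}}\psi = k$, and reduce the statement to solving a single first-order linear PDE for $\psi$. With $\boldsymbol{\xi} = 2u\,\partial_u + 2v\,\partial_v$, the defining condition $\mathcal{L}_{\boldsymbol{\xi}}\psi = k$ becomes
\beq
2u\,\psi_u + 2v\,\psi_v = k.
\eeq

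First I would observe that the function $\psi_p := \tfrac{k}{2}\log|u|$ is a particular solution of this PDE, since $2u \cdot \tfrac{k}{2u} + 2v \cdot 0 = k$. The associated homogeneous equation $u\,\psi_u + v\,\psi_v = 0$ is an Euler-type transport equation whose characteristics satisfy $dv/du = v/u$, giving the invariant $\eta = v/u$. Hence the general homogeneous solution is an arbitrary smooth function $F(\eta)$, and the general solution of the full PDE is $\psi = F(\eta) + \tfrac{k}{2}\log|u|$, which is the first identity in \refb{SS SF}. (Alternatively one can simply verify directly by substitution that any $\psi$ of this form satisfies $\mathcal{L}_{\boldsymbol{\xi}}\psi = k$, and then invoke uniqueness of the characteristic decomposition.)

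Second, substituting this expression for $\psi$ into the exponential form \refb{V} of the potential yields
\beq
V(\psi) = \bar{V}_0\, e^{-2\psi/k} = \bar{V}_0\, e^{-\frac{2}{k}F(\eta)}\, e^{-\log|u|} = \frac{\bar{V}_0\, e^{-\frac{2}{k}F(\eta)}}{|u|},
\eeq
which is the second identity in \refb{SS SF}. There is no genuine obstacle here; the only mildly delicate point is the sign/absolute-value convention in $\log|u|$, which is consistent with the coordinate freedom \refb{Coord Freedom 2} and with the use of $|u|$ in the self-similar line element \refb{LE eta}. The proof is therefore a short computation built directly on the proposition.
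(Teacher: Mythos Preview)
Your proof is correct and follows essentially the same approach as the paper: both start from $\mathcal{L}_{\boldsymbol{\xi}}\psi=k$ with $\boldsymbol{\xi}=2u\,\partial_u+2v\,\partial_v$, solve the resulting first-order PDE to obtain $\psi=F(\eta)+\tfrac{k}{2}\log|u|$, and then substitute into \refb{V} to get the potential. You simply spell out the method of characteristics (particular plus homogeneous solution) where the paper states the result directly.
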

\begin{proof}
In this case, $\mathcal{L}_{\boldsymbol{\xi}}\psi=k$ reduces to 
\begin{equation}
\mathcal{L}_{\boldsymbol{\xi}}\psi=2u\psi_u+2v\psi_v=k,
\end{equation}
from which $\psi=F(\eta)+\log|u|^{k/2}$ follows. Theorem 2.1 then gives the potential $V$. 
\end{proof}
We are now in a position to formulate the field equations as a set of ODEs.
In terms of $\gamma,\phi,S$ and $F$, \refb{EFE uv a}-\refb{EFE uv c},\refb{wave psi uv} and \refb{wave phi} are given by 
\begin{subequations}
\begin{eqnarray}
\label{EFE eta A}2\eta\gamma'(S-\eta S')+\eta^2S''+2S\left(\eta\phi'+\frac{1}{2}\right)^2=-S\left(\eta F'-\frac{k}{2}\right)^2,\\
\label{EFE eta B}\eta S''=-\bar{V}_0Se^{2\gamma+2\phi-2F/k},\\
\label{EFE eta C}2S'\gamma'-S''-2S\phi'^2=SF'^2,\\
\label{EFE eta D}2\eta S''+4\eta S\phi''+4\eta S'\phi'+2S\phi'+S'=0,\\
\label{EFE eta E}2\eta SF''+2\eta S'F'+SF'-\frac{kS'}{2}+\frac{2\bar{V}_0}{k}Se^{2\gamma+2\phi-2F/k}=0.
\end{eqnarray}
\end{subequations}
Now, \refb{EFE eta A}+$\eta^2$\refb{EFE eta C} simplifies to 
\begin{equation}
\label{GPF' eta}\frac{1}{2}+2\eta\gamma'+2\eta\phi'=k\eta F' -\frac{k^2}{4}.\end{equation}
Dividing by $\eta$ and integrating gives
\begin{equation}
\label{GPF eta}2\gamma+2\phi=k F -\left(\frac{1}{2}+\frac{k^2}{4}\right)\log|\eta|+c_1,\end{equation}
for some constant $c_1$. Equation \refb{EFE eta B} then reduces to
\begin{equation}
\label{S'' F}\eta S''=V_0e^{(k-2/k)F}|\eta|^{-(1/2+k^2/4)}S,
\end{equation}
where $V_0=\bar{V}_0e^{c_1}$ and we have used \refb{GPF eta} to replace $e^{2\gamma+2\phi}$. We define 
\begin{equation}
\label{l, lambda}l=\frac{2F}{k}-\log|\eta|^{1/2},\qquad \lambda = \frac{k^2}{2}-1,
\end{equation}
which gives
\begin{equation}
\label{S'' l}\eta S''=-V_0|\eta|^{-1}e^{\lambda l}S.
\end{equation}
Equation \refb{EFE eta D} is exact and may be integrated to give
\begin{equation}
\label{S' phi}2S\phi'+S'=c_2|\eta|^{-1/2},
\end{equation}
for some constant $c_2$. Written in terms of $l$ and $S$, \refb{EFE eta E} becomes 
\begin{equation}
\label{Sl''}\eta Sl'' +\eta S'l' + \frac{Sl'}{2}-\frac{S}{4\eta}+ \frac{2V_0}{k^2|\eta|}Se^{\lambda l}=0.
\end{equation}
\subsection{The regular axis conditions}
To ensure that the collapse ensues from an initially regular configuration we impose regular axis conditions \cite{Hayward} to the past of the scaling origin $(u,v)=(0,0)$. 
The areal radius $\rho$ and the specific length $L$ of the cylinders are given by the norms of the Killing vectors:
\begin{equation}\rho = \sqrt{\xi_{(\theta)}^a.\xi_{(\theta)a}}=|u|^\frac{1}{2}e^\phi S,  \qquad L= \sqrt{\boldsymbol{\xi}_{(z)}^a.\boldsymbol{\xi}_{(z)a}}=|u|^{\frac{1}{2}}e^{-\phi}.
\end{equation}
The axis is defined by $\rho = 0$. We rule out the case $u = 0$ as this is a null hypersurface and we require the axis to be timelike. For a regular axis, the specific length $L$ must be non zero and finite, and so $\phi$ must be also be finite. A regular axis must therefore correspond to $S(\eta) = 0$. Hence, $\eta$ must be constant along the axis and a rescaling of $u$ and $v$ using the coordinate freedom \refb{Coord Freedom 2} places the axis at $\eta = 1$.\\
Note that the past null cone of the origin $\mathcal{N}_-$ corresponds to $\eta =0$ and the interval $\eta\in[0,1]$ constitutes region \textbf{I}. \\
Further conditions for a regular axis are as follows \cite{Hayward}:
\begin{equation}
\label{Reg Axis}\nabla^a\rho \nabla_a\rho = 1 + O(\rho^2),\quad \nabla^a\rho \nabla_a L = O(\rho),\quad \nabla^a L\nabla_a L = O(1),
\end{equation}
where the big-oh relations hold in the limit $\rho \rightarrow 0$. 
The first condition ensures the standard $2\pi$-periodicity of the azimuthal coordinate near the axis, while the remaining conditions ensure the absence of any curvature singularities at the axis. 
\begin{proposition} The regular axis conditions reduce to the following set of data:
\begin{eqnarray}
\label{initial data} S(1)=0,\quad S'(1)=-1,\quad\phi'(1)=-\frac{1}{4},\quad \gamma'(1)=0, \quad l'(1)=0.
\end{eqnarray}
\end{proposition}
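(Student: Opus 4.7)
The plan is to obtain each of the five initial values from (i) the three geometric conditions \refb{Reg Axis}, (ii) the residual coordinate freedom left in \refb{Coord Freedom 2} after the axis has been placed at $\eta=1$, and (iii) two of the ODEs, namely \refb{EFE eta B} and \refb{Sl''}, evaluated at $\eta=1$. First, since $\rho = |u|^{1/2}e^\phi S$ and $|u|^{1/2}e^\phi$ is finite and non-zero on the axis, $\rho=0$ on the axis forces $S(1)=0$ immediately.

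Next I would compute the chain-rule expressions for $\rho_u,\rho_v,L_u,L_v$ using $\eta=v/u$ and the inverse metric component $g^{uv}=-|u|e^{-2\gamma-2\phi}$ read off from \refb{LE eta}. After the $u$-dependence cancels, the two scalars entering \refb{Reg Axis} reduce to
\[
\nabla^a\rho\nabla_a\rho = -2e^{-2\gamma}\bigl[\tfrac{S}{2}-\eta(\phi'S+S')\bigr](\phi'S+S'),
\]
\[
\nabla^a\rho\nabla_a L = -e^{-2\gamma-2\phi}\bigl[\tfrac{S'}{2}+2\eta\phi'(\phi'S+S')\bigr].
\]
Setting $\eta=1$ with $S(1)=0$ reduces the first condition to $2e^{-2\gamma(1)}(S'(1))^2=1$ and the second to $S'(1)[\tfrac{1}{2}+2\phi'(1)]=0$. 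Since $S'(1)\neq 0$ is needed for $\rho$ to vanish simply on the axis, the latter gives $\phi'(1)=-\tfrac{1}{4}$. The former has two unknowns; I would close it using the residual freedom in \refb{Coord Freedom 2}, namely a common rescaling of $u,v$ composed with a rescaling of $z$, which acts on the initial data by rescaling $S'(1)$ and shifting $\gamma(1)$, in order to normalise $S'(1)=-1$ (the sign being forced by $S>0$ for $\eta$ slightly less than $1$), equivalently $\gamma(1)=\tfrac{1}{2}\log 2$.

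To extract $\gamma'(1)$ I would then Taylor-expand the first expression above in $\epsilon:=1-\eta$. Since $\rho$ vanishes simply at $\eta=1$ one has $\rho=O(\epsilon)$, so $\nabla^a\rho\nabla_a\rho=1+O(\rho^2)$ forces the linear-in-$\epsilon$ coefficient to vanish; a short calculation using the data already fixed yields the single relation $\gamma'(1)+S''(1)=0$. But \refb{EFE eta B} at $\eta=1$ reads $S''(1) = -\bar V_0 S(1)e^{2\gamma(1)+2\phi(1)-2F(1)/k}=0$, whence $\gamma'(1)=0$. Finally, evaluating \refb{Sl''} at $\eta=1$: every term except $\eta S' l'$ carries an explicit factor of $S$, so only $S'(1)l'(1)=0$ survives, and $S'(1)=-1$ forces $l'(1)=0$.

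The main obstacle is not mathematical depth but careful bookkeeping: tracking signs in the chain-rule expressions for $\rho_u,\rho_v,L_u,L_v$ (noting that $u<0$ in region \textbf{I}, so $|u|=-u$), and being explicit about which part of \refb{Coord Freedom 2} has already been used to place the axis at $\eta=1$ versus what remains for the normalisation $S'(1)=-1$. Once the two invariants $\nabla^a\rho\nabla_a\rho$ and $\nabla^a\rho\nabla_a L$ are written in self-similar form the rest is mechanical, combining evaluation at the axis with the two ODEs \refb{EFE eta B} and \refb{Sl''}.
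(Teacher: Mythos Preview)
Your proposal is correct and reaches the same five conditions, but the route differs from the paper's in a couple of instructive places.

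For $\gamma'(1)=0$: you extract it from the $O(\rho^2)$ refinement of the first regular-axis condition, obtaining $\gamma'(1)+S''(1)=0$, and then kill $S''(1)$ via \refb{EFE eta B}. The paper instead, having established $S''(1)=0$ from \refb{S'' l}, simply evaluates the constraint \refb{EFE eta C} at $\eta=1$: with $S(1)=S''(1)=0$ this collapses to $2S'(1)\gamma'(1)=0$. Your argument exploits more of the geometric content of \refb{Reg Axis}; the paper's is a one-line read-off from the Einstein equation and avoids the Taylor bookkeeping.

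For $l'(1)=0$: you evaluate the wave equation \refb{Sl''} directly at $\eta=1$, observing that every term except $\eta S'l'$ carries a factor of $S$. The paper instead substitutes the already-known values $\phi'(1)=-1/4$, $\gamma'(1)=0$ into the first integral \refb{GPF' eta}, obtaining $F'(1)=k/4$, which by \refb{l, lambda} is $l'(1)=0$. Both are valid; yours is arguably cleaner since it uses only one equation.

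For the normalisation $S'(1)=-1$: the paper notes that the full system (26) is linear and homogeneous in $S$, so one may simply rescale $S\mapsto S/(-S'(1))$. Your coordinate-freedom argument via \refb{Coord Freedom 2} is equivalent (a common rescaling $u\to\lambda u$, $v\to\lambda v$ induces $S\to S/|\lambda|$), though you should note that this also shifts $\phi(1)$ and $\gamma(1)$; since only derivatives of $\phi,\gamma$ appear in the data this is harmless, but it is worth saying explicitly. One minor point: your justification ``$S'(1)\neq 0$ is needed for $\rho$ to vanish simply'' is slightly heuristic --- the cleaner argument is that $2e^{-2\gamma(1)}(S'(1))^2=1$ already forces both $S'(1)\neq 0$ and $e^{-\gamma(1)}$ finite, which then feeds into the second condition.
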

\begin{proof}
Note that $u<0$ to the past of the origin (0,0) and, therefore, $u<0$ on the axis. 
The equations \refb{Reg Axis} then give
\begin{subequations}\begin{eqnarray}
\label{lim 1}\lim_{\eta\rightarrow1}2e^{-2\gamma}(S'+S\phi')^2 = 1,\\
\label{lim 2}\lim_{\eta\rightarrow1}e^{-2\gamma}\left(S'+S\phi'\right)\left(\frac{1}{2}+2\phi'\right)=0,\\
\label{lim 3}\lim_{\eta\rightarrow1}2e^{-2\gamma}\left(\frac{1}{2}+\phi'\right)\phi'=L_0,
\end{eqnarray}
for some $L_0\in\mathbb{R}$. Equation \refb{lim 1} gives 
\beq
\label{lim 4}\lim_{\eta\rightarrow1}e^{-\gamma}(S'+S\phi') = \pm\frac{1}{\sqrt{2}},\eeq
which may be used to simplify \refb{lim 2} to 
\beq
\label{lim 5} \lim_{\eta\rightarrow1}\frac{e^{-\gamma}}{\sqrt{2}}\left(\frac{1}{2}+2\phi'\right)=0.\eeq
So we either have $\lim_{\eta\rightarrow1}\phi'=-1/4$ or $\lim_{\eta\rightarrow1}e^{-\gamma}=0$. In the latter case, we must also have $\lim_{\eta\rightarrow1}e^{-\gamma}\phi'=0$.
Now \refb{S' phi} may be used to replace $S'+S\phi'$ with $c_2-S\phi'$ in \refb{lim 4}, and so 
\beq
\label{lim 6}\lim_{\eta\rightarrow1}e^{-\gamma}(c_2-S\phi') = \pm\frac{1}{\sqrt{2}}.\eeq\end{subequations}
This clearly contradicts $\lim_{\eta\rightarrow1}e^{-\gamma}=\lim_{\eta\rightarrow1}e^{-\gamma}\phi'=0$ since $\lim_{\eta\rightarrow1}S=0$, so we must have $\phi'(1)=-1/4, S'(1)=c_2$ and $e^{-\gamma(1)}=\pm\sqrt{2}c_2\neq0$. \\
The circumferential radius $\rho$, and therefore $S$, must increase away from the axis. Recall that $\eta\in[0,1]$ in region \textbf{I}, so that $\eta$ is decreasing away from the axis at $\eta=1$. We must then have $S'(1)<0$. Note that the field equations (26) are invariant under the transformation $\bar{S}\rightarrow S/(-S'(1))$, so we may set $S'(1)=-1$.\\
It follows from finiteness of $\gamma(1),\phi(1)$ and equations \refb{GPF eta},\refb{l, lambda} that $F(1)$ and $l(1)$ must also be finite. Equation \refb{S'' l} then gives $S''(1)=0$ and, using this fact,  \refb{EFE eta C} gives $\gamma'(1)=0$. Inserting $\phi'(1)=-1/4$ and $\gamma'(1)=0$ into \refb{GPF' eta}, we arrive at $F'(1)=k/4$, which is equivalent to $l'(1)=0$. 
\end{proof}
\begin{proposition} In the case $\psi^c\psi_c=0$, solutions to the Einstein equation with line element and energy-momentum tensor given by \refb{LE eta} and \refb{EMT}, respectively, admit a regular axis if and only if $k=0$.\end{proposition}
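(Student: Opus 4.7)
The plan is to compute $\psi^c\psi_c$ explicitly from the self-similar ansatz of Corollary 2.1.1 and then substitute the axis data of Proposition 2.3.

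First, using $\psi = F(\eta) + (k/2)\log|u|$ together with $\eta_u = -\eta/u$ and $\eta_v = 1/u$, one obtains $\psi_u = u^{-1}\bigl(k/2 - \eta F'(\eta)\bigr)$ and $\psi_v = u^{-1}F'(\eta)$. Since $\bar{\gamma} = \gamma$ and $\bar{\phi} = \phi - \tfrac{1}{2}\log|u|$, the relevant component of the inverse metric is $g^{uv} = -|u|\,e^{-2\gamma(\eta) - 2\phi(\eta)}$, and a short calculation gives
\beq
\psi^c\psi_c = 2g^{uv}\psi_u\psi_v = -\frac{2}{|u|}e^{-2\gamma - 2\phi}\,F'(\eta)\left(\frac{k}{2} - \eta F'(\eta)\right).
\eeq
The hypothesis $\psi^c\psi_c = 0$ is thus equivalent to the pointwise algebraic condition $F'(\eta)\bigl(k/2 - \eta F'(\eta)\bigr) = 0$; evaluated at $\eta = 1$ this yields $F'(1) \in \{0,\,k/2\}$.

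Next, I would invoke Proposition 2.3, which (under the regular axis conditions alone) delivers the initial datum $F'(1) = k/4$, equivalent to $l'(1) = 0$. Substituting this into each of the two alternatives above forces either $k/4 = 0$ or $k/4 = k/2$; both give $k = 0$, which proves the forward implication.

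For the converse, suppose $k = 0$. Then $\psi = F(\eta)$ and the null condition reduces (for $\eta \in (0,1]$) to $\eta\bigl(F'(\eta)\bigr)^2 = 0$, so $F$ is constant and $\psi$ is constant throughout region \textbf{I}. The stress-energy tensor \refb{EMT} then vanishes identically, and the remaining Einstein equations collapse to the vacuum/minimally-coupled system treated in Section 2.2, for which regular-axis solutions exist (via the linear wave equations, with flat spacetime as the simplest instance). Hence $k = 0$ is sufficient.

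The only real book-keeping issue is the careful treatment of signs: $u < 0$ on the axis, the relation $\bar{\phi} = \phi - \tfrac{1}{2}\log|u|$ contributes the factor $|u|$ in $e^{-2\bar{\gamma} - 2\bar{\phi}}$, and one must track $\eta_u = -\eta/u$. Once the closed-form expression for $\psi^c\psi_c$ is in hand, the conclusion is immediate from the axis datum of Proposition 2.3, so no serious obstacle arises.
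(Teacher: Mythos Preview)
Your proof is correct and follows essentially the same route as the paper: compute $\psi_u,\psi_v$ from the self-similar ansatz, reduce $\psi^c\psi_c=0$ to the algebraic condition $F'(\eta)\bigl(k/2-\eta F'(\eta)\bigr)=0$, and then confront the resulting values of $F'(1)$ with the regular-axis datum $F'(1)=k/4$. The paper phrases the dichotomy as ``$\psi_u=0$ or $\psi_v=0$'' globally (yielding $F'=k/2\eta$ or $F'=0$), whereas you evaluate the product directly at $\eta=1$; your version is arguably cleaner since it sidesteps any worry about which factor vanishes where, and you also spell out the converse more explicitly than the paper does.
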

\begin{proof} First note that $\psi^c\psi_c=2g^{01}\psi_u\psi_v=0$ leads to either $\psi_u=0$ or $\psi_v=0$. Equation \refb{SS SF} then gives
\begin{subequations}\begin{eqnarray}
\psi_u&=-\frac{vF'}{u^2}+\frac{k}{2u}=0,\qquad \psi_v = \frac{F'}{u}=0,\\
&\Rightarrow F'=\frac{k}{2\eta},\hspace{78pt}\Rightarrow F'=0.
\end{eqnarray}\end{subequations}
In both cases, $F'(1)=k/4$ holds if and only if $k=0$. \end{proof}
Recall that $k=0$ gives $V=0$, which is the minimally-coupled case dealt with in section 2.2.  
\subsection{Singular nature of the scaling origin}
As an immediate consequence of the assumption of self-similarity, in the non-minimally coupled case, there exists a spacetime singularity where the homothetic Killing vector is identically zero, i.e., at $(u,v)=(0,0)$. 
\begin{proposition} Let $\mathcal{T}$ denote the scalar invariant $T^{ab}T_{ab}$. Then 
\beq\lim_{u\rightarrow0}\mathcal{T}\,\bigg|_{\eta=1}=\infty.\eeq \end{proposition}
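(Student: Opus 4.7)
The plan is to express $\mathcal{T}=T^{ab}T_{ab}$ in terms of the two scalar invariants $X:=\nabla^c\psi\nabla_c\psi$ and the potential $V$, evaluate each at $\eta=1$ using Corollary 2.1 and the regular-axis data of Proposition 2.2, and show that both diverge like $1/|u|$ with coefficients that cannot mutually cancel.

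First I would rewrite $T_{ab}$ from \refb{EMT} in the trace-decomposed form $T_{ab}=\psi_a\psi_b+Y g_{ab}$ with $Y=-X/2-V$, raise both indices to obtain $T^{ab}=\psi^a\psi^b+Y g^{ab}$, and carry out a one-line contraction using $g^{ab}g_{ab}=4$ to reach the identity
\beq
\mathcal{T}=X^2+2XV+4V^2.
\eeq
This reduces the geometric problem to tracking the asymptotic behaviour of $X$ and $V$ alone.

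Next, I would use the self-similar form $\psi=F(\eta)+(k/2)\log|u|$ from Corollary 2.1 to compute $\psi_u=-(v/u^2)F'(\eta)+k/(2u)$ and $\psi_v=F'(\eta)/u$, specialise to $\eta=1$ (so $v=u$), and insert $F'(1)=k/4$ from Proposition 2.2; the two pieces of $\psi_u$ collapse to $k/(4u)$, giving $\psi_u\psi_v|_{\eta=1}=k^2/(16u^2)$. Reading $g^{uv}=-|u|e^{-2\gamma-2\phi}$ off the metric \refb{LE eta} then yields
\beq
X\big|_{\eta=1}=-\frac{k^2}{8|u|}e^{-2(\gamma(1)+\phi(1))},
\eeq
while $V$ at the axis follows directly from Corollary 2.1:
\beq
V\big|_{\eta=1}=\frac{\bar V_0}{|u|}e^{-2F(1)/k}.
\eeq
Finiteness of $\gamma(1),\phi(1),F(1)$ was established in the proof of Proposition 2.2, so both quantities are explicit constants divided by $|u|$.

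Finally, substituting into the identity for $\mathcal{T}$ gives $\mathcal{T}|_{\eta=1}=u^{-2}\bigl(P^2-2PQ+4Q^2\bigr)$ with $P=(k^2/8)e^{-2(\gamma(1)+\phi(1))}$ and $Q=\bar V_0 e^{-2F(1)/k}$. In the non-minimally coupled regime we have $k\neq 0$ and $\bar V_0\neq 0$ (Proposition 2.1), so $P>0$ and $Q\neq 0$, and the numerator rewrites as $(P-Q)^2+3Q^2>0$. Hence $\mathcal{T}|_{\eta=1}\sim C u^{-2}$ with $C>0$, and the limit is $+\infty$. The only point where the argument could a priori fail is through a delicate cancellation among the $X^2$, $2XV$ and $4V^2$ pieces; the positive-definiteness of that quadratic form is essentially the only non-routine step, and all other manipulations are bookkeeping.
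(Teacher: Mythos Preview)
Your argument is correct. The trace-decomposition identity $\mathcal{T}=X^{2}+2XV+4V^{2}$ is right, your evaluations of $X$ and $V$ at $\eta=1$ are accurate, and the positivity of the coefficient $(P-Q)^{2}+3Q^{2}$ follows already from $P>0$ (you do not even need $Q\neq 0$).

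The paper proceeds differently. Rather than computing $\mathcal{T}$ exactly, it exploits the coordinate structure to write $\mathcal{T}=2(g^{uv})^{2}(T_{uu}T_{vv}+T_{uv}^{2})+(g^{\theta\theta}T_{\theta\theta})^{2}+(g^{zz}T_{zz})^{2}$ and then drops all terms except $2(g^{uv})^{2}T_{uu}T_{vv}$. Since $T_{uu}=\psi_{u}^{2}$ and $T_{vv}=\psi_{v}^{2}$ in these coordinates, the potential never enters: one only needs $\psi_{u}\psi_{v}|_{\eta=1}=k^{2}/(16u^{2})$ and finiteness of $\gamma(1),\phi(1)$. This is cruder but shorter, with no cancellation worry at all. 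Your route, by contrast, yields the exact leading coefficient of $\mathcal{T}$ on the axis and makes transparent that the divergence persists even in limits where $\bar V_{0}\to 0$, at the modest cost of the quadratic-form positivity check.
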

\begin{proof} 
It is straight forward to show that  
\bea
\mathcal{T}&=2(g^{uv})^2\left(T_{uu}T_{vv}+T_{uv}^2\right)+\left(g^{\theta\theta}T_{\theta\theta}\right)^2+\left(g^{zz}T_{zz}\right)^2\\
\nonumber&\ge 2(g^{uv})^2T_{uu}T_{vv}. \eea
Now,
\beq(g^{uv})^2T_{uu}T_{vv}=\frac{e^{-4\gamma-4\phi}}{u^2}\psi_u^2\psi_v^2=e^{-4\gamma-4\phi}\left(\eta F'-\frac{k}{2}\right)^2\frac{F'^2}{u^2}.\eeq
Using $F'(1)=k/4$ we have
\beq\mathcal{T}(1)\ge 2e^{-4\gamma(1)-4\phi(1)}\left(\frac{k^2}{16u}\right)^2.\eeq
Since $\gamma(1),\phi(1)$ are finite and $k\neq0$, taking the limit $u\rightarrow0$ completes the proof. \end{proof}
\subsection{The initial value problem}
Here we derive the form of the Einstein equations with which we will work for the remainder of the paper. The following choice of dependent and independent variables gives an autonomous system. 
\begin{proposition} [Autonomous field equations] Let 
\beq 
\tau=-\log\eta \qquad R=e^{\tau/2}S.
\eeq
Then the interval $\eta\in[1,0)$, i.e. region \textbf{I} of the spacetime, corresponds to $\tau\in[0,\infty)$ and the field equations are equivalent to 
\begin{subequations}\begin{eqnarray}
\label{EFE tau A}2\gamma+2\phi=\frac{k^2l}{2} + \frac{\tau}{2}+c_1,\\
\label{EFE tau B}\ddot{R}  = \left(\frac{1}{4} - V_0 e^{\lambda l}\right)R,\\
\label{EFE tau C}\dot{R}+\left(2\dot{\phi}-\frac{1}{2}\right)R= 1,\\
\label{EFE tau D}R\ddot{l}+ \dot{R}\dot{l}=\left(\frac{1}{4} - \frac{2}{k^2}V_0 e^{\lambda l}\right)R,\\
\label{EFE tau E}\frac{\dot{R}^2-1}{R^2}+\frac{k^2\dot{R}\dot{l}}{R}+2V_0e^{\lambda l} -\frac{2+k^2}{8}-\frac{k^2\dot{l}^2}{2}=0,\\
\label{EFE tau F}R(0) = 0,\quad \dot{R}(0)=1, \quad l(0) = l_0,\quad\dot{l}(0) = 0,
\end{eqnarray}\end{subequations}
where the overdot denotes a derivative with respect to $\tau$.\end{proposition}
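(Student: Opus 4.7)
The plan is a direct change-of-variables calculation: I write everything in terms of $\tau, R, l$ using $\eta = e^{-\tau}$, $S = e^{-\tau/2}R$, and the chain rule $d\eta/d\tau = -\eta$, which gives $\dot{X} = -\eta X'$ for any $X = X(\eta)$, and consequently $\eta^2 X'' = \ddot{X} + \dot{X}$. Applied to $R$ this yields
\beq
\dot R = e^{\tau/2}\left(\frac{S}{2} - \eta S'\right), \qquad \ddot R = e^{\tau/2}\left(\frac{S}{4} + \eta^2 S''\right),
\eeq
together with the useful auxiliary identities $\eta S' = e^{-\tau/2}(R/2 - \dot R)$, $\eta l' = -\dot l$ and $\eta^{1/2}/S = 1/R$. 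The monotonicity of $\tau = -\log\eta$ on $(0,1]$ gives the stated correspondence $\eta\in[1,0)\leftrightarrow\tau\in[0,\infty)$.

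Equation \refb{EFE tau A} is obtained by substituting $F = kl/2 + (k/4)\log|\eta|$ (read off from the definition of $l$ in \refb{l, lambda}) into \refb{GPF eta}; the $\log|\eta|$ coefficients collapse to $-1/2$, giving the stated expression after replacing $\log|\eta|=-\tau$. Equation \refb{EFE tau B} follows immediately from the expression for $\ddot R$ combined with \refb{S'' l}, which delivers $\eta^2 S'' = -V_0 e^{\lambda l}S$. Equation \refb{EFE tau C} is the transform of \refb{S' phi}, where the initial data $S(1)=0, S'(1)=-1$ fix the integration constant to $c_2=-1$; then $-\eta|\eta|^{-1/2} = -e^{-\tau/2}$ generates the inhomogeneous right-hand side of value $1$. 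For \refb{EFE tau D} I would multiply \refb{Sl''} through by $\eta$ and substitute $\eta^2 l'' = \ddot l + \dot l$, $\eta l' = -\dot l$, together with the expression for $\eta S'$ in terms of $R$ and $\dot R$; the terms combine to the stated second-order ODE.

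The initial data \refb{EFE tau F} follow from Proposition~2.2: at $\tau=0$, $R(0) = S(1) = 0$ and $\dot R(0) = \frac{1}{2}S(1) - S'(1) = 1$, while $\dot l(0) = -l'(1) = 0$, and $l(0)$ inherits the finite but otherwise unconstrained value $l_0 := 2F(1)/k$ corresponding to the free axis value of the scalar field.

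The principal computational obstacle is the constraint \refb{EFE tau E}. My strategy is to use \refb{S'' l} to eliminate $\eta^2 S''$ from \refb{EFE eta A} and then use \refb{GPF' eta} to eliminate $\gamma'$, converting the result into a purely algebraic relation among $S$, $\eta S'$, $\eta\phi'$ and $\eta F'$. The pairs $\eta S', \eta F'$ translate cleanly into $\dot R, \dot l$ via the identities above, while $\eta\phi'$ is handled through \refb{S' phi} together with the identity $\eta^{1/2}/S = 1/R$, which converts $-\eta^{1/2}/(2S)$ into $-1/(2R)$. Expanding the squares $(\eta\phi' + 1/2)^2$ and $(\eta F' - k/2)^2$ and collecting terms by powers of $R^{-1}$ and $\dot R/R$ should produce exactly \refb{EFE tau E}. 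The calculation is tedious but wholly algebraic once the dictionary between the two variable sets has been written down.
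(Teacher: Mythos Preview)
Your proposal is correct and follows essentially the same route as the paper. The only cosmetic difference is in the derivation of the constraint \refb{EFE tau E}: the paper starts from \refb{EFE eta C}, transforms to $\tau$, and then uses the already-derived \refb{EFE tau C} together with the $\tau$-derivative of \refb{EFE tau A} to eliminate $\dot\gamma$ and $\dot\phi$, whereas you start from \refb{EFE eta A} and eliminate $\gamma'$ and $\phi'$ via \refb{GPF' eta} and \refb{S' phi} before transforming; since \refb{GPF' eta} is precisely the relation linking \refb{EFE eta A} and \refb{EFE eta C}, the two calculations are equivalent.
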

\begin{proof}
For a general function $f$ we have $\eta f'=-\dot{f} $ and $\eta^2f''=\ddot{f}+\dot{f}$. Then for $\eta>0$, multiplying equations \refb{S'' l},\refb{S' phi},\refb{Sl''} by $\eta$ and changing variables gives
\begin{subequations}
\bea
\ddot{S}+\dot{S}= -V_0e^{\lambda l}S,\\
\label{Sphidot}2S\dot{\phi}+\dot{S}=-c_2e^{-\tau/2}=e^{-\tau/2},\\
S\ddot{l}+\dot{S}\dot{l}+\frac{S\dot{l}}{2}=S\left(\frac{1}{4}-\frac{2V_0e^{\lambda l}}{k^2}\right).
\eea\
We also have 
\bea
\dot{S}=e^{-\tau/2}\left(\dot{R}-\frac{R}{2}\right),\\
\ddot{S}=e^{-\tau/2}\left(\ddot{R}-\dot{R}+\frac{R}{4}\right),
\eea\end{subequations}
which, when combined with the above, give \refb{EFE tau B}-\refb{EFE tau D}. Equation \refb{EFE tau A} comes directly from \refb{GPF eta},\refb{l, lambda} and the definition of $\tau$. Multiplying \refb{EFE eta C} by $\eta^2$ and changing variables yields
\beq
2\left(\dot{R}-\frac{R}{2}\right)\dot{\gamma}+V_0e^{\lambda l}R-2R\dot{\phi}^2=\frac{k^2R}{4}\left(\dot{l}-\frac{1}{2}\right)^2,
\eeq
where we have used \refb{l, lambda} and \refb{S'' l} to replace $\eta F'$ and $\eta^2S''$, respectively. Using \refb{EFE tau C} and the derivative of \refb{EFE tau A} to replace $\dot{\gamma}$ and $\dot{\phi}$ with expressions in $R$ and $l$ then produces
\bea
\nonumber\left(\dot{R}-\frac{R}{2}\right)\left(\frac{k^2\dot{l}}{2}-\frac{1}{R}+\frac{\dot{R}}{R}\right)+V_0e^{\lambda l}R\\
\hspace{120pt}-\frac{R}{2}\left(\frac{1}{R}-\frac{\dot{R}}{R}+\frac{1}{2}\right)^2=\frac{k^2R}{4}\left(\dot{l}-\frac{1}{2}\right)^2.
\eea
Multiplying by 2/R and simplifying, we arrive at \refb{EFE tau E}. Finally, the axis is at $\tau=0$, so $R(0)=0,l(0)=l_0$.
Furthermore, $\dot{l}(0)=-l'(1)=0$ and $\dot{S}(0)=-S'(1)=\dot{R}(0)-R(0)/2$ which gives $\dot{R}(0)=1$. \end{proof}
Equations (50) are equivalent to the Einstein field equations with line element \refb{LE eta}, energy-momentum tensor \refb{EMT} and a regular axis. 
Henceforth, we study equations \refb{EFE tau B},\refb{EFE tau D} and \refb{EFE tau E} to determine solutions for $R$ and $l$. $\phi$ is then obtained by integrating \refb{EFE tau C} and once this is found, $\gamma$ is given by \refb{EFE tau A}. Note that $\tau\in[0,\infty)$ in region \textbf{I} and $\tau\rightarrow\infty$ at $\mathcal{N}_-$. 
\subsection{Summary of results} In this section we present two theorems which summarise the main results of the paper, whose proofs will follow in the following three sections. 
\begin{theorem} [Existence and Uniqueness]
Let $k,V_0\in \mathbb{R}$. For each $l_0,\phi_0\in \mathbb{R}$ there exists a unique solution of (50) on an interval $[0,\tau_*]$ corresponding to a spacetime with line element \refb{LE eta} and energy-momentum tensor \refb{EMT}. The spacetime admits a regular axis for $u+v<0$.\end{theorem}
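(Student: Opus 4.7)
The plan is a Banach fixed-point argument applied to an integral reformulation of the evolution equations \refb{EFE tau B} and \refb{EFE tau D}, followed by reconstruction of $\phi$ and $\gamma$ from \refb{EFE tau C} and \refb{EFE tau A} and propagation of the Hamiltonian-type constraint \refb{EFE tau E}. The main obstacle is the coordinate singularity at $\tau=0$: since $R(0)=0$, the leading coefficient of \refb{EFE tau D} vanishes there and Picard--Lindel\"of is inapplicable. The key observation that removes the obstruction is that the left-hand side of \refb{EFE tau D} is an exact derivative, $R\ddot{l}+\dot{R}\dot{l}=(R\dot{l})^{\cdot}$.

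Two integrations of \refb{EFE tau B} and one of \refb{EFE tau D}, together with the initial data \refb{EFE tau F}, give the Volterra-type system
\[
R(\tau)=\tau+\int_0^\tau(\tau-s)\left(\frac{1}{4}-V_0 e^{\lambda l(s)}\right)R(s)\,ds,
\]
\[
\dot l(\tau)=\frac{1}{R(\tau)}\int_0^\tau\left(\frac{1}{4}-\frac{2V_0}{k^2}e^{\lambda l(s)}\right)R(s)\,ds,\qquad l(\tau)=l_0+\int_0^\tau\dot l(s)\,ds.
\]
To control the singular factor $1/R$ I rescale by $P(\tau):=R(\tau)/\tau$, continuously extended by $P(0)=1$, and work in the Banach space $X=C^0([0,\tau_*],\mathbb{R}^2)$ with the supremum norm. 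On the closed ball
\[
B=\{(P,l)\in X:\|P-1\|_\infty\le 1/2,\ \|l-l_0\|_\infty\le 1\}
\]
one has $P\ge 1/2$, so the ratio $R(s)/R(\tau)=sP(s)/(\tau P(\tau))$ is uniformly bounded, and $e^{\lambda l}\le\exp(|\lambda|(|l_0|+1))$. The integral equations then define a map $\mathcal{T}:B\to X$, and routine estimates show that both $\|\mathcal{T}(P,l)-(1,l_0)\|_\infty$ and the Lipschitz constant of $\mathcal{T}$ on $B$ are $O(\tau_*)$. For $\tau_*$ sufficiently small $\mathcal{T}$ is an invariant contraction, Banach's theorem provides a unique fixed point, and a bootstrap through \refb{EFE tau B} and \refb{EFE tau D} gives $R,l\in C^2([0,\tau_*])$.

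With $R$ and $l$ in hand, the linear first-order equation \refb{EFE tau C} integrates uniquely from the free datum $\phi(0)=\phi_0$, and $\gamma$ is determined algebraically by \refb{EFE tau A}. To verify the constraint \refb{EFE tau E} I differentiate its left-hand side with respect to $\tau$ and substitute \refb{EFE tau B} and \refb{EFE tau D}; the result vanishes identically, so \refb{EFE tau E} propagates from $\tau=0$. The boundary value at $\tau=0$ is obtained from the Taylor expansions $R(\tau)=\tau+O(\tau^3)$ and $\dot l(\tau)=O(\tau)$ implied by the integral equations, which give finite limits for the indeterminate forms $(\dot R^2-1)/R^2$ and $k^2\dot R\dot l/R$ and confirm \refb{EFE tau E} at the axis. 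Finally, since the initial conditions \refb{EFE tau F} encode precisely the regular axis conditions of Proposition 2.2 at $\eta=1$, $u<0$, the reconstructed metric admits a regular axis for $u+v<0$.
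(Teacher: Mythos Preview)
Your argument is correct and follows essentially the same route as the paper: desingularise via $P=R/\tau$ (the paper's $x=R/\tau$), recast \refb{EFE tau B}--\refb{EFE tau D} as Volterra integral equations, and apply the contraction mapping principle on a closed ball bounded away from $P=0$, then recover $\phi$ from \refb{EFE tau C} and $\gamma$ from \refb{EFE tau A}. The only notable differences are cosmetic---you carry a two-component state $(P,l)$ where the paper uses the four-component $(x,\dot x,l,\dot l)$---and you add an explicit verification that the constraint \refb{EFE tau E} propagates, which the paper omits.
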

\begin{theorem} [Global structure of solutions in the causal past of the scaling origin $\mathcal{O}$] For each $k,V_0,l_0\in\mathbb{R}$, let $[0,\tau_M)$ be the maximal interval of existence for the unique solution of (50). The global structure of the solution is given by one of the following cases: \\
\\
Case 1. If $k^2>2,V_0<0$, then $\tau_M<\infty$ and
the hypersurface $\tau=\tau_M$ corresponds to radial null infinity, with the Ricci scalar decaying to zero there.\\
\\
Case 2. If $k^2=2$ and $V_0<0$, then $\tau_M=+\infty$ and $\mathcal{N}_-$ corresponds to radial null infinity, with the Ricci scalar decaying to zero there. \\
\\
Case 3. If 
\beas (i)\quad &k^2<2\quad  and \quad V_0<0,\\
(ii)\quad &k^2>2, V_0>0 \quad and \quad V_0e^{\lambda l_0}<k^2/8,\eeas
 then $\tau_M=\infty$,
the radius of the cylinders is non zero and finite on $\mathcal{N}_-$, which is reached in finite affine time along radial null rays. Hence, $\mathcal{N}_-$ exists as part of the spacetime.\\
\\
Case 4. If
\beas
(i)\quad &k^2<2,V_0>0\quad and \quad V_0e^{\lambda l_0}> k^2/8,\\
(ii)\quad &k^2>2 \quad and \quad V_0>0,\\
(iii)\quad &k^2=2\quad  and \quad V_0>1/4,\eeas
then $\tau_M<\infty$ and there is a singularity at $\tau=\tau_M$, with the radius of the cylinders of symmetry equal to zero there and which is reached by outgoing radial null rays in finite affine time.\\
\\
Case 5. If 
\beas (i)\quad &k^2=2\quad and \quad 0<V_0\le1/4,\\
(ii) \quad &k^2<2\quad and \quad V_0e^{\lambda l_0} = k^2/8,\eeas
then $\tau_M=+\infty$ and there is a curvature singularity on $\mathcal{N}_-$, with the radius of the cylinders of symmetry equal to zero there and which is reached by outgoing radial null rays in finite affine time. 
\end{theorem}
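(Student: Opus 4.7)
My plan is to reduce Theorem~2.2 to a case-by-case analysis of the autonomous system \refb{EFE tau B}, \refb{EFE tau D} and \refb{EFE tau E}, with initial data \refb{EFE tau F}. Local existence and uniqueness (Theorem~2.1) hold so long as $R>0$, so the only ways for $\tau_M$ to be finite are $R\to 0^+$, $R\to\infty$, or $l,\dot l$ escaping to infinity. The qualitative behaviour is driven by the source $V_0 e^{\lambda l}$ in \refb{EFE tau B}, whose evolution is controlled by \refb{EFE tau D} and by the sign of $\lambda=k^2/2-1$. The algebraic threshold $V_0 e^{\lambda l}=k^2/8$, at which the source of (the integrated form of) \refb{EFE tau D} vanishes, marks the equilibrium value of $l$; this explains the dichotomies $V_0 e^{\lambda l_0}\lessgtr k^2/8$ in Cases~3--5, with the analogous threshold $V_0=1/4$ playing the same role when $\lambda=0$.

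The main analytic tool I would use is to rewrite \refb{EFE tau D} as $\frac{d}{d\tau}(R\dot l)=(\tfrac{1}{4}-\tfrac{2}{k^2}V_0 e^{\lambda l})R$. Combined with $R(0)\dot l(0)=0$, this gives sign control on $R\dot l$ whenever the bracket has a definite sign and hence, since $R>0$ immediately to the future of the axis, monotonicity of $l$ and of the source $V_0 e^{\lambda l}$. The constraint \refb{EFE tau E} I would treat as a conserved algebraic relation (its preservation under the flow being a consequence of the contracted Bianchi identities), using it to bound $\dot R$ in terms of $R, l, \dot l$. Coupling these with $\ddot R=(1/4-V_0 e^{\lambda l})R$ reduces the long-time analysis of $R$ to comparison with linear ODEs $\ddot R=cR$: a positive limit of $V_0 e^{\lambda l}$ exceeding $1/4$ forces $R$ back to zero in finite $\tau$; a bounded limit below $1/4$ yields bounded or exponentially growing $R$; and divergence $V_0 e^{\lambda l}\to-\infty$ can be parlayed, via the coupled $R$--$l$ feedback, into finite-time escape.

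The five cases would then be handled in turn. In Cases~1 and 2 ($V_0<0$) the divergence identity forces $\dot l>0$; with $\lambda>0$ (Case~1) positive feedback drives $V_0 e^{\lambda l}\to-\infty$ and the coupled system escapes in finite $\tau$, while with $\lambda=0$ (Case~2) the $R$-equation linearises with explicit exponentially growing solutions. Cases~3, 4 and 5 distinguish subcritical, supercritical and borderline data relative to the equilibrium: in Case~3 the initial data lies in the basin of the equilibrium, $V_0 e^{\lambda l}$ stays bounded below $k^2/8$ throughout, and $R$ persists to $\tau=\infty$ with a positive finite limit; in Case~4 the source exceeds the critical value and $R$ is forced back to zero at a finite $\tau_M$; Case~5 sits exactly at the equilibrium (or at the marginal value $V_0=1/4$ when $\lambda=0$). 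For each case I would determine whether $\mathcal{N}_-$ lies at finite or infinite affine distance along outgoing null rays $u=\mathrm{const}$ by computing $\int_0^{\tau_M} e^{\gamma+\phi-\tau/2}\,d\tau$ using \refb{EFE tau A} and \refb{EFE tau C}, and test for curvature singularities by evaluating the Ricci scalar, which via the trace of the Einstein equations reduces to a simple expression in $l$ and $\dot l$.

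The hard part will be the borderline Case~5, where the standard comparison arguments degenerate and fine control is needed on the rate at which $R\to 0$ and on whether $l$ stays close to its equilibrium value. I expect that exact solutions of equilibrium type (with $l$ constant and $V_0 e^{\lambda l_0}=k^2/8$, deferred to section~5 of the paper), together with a careful linearisation about them, will be essential. A secondary difficulty is Case~4, where bounding the rate of collapse $R\to 0$ and ruling out premature blow-up of $l$ requires a coupled rather than decoupled estimate on $R$ and $l$.
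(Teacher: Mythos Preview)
Your overall strategy matches the paper's: a case split on the signs of $\lambda=k^2/2-1$ and $V_0$, sign control on $\dot l$ via the divergence identity $\tfrac{d}{d\tau}(R\dot l)=(\tfrac14-\tfrac{2}{k^2}V_0e^{\lambda l})R$, the threshold $V_0e^{\lambda l}=k^2/8$, comparison with $\ddot R=cR$, the affine parameter computed from the null geodesic equation, and the Ricci scalar read off the trace of $T_{ab}$. The paper, however, does not work directly in $(R,l)$. It passes to the autonomous variables $u_1=\dot R/R$, $u_2=|V_0|e^{\lambda l}$, $u_3=\dot l$, which recasts the problem as a three-dimensional dynamical system with a genuine fixed point at $(1/2,0,1/2)$. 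This buys two things you would have to improvise: (i) in Case~3 the approach $u_i\to(1/2,0,1/2)$ is handled by a standard linearisation theorem (Hartman, Ch.~9) giving explicit exponential rates, from which regularity of the metric on $\mathcal{N}_-$ follows cleanly; (ii) in Case~1 and especially Case~4 the blow-up analysis becomes a sequence of monotonicity and crossing lemmas for ratios such as $u_1/u_3$, $u_2/u_3^2$, rather than coupled estimates on $(R,l)$ directly.

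A few corrections. In Case~3 it is $S=e^{-\tau/2}R$, not $R$, that tends to a finite positive limit; $R$ itself grows like $e^{\tau/2}$ since $u_1\to 1/2$. Your affine-parameter integrand should be $e^{2\gamma+2\phi-\tau}$ (equivalently $e^{(k^2l-\tau)/2}$ via \refb{EFE tau A}), not $e^{\gamma+\phi-\tau/2}$. Most importantly, you have inverted the difficulty ranking: Case~5 is in fact handled by explicit solutions (when $u_2(0)=k^2/8$ one shows $u_2\equiv k^2/8$, $u_3\equiv 0$, so $\ddot R=-\lambda R/4$ exactly), whereas Case~4(ii) ($V_0>0$, $\lambda>0$) is the bulk of the technical work. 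There the source need not start above $k^2/8$; one must first prove that even subcritical data eventually drives $u_2$ past $k^2/8$ and $u_1$ through zero, then that the ensuing collapse terminates in finite $\tau$ with $u_1\to-\infty$ and $u_3\to+\infty$, and finally extract enough control on $u_1/u_3$ and $u_2/u_3^2$ (using the constraint \refb{EFE tau E}) to show the Ricci scalar diverges and the affine parameter stays finite. Your ``comparison with $\ddot R=cR$'' will get you started here but will not close the argument on its own.
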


%

\begin{figure}
\includegraphics[width=4.5in]{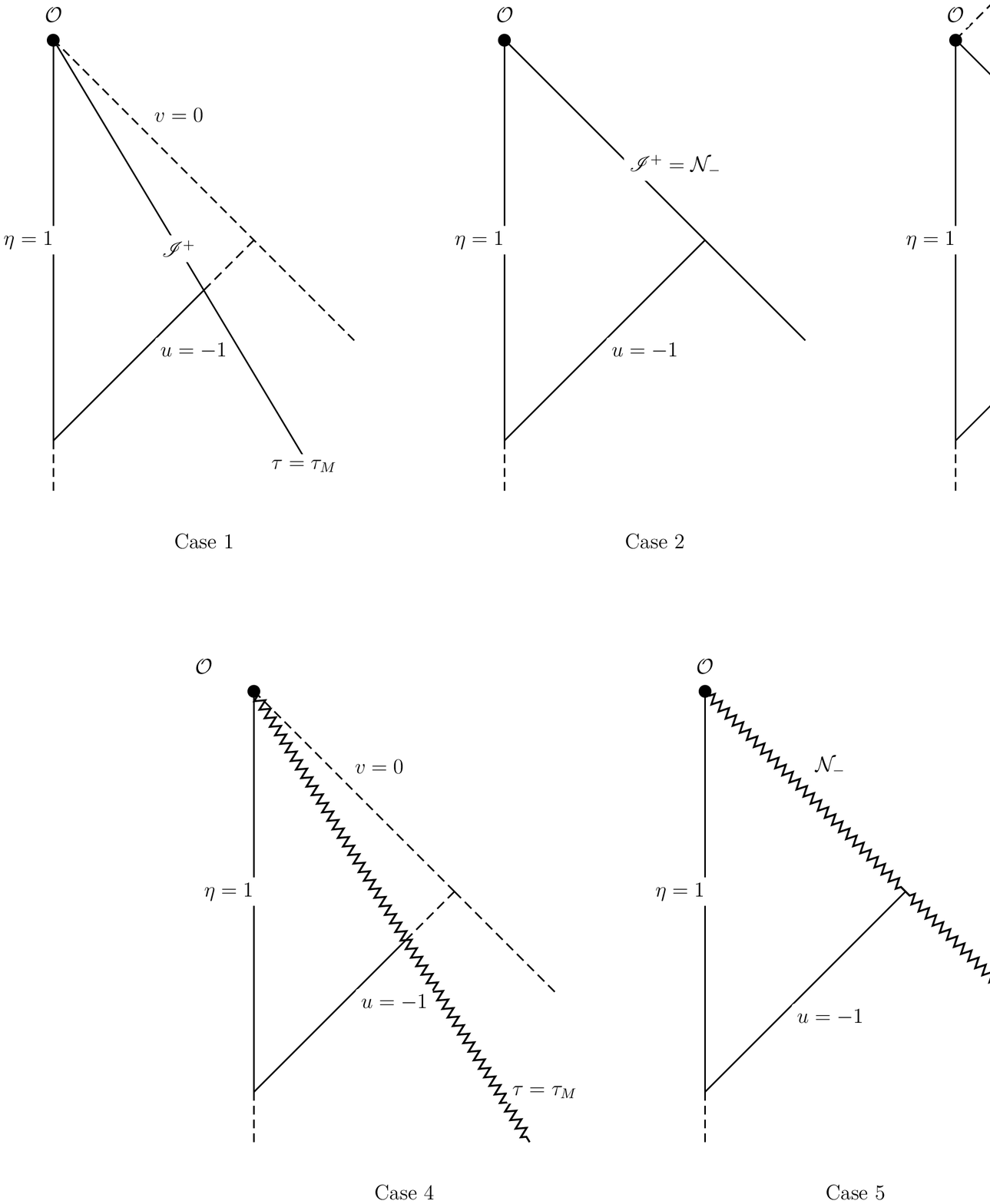}
\caption{The structure of the spacetime for each subcase}
\end{figure}


\section{Existence and uniqueness of solutions with a regular axis}
In this section we present a series of results which culminate in the proof of Theorem 2.1. 
The axis is a singular point of \refb{EFE tau D} and so existence and uniqueness of a solution is not guaranteed. However,
using a fixed point argument, we can prove that for a given inital data set, a unique solution to (50) exists on an interval $[0,\tau_*]$, for some $\tau_*>0$.\\
Note that on the axis, $R$ is zero to first order only, that is, $\dot{R}(0)=\frac{dR}{d\tau}|_{\tau=0}\neq0$. It is convenient to work with a new variable $x=R/\tau$, which is non-zero on the axis, and to look for solutions which are $C^2$. We use a first order reduction and write the system as a set of integral equations according to the following results:
\begin{lemma} Let $x=R/\tau$. Then initial data for $x,\dot{x}$ corresponding to a regular axis are given by
\begin{equation}
x(0) =1,\quad \dot{x}(0) = 0.
\end{equation}
\end{lemma}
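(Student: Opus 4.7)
The plan is to derive the initial values of $x$ and $\dot x$ directly from the initial data for $R$ given in \refb{EFE tau F}, using the second-order ODE \refb{EFE tau B} to control the second derivative of $R$ at $\tau=0$. Since $R(0)=0$ and $\dot R(0)=1$, Taylor's theorem (applied to $R$, which will be at least $C^2$ on a small interval from the hypotheses that will be justified in the fixed-point argument to follow) gives
\begin{equation*}
R(\tau)=\tau+\tfrac12\ddot R(0)\,\tau^2+o(\tau^2),
\end{equation*}
so that $x(\tau)=R/\tau=1+\tfrac12\ddot R(0)\,\tau+o(\tau)$. Reading off the constant and linear coefficients yields $x(0)=1$ and $\dot x(0)=\tfrac12\ddot R(0)$.

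It therefore remains to compute $\ddot R(0)$. The key observation is that although \refb{EFE tau D} is singular at $\tau=0$, equation \refb{EFE tau B} is not: it expresses $\ddot R$ algebraically as $\bigl(\tfrac14-V_0e^{\lambda l}\bigr)R$. Since $l(0)=l_0$ is finite by \refb{EFE tau F}, the factor $\tfrac14-V_0 e^{\lambda l}$ is bounded near $\tau=0$, while $R(0)=0$. Hence $\ddot R(0)=0$, which gives $\dot x(0)=0$ as claimed.

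There is no real obstacle here; the only subtle point is that one is tacitly assuming enough regularity of $R$ at $\tau=0$ to justify the Taylor expansion and the pointwise evaluation of \refb{EFE tau B} at the axis. In the context of the section this is legitimate, since the lemma is setting up the reformulation that the subsequent fixed-point argument will prove produces a $C^2$ solution; the lemma simply identifies the correct initial conditions for the reformulated system.
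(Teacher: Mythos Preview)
Your proof is correct and follows essentially the same route as the paper's: both use Taylor's theorem for $R$ (assuming $R\in C^2$, justified by the fixed-point argument that follows), divide by $\tau$ to obtain the expansion of $x$, and then invoke \refb{EFE tau B} together with $R(0)=0$ to conclude $\ddot R(0)=0$ and hence $\dot x(0)=0$. The only cosmetic difference is that the paper uses the Lagrange remainder form and computes $\dot x(0)$ as a limit of difference quotients, whereas you use the Peano ($o(\tau^2)$) form and read off the coefficients directly.
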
 \begin{proof} Using Taylor's theorem about $\tau=0$ with the assumption that $R\in C^2$, we write $R$ and thus $x$ as 
\begin{subequations}\begin{eqnarray}
R(\tau) =\tau + \ddot{R}(\hat{\tau}(\tau))\frac{\tau^2}{2},\\
x(\tau)=1+ \ddot{R}(\hat{\tau}(\tau))\frac{\tau}{2},
\end{eqnarray}\end{subequations}
for some $\hat{\tau} \in [0,\tau]$. Evaluating $\dot{x}(0)$ from first principles, we find
\begin{equation}
\dot{x}(0) = \lim_{\tau\rightarrow 0}\frac{x(\tau)-x(0)}{\tau} = \lim_{\tau\rightarrow 0}\frac{\ddot{R}(\hat{\tau})\tau}{2\tau} = \lim_{\tau\rightarrow 0}\frac{\ddot{R}(\hat{\tau}(\tau))}{2}.
\end{equation}
Now, $\hat{\tau}\in [0,\tau]$ goes to zero in the limit $\tau\rightarrow 0$ and so $\dot{x}(0)  = \ddot{R}(0)/2 = 0$,
from \refb{EFE tau B}. \end{proof}
\begin{lemma} Let $\boldsymbol{x}=(x_1,x_2,x_3,x_4)$ where $x_1=x,x_2=\dot{x},x_3=l,x_4=\dot{l}.$ Then \refb{EFE tau B},\refb{EFE tau D} and \refb{EFE tau E} are equivalent to the following set of integral equations:
\begin{subequations}\begin{eqnarray}
\label{int x1}x_{1} &= 1 + \int_{0}^{\tau}x_{2}(t)dt,\\
\label{int x2}x_{2} &= \int_{0}^{\tau}\frac{t^2}{\tau^2}x_{1}(t)\alpha(x_{3}(t))dt,\\
\label{int x3}x_{3} &= l_0 + \int_{0}^{\tau}x_{4}(t)dt,\\
\label{int x4}x_{4} &= \int_{0}^{\tau}\frac{tx_{1}(t)}{\tau x_{1}(\tau)}\beta(x_{3}(t))dt,
\end{eqnarray}
where
\begin{equation}
\alpha(x_{3}(t)) = \frac{1}{4}- V_0e^{\lambda x_{3}(t)},\quad
\beta(x_{3}(t))= \frac{1}{4}-\frac{2}{k^2}V_0 e^{\lambda x_{3}(t)}.
\end{equation}
\end{subequations}
\end{lemma}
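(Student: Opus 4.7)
The plan is to derive the integral equations \refb{int x1}--\refb{int x4} from the ODEs by recognising two exact-derivative structures, then to establish the converse by differentiation; the Hamiltonian constraint \refb{EFE tau E} will be handled separately as a first integral preserved by the evolution \refb{EFE tau B}, \refb{EFE tau D}.

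For the forward direction, I would write $R = \tau x_1$, so that $\dot R = x_1 + \tau x_2$ and $\ddot R = 2 x_2 + \tau\dot x_2$. Multiplying \refb{EFE tau B} through by $\tau$ then puts it in the exact form $\frac{d}{d\tau}(\tau^2 x_2) = \tau^2 x_1\alpha(x_3)$; integrating from $0$ to $\tau$, using $\tau^2 x_2|_{\tau=0}=0$ (which follows from Lemma~3.1), and dividing by $\tau^2$ produces \refb{int x2}. Equation \refb{EFE tau D} is already in the exact form $\frac{d}{d\tau}(R\dot l) = R\beta(l)$; integrating from $0$, where $R(0)\dot l(0) = 0$, and dividing by $R(\tau) = \tau x_1(\tau)$ gives \refb{int x4}. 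The remaining equations \refb{int x1} and \refb{int x3} are just $\dot x_1 = x_2$ and $\dot x_3 = x_4$ integrated against the axis data $x_1(0) = 1$, $x_3(0) = l_0$ supplied by Lemma~3.1 and \refb{EFE tau F}.

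For the converse, I would differentiate each of the integral equations. From \refb{int x1} and \refb{int x3} we immediately recover $\dot x_1 = x_2$, $\dot x_3 = x_4$ with the correct initial values. Rewriting \refb{int x2} as $\tau^2 x_2 = \int_0^\tau t^2 x_1\alpha(x_3)\,dt$ and differentiating yields $2\tau x_2 + \tau^2\dot x_2 = \tau^2 x_1\alpha(x_3)$, equivalent via $R = \tau x_1$ and $\ddot R = 2 x_2 + \tau\dot x_2$ to $\ddot R = R\alpha(l)$, i.e., \refb{EFE tau B}. Similarly, rewriting \refb{int x4} as $R(\tau)\dot l(\tau) = \int_0^\tau R(t)\beta(l(t))\,dt$ and differentiating returns $(R\dot l)^{\cdot} = R\beta(l)$, i.e., \refb{EFE tau D}. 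The axis values $x_2(0) = x_4(0) = 0$ are recovered from the integral forms by L'H\^{o}pital.

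Finally, to deal with \refb{EFE tau E}, let $\mathcal{C}(\tau)$ denote its left-hand side. Differentiating $\mathcal{C}$ and substituting $\ddot R = \alpha(l) R$ from \refb{EFE tau B} and $\ddot l = \beta(l) - \dot R\dot l/R$ from \refb{EFE tau D} shows, after cancellation, that $\dot{\mathcal C} \equiv 0$. It then remains to verify $\mathcal C(0) = 0$. The a priori singular terms $(\dot R^2 - 1)/R^2$ and $k^2\dot R\dot l/R$ admit finite limits as $\tau\to 0^+$, which I would compute from the Taylor coefficients forced by the ODEs and the axis data, namely $\ddot R(0) = 0$, $R^{(3)}(0) = \alpha(l_0)$ and $\ddot l(0) = \beta(l_0)/2$. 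These give
\[
\mathcal C(0) = \alpha(l_0) + \frac{k^2}{2}\beta(l_0) + 2V_0 e^{\lambda l_0} - \frac{2+k^2}{8} = 0
\]
identically in $k, V_0, l_0$, so the constraint is automatically satisfied on any solution of \refb{EFE tau B}, \refb{EFE tau D} with these data. The main obstacle is precisely this last step: extracting the cancellation of the singular terms in $\mathcal C$ at $\tau = 0$ requires careful use of the regular-axis data in Lemma~3.1. Once this is in hand, the constraint is redundant and the two systems are equivalent.
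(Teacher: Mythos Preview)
Your forward derivation is exactly the paper's: recognise $\tau^2\dot x$ and $R\dot l$ as exact derivatives, integrate from the axis, and read off \refb{int x2}, \refb{int x4}; the remaining two equations are just the definitions integrated against the axis data. So the core argument matches.

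Where you differ is in doing substantially more than the paper. The paper's proof is one-directional and never mentions \refb{EFE tau E}: it simply rewrites \refb{EFE tau B} and \refb{EFE tau D} in integral form and stops. You supply the converse by differentiation and, more importantly, you handle the constraint \refb{EFE tau E} explicitly by showing it is a first integral of the evolution equations ($\dot{\mathcal C}\equiv 0$) and then checking $\mathcal C(0)=0$ from the Taylor coefficients forced by the regular-axis data. Your limit computation is correct: $(\dot R^2-1)/R^2\to\alpha(l_0)$ and $k^2\dot R\dot l/R\to k^2\beta(l_0)/2$, and the resulting identity $\alpha(l_0)+\tfrac{k^2}{2}\beta(l_0)+2V_0e^{\lambda l_0}-\tfrac{2+k^2}{8}=0$ holds for all parameter values. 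This buys you an honest ``equivalence'' in the sense of the lemma statement, at the cost of a Taylor expansion that the paper simply elides. The paper's economy is justified because \refb{EFE tau E} is a Hamiltonian constraint inherited from the full Einstein equations, so its propagation is guaranteed a priori by the Bianchi identity; your direct verification is nonetheless self-contained and does not appeal to that background fact.
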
 \begin{proof} From \refb{EFE tau B} and $R = \tau x$, we have 
\begin{equation}
\tau\ddot{R} = \tau^2\ddot{x} + 2\tau\dot{x} = \tau^2x\left(\frac{1}{4} - V_0e^{\lambda l}\right)
\end{equation}
which can be integrated to give 
\begin{equation*}
\dot{x}=  \frac{1}{\tau^2}\int^\tau_0t^2x(t)\left(\frac{1}{4} - V_0e^{\lambda l(t)}\right)dt.
\end{equation*}
Equation \refb{EFE tau D} may also be written in the integral form
\begin{eqnarray}
\dot{l}& = \frac{1}{R}\int^\tau_0 R\left(\frac{1}{4} - \frac{2}{k^2}V_0 e^{\lambda l(t)}\right)dt,\\
\nonumber&=\int_{0}^{\tau}\frac{tx_{1}(t)}{\tau x_{1}(\tau)}\beta(x_{3}(t))dt.
\end{eqnarray}
Equations \refb{int x1} and \refb{int x3} follow immediately from the definitions. \end{proof}
A solution of (57) corresponds to a fixed point of the mapping \\$T:\boldsymbol{x} \rightarrow T(\boldsymbol{x})= \boldsymbol{y}=(y_1,y_2,y_3,y_4)$ where
\begin{subequations}\begin{eqnarray}
\label{int y1}y_{1} &= 1 + \int_{0}^{\tau}x_{2}(t)dt,\\
\label{int y2}y_{2} &= \int_{0}^{\tau}\frac{t^2}{\tau^2}x_{1}(t)\alpha(x_{3}(t))dt,\\
\label{int y3}y_{3} &= l_0 + \int_{0}^{\tau}x_{4}(t)dt,\\
\label{int y4}y_{4} &= \int_{0}^{\tau}\frac{tx_{1}(t)}{\tau x_{1}(\tau)}\beta(x_{3}(t))dt.
\end{eqnarray}\end{subequations}
We aim to use Banach's fixed point theorem (the contraction mapping principle)\cite{Banach} to show that $T$ has a unique fixed point. 
We begin by defining the space $\chi$ in which $\boldsymbol{x}$ lies, which we require to be a closed subset of a Banach space. Let $E=C^0([0,\tau_*],\mathbb{R}^4)$, with the norm of a vector $\boldsymbol{x}$ given by
\begin{equation}
||\boldsymbol{x}||_E = \sup_{\tau\in[0,\tau_*]}|\boldsymbol{x}(\tau)| = \sup_{\tau\in[0,\tau_*]}\max_{1\leq i\leq4}|x_i(\tau)|.
\end{equation}
$E$ is therefore a Banach space \cite{Banach}. Let
\begin{eqnarray}
\nonumber\chi(\tau_*,b,B)= \{&\boldsymbol{x}\in C^0([0,\tau_*],\mathbb{R}^4):\boldsymbol{x}(0) =\boldsymbol{x_0},\\
\label{chi}&\sup_{\tau\in[0,\tau_*]}||\boldsymbol{x}-\boldsymbol{x_0}||\leq B, \, \inf_{\tau\in[0,\tau_*]}x_{1}(\tau)\geq b > 0\},
\end{eqnarray}
where $\boldsymbol{x}(0) = (1,0,l_0,0)^T$, and $b<1$. Then $\chi$ is a closed subset of $E$, and is therefore also a Banach space. 
We wish to show that it is possible to choose $\tau_*,b$ and $B$ such that $T$ is a contraction mapping on $\chi$, i.e. $T$ maps $\chi$ into itself
and that there is a number $0<\kappa<1$ such that for any vectors $\boldsymbol{x}^{(1)},\boldsymbol{x}^{(2)} \in \chi$,
\begin{equation}
||\boldsymbol{y}^{(1)}-\boldsymbol{y}^{(2)}||\leq\kappa||\boldsymbol{x}^{(1)}-\boldsymbol{x}^{(2)}||.
\end{equation}
The following four results verify that $T\boldsymbol{x}=\boldsymbol{y}\in\chi$.
\begin{lemma}  The image $T\boldsymbol{x} = \boldsymbol{y}$ of $\boldsymbol{x}\in\chi$, has the same initial data as $\boldsymbol{x}$. That is, $\boldsymbol{y}(0)=\boldsymbol{x}_0$.
\end{lemma}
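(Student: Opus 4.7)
The plan is to check the four components of $\boldsymbol{y}(0)$ one by one against $\boldsymbol{x}_0 = (1,0,l_0,0)^T$, handling the trivial components first and reserving effort for the two components whose defining integrals carry explicit $1/\tau$ or $1/\tau^2$ prefactors.

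The components $y_1$ and $y_3$ are immediate: from \refb{int y1} and \refb{int y3} the integrals over $[0,\tau]$ vanish as $\tau\to 0^+$ because the integrands $x_2$ and $x_4$ are continuous (hence bounded) on $[0,\tau_*]$ by the definition of $\chi$. So $y_1(0)=1$ and $y_3(0)=l_0$ follow at once.

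The components $y_2$ and $y_4$ require a short estimate, since taking the limit $\tau\to 0^+$ naively produces $0/0$. The idea is that the singular $\tau^{-1}$ or $\tau^{-2}$ factor is always paired with compensating powers of $t$ inside the integral. Since $\boldsymbol{x}\in\chi$, the functions $x_1,x_3$ are continuous on $[0,\tau_*]$, so $x_1(t)\alpha(x_3(t))$ and $x_1(t)\beta(x_3(t))$ are bounded by some constant $M$ depending on $\chi$ only. For $y_2$ this gives
\begin{equation}
|y_2(\tau)| \;\le\; \frac{M}{\tau^2}\int_0^\tau t^2\,dt \;=\; \frac{M\tau}{3},
\end{equation}
so $y_2(\tau)\to 0$ as $\tau\to 0^+$. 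For $y_4$ the additional ingredient is that $x_1(\tau)\ge b>0$ on $[0,\tau_*]$, so $1/x_1(\tau)$ is uniformly bounded, and
\begin{equation}
|y_4(\tau)| \;\le\; \frac{M}{b\,\tau}\int_0^\tau t\,dt \;=\; \frac{M\tau}{2b},
\end{equation}
giving $y_4(\tau)\to 0$. Hence $y_2(0)=y_4(0)=0$, completing the verification that $\boldsymbol{y}(0)=\boldsymbol{x}_0$.

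The only conceptual obstacle is making sense of $y_2(0)$ and $y_4(0)$ as limits rather than direct evaluations, and this is precisely the reason the lower bound $x_1\ge b>0$ was built into the definition \refb{chi} of $\chi$: without it the factor $1/x_1(\tau)$ in \refb{int y4} would be uncontrolled near $\tau=0$. No further structure of $\alpha,\beta$ beyond continuity of $x_3\mapsto e^{\lambda x_3}$ is needed, and the argument is robust under the choices of $\tau_*,b,B$ that will later be fixed to make $T$ a contraction.
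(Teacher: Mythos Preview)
Your proof is correct and follows essentially the same idea as the paper's. The paper dispatches the lemma in one line by invoking the weighted mean value theorem for integrals, whereas you write out explicit bounds; in both cases the point is that the weights $t^2/\tau^2$ and $t/\tau$ contribute an extra factor of $\tau$ after integration, which drives $y_2$ and $y_4$ to zero, with the lower bound $x_1\ge b$ controlling the $1/x_1(\tau)$ factor in $y_4$.
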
 
\begin{proof} It is straightforward to show that the integral components in (60) equal zero at $\tau=0$ by using the weighted mean value theorem for integrals. 
\end{proof}
\begin{lemma} Let
\begin{equation}
M_1 = \max \left\{B,(B+1)\left(\frac{1}{4}+\delta\right), \frac{(B+1)}{b}\left(\frac{1}{4}+\frac{2\delta}{k^2}\right)\right\}.
\end{equation}
If $\tau_* \leq B/M_1$, then 
\begin{equation}
\label{sup}\sup_{\tau \in [0,\tau_*]}||\boldsymbol{x} - \boldsymbol{x_0}|| \leq  B, \quad \Rightarrow \quad \sup_{\tau \in [0,\tau_*]}||\boldsymbol{y} - \boldsymbol{y}_0|| \leq B,
\end{equation}
\end{lemma}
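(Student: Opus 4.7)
The plan is to bound each of the four components of $\boldsymbol{y}-\boldsymbol{y}_0$ separately on $[0,\tau_*]$, using the defining properties of $\chi$ to control the integrands, and then to choose $\tau_*$ small enough that every bound is at most $B$. By the previous lemma $\boldsymbol{y}_0=\boldsymbol{y}(0)=\boldsymbol{x}_0$, so it suffices to bound $|y_i(\tau)-x_{0,i}|$ uniformly in $\tau\in[0,\tau_*]$ for $i=1,2,3,4$.

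First I would collect the pointwise estimates available from $\boldsymbol{x}\in\chi$. Since $\sup\|\boldsymbol{x}-\boldsymbol{x}_0\|\le B$ and $\boldsymbol{x}_0=(1,0,l_0,0)$, we have $|x_1(t)|\le B+1$, $|x_2(t)|\le B$, $|x_3(t)-l_0|\le B$ and $|x_4(t)|\le B$ for all $t\in[0,\tau_*]$, while the defining condition on $\chi$ also supplies the positive lower bound $x_1(\tau)\ge b$. Taking $\delta$ to be the uniform bound on $|V_0|e^{\lambda x_3}$ over $x_3\in[l_0-B,l_0+B]$, this yields $|\alpha(x_3(t))|\le \frac{1}{4}+\delta$ and $|\beta(x_3(t))|\le \frac{1}{4}+\frac{2\delta}{k^2}$ on the same interval.

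With these bounds in place, direct estimation of the four integrals defining $T\boldsymbol{x}=\boldsymbol{y}$ gives $|y_1(\tau)-1|\le\int_0^\tau|x_2|\,dt\le B\tau_*$ and, identically, $|y_3(\tau)-l_0|\le B\tau_*$. For $y_2$, using $\int_0^\tau(t/\tau)^2\,dt=\tau/3\le\tau_*$, one obtains $|y_2(\tau)|\le(B+1)(\frac{1}{4}+\delta)\tau_*$. For $y_4$, the lower bound $x_1(\tau)\ge b$ allows the factor $1/x_1(\tau)$ to be controlled, and with $\int_0^\tau(t/\tau)\,dt=\tau/2\le\tau_*$ we get $|y_4(\tau)|\le\frac{B+1}{b}(\frac{1}{4}+\frac{2\delta}{k^2})\tau_*$. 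Each of the four bounds takes the form (constant)$\times\tau_*$, and the three distinct constants appearing are precisely the three arguments of the maximum defining $M_1$. The hypothesis $\tau_*\le B/M_1$ therefore forces $|y_i(\tau)-x_{0,i}|\le B$ for every $i$ and every $\tau\in[0,\tau_*]$, which is the desired sup-norm estimate.

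There is no substantive obstacle: the argument reduces to a careful application of the triangle inequality for integrals combined with the a priori bounds built into $\chi$. The only place where the structure of $\chi$ is used in an essential (non-trivial) way is the estimate for $y_4$, where the positive lower bound $x_1(\tau)\ge b$ is needed to control the factor $1/x_1(\tau)$ arising from the singular coefficient $1/R$ in the evolution equation for $l$; this is precisely the reason that such a lower bound was built into the definition of $\chi$ in the first place.
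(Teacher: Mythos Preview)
Your proposal is correct and follows essentially the same approach as the paper: bound each component of $\boldsymbol{y}-\boldsymbol{y}_0$ by (constant)$\times\tau$ using the pointwise estimates $|x_1|\le B+1$, $|x_2|,|x_4|\le B$, $x_1\ge b$, and the uniform bounds on $|\alpha|,|\beta|$, then take the maximum of the constants and impose $\tau_*\le B/M_1$. The only cosmetic difference is that the paper simply uses $t^2/\tau^2\le 1$ and $t/\tau\le 1$ rather than computing $\int_0^\tau(t/\tau)^2\,dt=\tau/3$ and $\int_0^\tau(t/\tau)\,dt=\tau/2$, but since you immediately relax these to $\tau_*$ the end result is identical.
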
 
\begin{proof} We first note the following inequalities, which hold on $[0,\tau_*]$:
\begin{subequations}\begin{eqnarray}b \leq x_1 \leq B + 1,\quad x_2 \leq B,\quad x_3  \leq  B + |l_0|, \quad x_4 \leq B,\\
|\alpha(x_3)|=\left|\frac{1}{4}- V_0e^{\lambda x_3}\right| \leq\frac{1}{4}+ \delta, \\|\beta(x_3)|=\left|\frac{1}{4}-\frac{2}{k^2}V_0e^{\lambda x_3}\right| \leq \frac{1}{4}+\frac{2\delta}{k^2},
\end{eqnarray}\end{subequations}
where $\delta = |V_0|e^{|\lambda|(B+|l_0|)}$. Now, 
\begin{equation}
\sup_{\tau \in [0,\tau_*]}||\boldsymbol{y} - \boldsymbol{y}_0|| = \sup_{\tau \in [0,\tau_*]}\max(A),
\end{equation}
where 
\bea A = \left\{\left|\int_{0}^{\tau}x_{2}(t)dt\right|,\left|\int_{0}^{\tau}\frac{t^2}{\tau^2}x_{1}(t)\alpha(x_{3}(t))dt\right|,\right.\\
\nonumber\hspace{102pt}\left.\left|\int_{0}^{\tau}x_{4}(t)dt\right|,\left|\int_{0}^{\tau}\frac{tx_{1}(t)}{\tau x_{1}(\tau)}\beta(x_{3}(t))dt\right|\right\}.\eea
We derive a bound for each element of $A$ as follows:
\begin{equation}
\label{int x2}\left|\int_{0}^{\tau}x_{2}(t)dt\right| \leq \int_{0}^{\tau}|x_{2}(t)|dt \leq \int_{0}^{\tau}Bdt=B\tau.
\end{equation}
\begin{eqnarray}
\nonumber\left|\int_{0}^{\tau}\frac{t^2}{\tau^2}x_{1}(t)\alpha(x_{3}(t))dt\right| &\leq  \int_{0}^{\tau}|x_{1}(t)\alpha(x_{3}(t))|dt,\\
&\leq\int_{0}^{\tau}(B+1)\left(\frac{1}{4}+\delta\right)dt,\\
\nonumber&=(B+1)\left(\frac{1}{4}+\delta\right)\tau.
\end{eqnarray}
\begin{equation}
\left|\int_{0}^{\tau}x_{4}(t)dt\right| \leq \int_{0}^{\tau}|x_{4}(t)|dt \leq \int_{0}^{\tau}Bdt = B\tau.
\end{equation}
\begin{eqnarray}
\nonumber\left|\int_{0}^{\tau}\frac{tx_{1}(t)}{\tau x_{1}(\tau)}\beta(x_{3}(t))dt\right| &\leq \int_{0}^{\tau}\left|\frac{x_{1}(t)}{x_{1}(\tau)}\beta(x_{3}(t))\right|dt,\\
&\leq \int_{0}^{\tau}\frac{(B+1)}{b}\left(\frac{1}{4}+\frac{2\delta}{k^2}\right) dt\\
\nonumber&=\frac{B+1}{b}\left(\frac{1}{4}+\frac{2\delta}{k^2}\right)\tau.
\end{eqnarray}
We then have $\max(A)\leq M_1\tau$ and so $\sup_{\tau \in [0,\tau_*]}\max(A)\leq M_1\tau_*$. We choose $\tau_* \leq B/M_1$ and \refb{sup} is satisfied. \end{proof}
\noindent\begin{lemma} If $\tau_*\leq (1- b)/B$, then
\begin{equation}
\label{inf}\inf_{\tau\in[0,\tau_*]}x_1 \geq b> 0 \quad \Rightarrow \quad \inf_{\tau\in[0,\tau_*]}y_1 \geq b > 0.\\
\end{equation}
\end{lemma}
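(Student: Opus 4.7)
The plan is to read off a simple lower bound for $y_1$ directly from its integral representation in (60a) and the defining inequalities of the set $\chi$ given in \refb{chi}.

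First, recall that $\boldsymbol{x}_0 = (1,0,l_0,0)^T$, so the condition $\|\boldsymbol{x}-\boldsymbol{x}_0\| \leq B$ on $[0,\tau_*]$ gives in particular $|x_2(t)| = |x_2(t) - 0| \leq B$ for all $t \in [0,\tau_*]$. Using \refb{int y1}, this yields the estimate
\begin{equation}
y_1(\tau) = 1 + \int_0^\tau x_2(t)\,dt \geq 1 - \int_0^\tau |x_2(t)|\,dt \geq 1 - B\tau,
\end{equation}
valid for all $\tau \in [0,\tau_*]$.

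Next, under the hypothesis $\tau_* \leq (1-b)/B$, I get $B\tau \leq B\tau_* \leq 1-b$ for every $\tau \in [0,\tau_*]$. Substituting into the previous bound gives $y_1(\tau) \geq 1 - (1-b) = b > 0$, and taking the infimum over $[0,\tau_*]$ yields \refb{inf}. This completes the proof; there is no substantive obstacle, as the argument uses only the pointwise bound $|x_2|\le B$ inherited from membership in $\chi$ together with the quantitative choice of $\tau_*$. Note that this is exactly the complementary estimate to the one in the previous lemma: the previous lemma chooses $\tau_*$ small enough to preserve the ball condition, while this one chooses $\tau_*$ small enough to preserve the positivity condition on the first component, both of which are needed so that $T$ maps $\chi$ into itself.
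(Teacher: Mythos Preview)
Your proof is correct and follows essentially the same approach as the paper: both derive the pointwise bound $|x_2|\le B$ from membership in $\chi$, use it to obtain $y_1(\tau)\ge 1-B\tau$, and then invoke $\tau_*\le(1-b)/B$ to conclude $y_1\ge b$. The paper phrases the intermediate step by citing the estimate $\bigl|\int_0^\tau x_2\,dt\bigr|\le B\tau$ already obtained in the previous lemma, whereas you re-derive it directly, but the argument is the same.
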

\begin{proof} 
It follows from \refb{int x2} that
that 
\beq
\inf_{\tau\in[0,\tau_*]} \int^\tau_0 x_2(t)dt \geq -B\tau_*.
\eeq
Hence,
\begin{equation}
\inf_{\tau\in[0,\tau_*]} y_1 = 1 + \inf_{\tau\in[0,\tau_*]}\int^\tau_0 x_2(t)dt  \geq 1 - B\tau_*.
\end{equation}
We then choose $\tau_* \leq (1 - b)/B$ so that \refb{inf} is satisfied. Note that $x_1(0)=1 > b$ and so the upper bound on $\tau_*$ is strictly positive. \end{proof}
\begin{proposition} For a given $\boldsymbol{x} \in \chi(\tau_*,B,b)$, with $\tau_*\le\min\{b/M_1,(1-b)/B\}$, we have $T\boldsymbol{x} = \boldsymbol{y}\in\chi$, where $T$ is defined by (37). 
\end{proposition}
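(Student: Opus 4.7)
The plan is to show that $\boldsymbol{y}=T\boldsymbol{x}$ satisfies each of the three defining conditions of the set $\chi(\tau_*,b,B)$ in \refb{chi}, namely $\boldsymbol{y}\in C^0([0,\tau_*],\mathbb{R}^4)$ with $\boldsymbol{y}(0)=\boldsymbol{x}_0$, $\sup_{[0,\tau_*]}\|\boldsymbol{y}-\boldsymbol{x}_0\|\le B$, and $\inf_{[0,\tau_*]}y_1\ge b>0$. Three of these four assertions are already contained in the preceding lemmas: Lemma 3.3 gives $\boldsymbol{y}(0)=\boldsymbol{x}_0$, Lemma 3.4 gives the supremum bound under $\tau_*\le B/M_1$ (note that $b<1\le B/b$ when $B\ge b$, so the stated hypothesis $\tau_*\le b/M_1$ implies the condition $\tau_*\le B/M_1$ required by Lemma 3.4), and Lemma 3.5 gives the positivity of $y_1$ under $\tau_*\le(1-b)/B$. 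The proof therefore reduces to assembling these three statements, after verifying that the image $\boldsymbol{y}$ is continuous on $[0,\tau_*]$.

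The only nontrivial step, not explicitly covered by the lemmas, is the continuity of $\boldsymbol{y}$. The components $y_1$ and $y_3$ are plainly continuous, as indefinite integrals of continuous functions. For $y_2$, I would write
\begin{equation}
y_2(\tau)=\frac{1}{\tau^2}\int_0^\tau t^2\,x_1(t)\alpha(x_3(t))\,dt,
\end{equation}
and use the bound $|x_1\alpha(x_3)|\le(B+1)(1/4+\delta)$ together with $\int_0^\tau t^2\,dt=\tau^3/3$ to conclude that $|y_2(\tau)|\le(B+1)(1/4+\delta)\tau/3\to0$ as $\tau\to0^+$, matching the value $y_2(0)=0$ from Lemma 3.3; continuity on $(0,\tau_*]$ follows from the quotient being a composition of continuous functions (using that $\boldsymbol{x}\in C^0$). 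Continuity of $y_4$ is handled in the same way, with the additional factor $1/x_1(\tau)$ controlled by $x_1(\tau)\ge b>0$ on $[0,\tau_*]$, again using that $\int_0^\tau t\,|x_1(t)\beta(x_3(t))|/(\tau x_1(\tau))\,dt\le \tau(B+1)(1/4+2\delta/k^2)/(2b)\to 0$ as $\tau\to 0^+$.

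Once continuity of $\boldsymbol{y}$ is established, I would invoke the three lemmas in sequence: Lemma 3.3 fixes the initial values, then Lemma 3.4 gives the global bound on $\|\boldsymbol{y}-\boldsymbol{x}_0\|$ using the hypothesis $\tau_*\le b/M_1\le B/M_1$, and finally Lemma 3.5 yields $\inf_{[0,\tau_*]}y_1\ge b>0$ using $\tau_*\le(1-b)/B$. Since each of these bounds appears as part of the definition of $\chi$, one concludes $\boldsymbol{y}\in\chi(\tau_*,b,B)$.

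The main obstacle, such as it is, is the book-keeping of which bounds on $\tau_*$ appear where, and verifying that the continuity of $y_2$ and $y_4$ extends down to $\tau=0$ despite the apparent singular factors of $1/\tau^2$ and $1/\tau$; the weighted mean value theorem for integrals, already invoked in the proof of Lemma 3.3, handles both issues cleanly. No new estimates are required beyond those already produced in Lemmas 3.3--3.5, so the proposition is really a synthesis step, setting the stage for the contraction estimate that will follow.
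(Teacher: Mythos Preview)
Your proposal is correct and follows exactly the paper's approach: the paper's own proof is the single sentence ``The proof follows immediately from the three preceding lemmas,'' and your argument is simply a more explicit unpacking of that sentence, with the added (and welcome) verification that $\boldsymbol{y}$ is continuous down to $\tau=0$. One small remark: your reconciliation of the hypothesis $\tau_*\le b/M_1$ with Lemma~3.4's condition $\tau_*\le B/M_1$ via ``$b<1\le B/b$ when $B\ge b$'' is circular, since it simply assumes $B\ge b$; in fact the $b/M_1$ in the proposition appears to be a typo for $B/M_1$, as the later Proposition~3.3 uses $B/M_1$ consistently.
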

\begin{proof} The proof follows immediately from the three preceding lemmas. \end{proof}
\begin{proposition} Let 
\begin{subequations}\begin{eqnarray} \tilde{M}=\max \left\{\frac{1}{4} +\delta,\lambda(B+1)\delta\right\}\\
\bar{M} = \max \left\{\frac{1}{4}+\frac{2\delta}{k^2},\frac{2\lambda}{k^2} (B+1)\delta\right\},\\
M_2 = \max\left\{1,\tilde{M},(B+1)\left(\frac{1}{4}+\frac{2\delta}{k^2}\right)\frac{1}{b^2}+\frac{1}{b}\bar{M}\right\}.
\end{eqnarray}\end{subequations}
The mapping
\beq
T:\chi\rightarrow\chi ;\boldsymbol{x}\mapsto T\boldsymbol{x}=\boldsymbol{y},
\eeq
with $\chi$ defined by \refb{chi}, with $\tau_* < \min\{b/M_1,(1-b)/B,1/M_2\}$, is a contraction mapping, i.e., there exists a number $0<\kappa<1$ such that
\begin{equation}
||T\boldsymbol{x}^{(1)}-T\boldsymbol{x}^{(2)}||_\chi =||\boldsymbol{y}^{(1)}-\boldsymbol{y^{(2)}}||_\chi \leq \kappa||\boldsymbol{x}^{(1)}-\boldsymbol{x}^{(2)}||_\chi,
\end{equation}
for any $\boldsymbol{x}^{(1)},\boldsymbol{x}^{(2)}$ in $\chi$.
\end{proposition}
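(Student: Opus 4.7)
The plan is to bound each of the four components $|y_i^{(1)} - y_i^{(2)}|$ by the product of a $\tau_*$-independent constant, the factor $\tau_*$, and $||\boldsymbol{x}^{(1)} - \boldsymbol{x}^{(2)}||_\chi$, and then take the supremum over $i$. The three entries in $M_2$ will correspond exactly to the three distinct constants that arise: the ``$1$'' from the trivial components $y_1$ and $y_3$, the constant $\tilde{M}$ from $y_2$, and the awkward expression involving $1/b$ and $1/b^2$ from $y_4$. With the aggregate bound $||\boldsymbol{y}^{(1)} - \boldsymbol{y}^{(2)}||_\chi \leq M_2 \tau_* ||\boldsymbol{x}^{(1)} - \boldsymbol{x}^{(2)}||_\chi$ in hand, the hypothesis $\tau_* < 1/M_2$ then immediately yields the required $\kappa = M_2 \tau_* < 1$.

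First I would dispatch $y_1$ and $y_3$: each is simply a primitive, so the triangle inequality gives $|y_1^{(1)} - y_1^{(2)}| \leq \int_0^\tau |x_2^{(1)}(t) - x_2^{(2)}(t)|\,dt \leq \tau_* ||\boldsymbol{x}^{(1)} - \boldsymbol{x}^{(2)}||_\chi$ and the analogous bound for $y_3$, accounting for the leading $1$ in $M_2$. Next, for $y_2$ I would use the standard add-and-subtract decomposition
\[
x_1^{(1)}\alpha(x_3^{(1)}) - x_1^{(2)}\alpha(x_3^{(2)}) = (x_1^{(1)} - x_1^{(2)})\alpha(x_3^{(1)}) + x_1^{(2)}\bigl(\alpha(x_3^{(1)}) - \alpha(x_3^{(2)})\bigr).
\]
The first piece is controlled by $|\alpha| \leq 1/4 + \delta$. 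For the second, applying the mean value theorem to $s \mapsto 1/4 - V_0 e^{\lambda s}$ on the admissible range of $x_3$ gives the Lipschitz bound $|\lambda|\delta\,|x_3^{(1)} - x_3^{(2)}|$; combining this with $|x_1^{(2)}| \leq B+1$ and $t^2/\tau^2 \leq 1$, and integrating, yields the claimed $\tilde{M}\tau_*$ bound. The estimate for the $\beta$-contribution inside $y_4$ proceeds identically, with Lipschitz constant $\bar{M}$ instead of $\tilde{M}$.

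The only step that requires genuine care is the fourth component, because $y_4$ depends on $\boldsymbol{x}$ nonlinearly through the ratio $x_1(t)/x_1(\tau)$. My approach will be the double splitting
\[
\frac{x_1^{(1)}(t)\beta(x_3^{(1)}(t))}{x_1^{(1)}(\tau)} - \frac{x_1^{(2)}(t)\beta(x_3^{(2)}(t))}{x_1^{(2)}(\tau)} = \frac{x_1^{(1)}(t)\beta(x_3^{(1)}(t)) - x_1^{(2)}(t)\beta(x_3^{(2)}(t))}{x_1^{(1)}(\tau)} + \frac{x_1^{(2)}(t)\beta(x_3^{(2)}(t))\bigl(x_1^{(2)}(\tau) - x_1^{(1)}(\tau)\bigr)}{x_1^{(1)}(\tau)\,x_1^{(2)}(\tau)}.
\]
The numerator of the first term is estimated exactly as for $y_2$ (using the Lipschitz constant $\bar{M}$ for $\beta$), and dividing by $x_1^{(1)}(\tau) \geq b$ produces the $\bar{M}/b$ contribution. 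The second term is handled by bounding $|x_1^{(2)}(t)\beta(x_3^{(2)}(t))| \leq (B+1)(1/4 + 2\delta/k^2)$ and using $x_1^{(1)}(\tau)\,x_1^{(2)}(\tau) \geq b^2$, producing the final $(B+1)(1/4 + 2\delta/k^2)/b^2$ piece. Integrating against $t/\tau \leq 1$ and summing these two contributions reconstructs exactly the third entry of $M_2$; taking the maximum across all four components then closes the estimate and the choice $\tau_* < 1/M_2$ delivers the contraction constant.
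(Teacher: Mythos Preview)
Your proposal is correct and follows essentially the same component-by-component strategy as the paper: trivial bounds for $y_1,y_3$, a Lipschitz estimate on the product $x_1\alpha(x_3)$ for $y_2$, and the same double splitting of the ratio for $y_4$. The only noteworthy difference is in how you linearise the product term: you use the scalar add-and-subtract trick together with the one-variable mean value theorem on $\alpha$, whereas the paper treats $f(x_1,x_3)=x_1\alpha(x_3)$ as a function of two variables, applies the multivariable mean value theorem, and then invokes Cauchy--Schwarz to bound $|\nabla f(\hat p)\cdot(p^{(1)}-p^{(2)})|$. One consequence is that your decomposition naturally produces the \emph{sum} $(1/4+\delta)+|\lambda|(B+1)\delta$ rather than the maximum $\tilde M$ appearing in the statement; this is harmless for the contraction conclusion (at worst a factor of two in the constant), but you should flag it if you want the constants to match exactly. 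Your version is the more elementary of the two and avoids any ambiguity about which vector norm is being paired with Cauchy--Schwarz.
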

\begin{proof} 
Recall
\begin{equation}
||\boldsymbol{y^{(1)}}-\boldsymbol{y^{(2)}}||_\chi = \sup_{\tau\in[0,\tau^*]}\max_{1\leq i\leq 4}|y^{(1)}_i-y^{(2)}_i|.
\end{equation}
We show that for each $i$, $1\le i\le4$, we have $|y^{(1)}_i-y^{(2)}_i| \leq a\sigma\tau^*$ where $a$ is some constant and $\sigma = ||\boldsymbol{x}^{(1)}-\boldsymbol{x}^{(2)}||_\chi$.
Then by choosing an appropriate value for $\tau_*$, we show that $T$ is a contraction on the interval $[0,\tau_*]$.
We have
\begin{eqnarray}
\nonumber|y_1^{(1)}-y_1^{(2)}| = \left|\int^\tau_0 x_2^{(1)}(t)-x_2^{(2)}(t)dt\right| &\leq&\int^\tau_0 \left|x_2^{(1)}(t)-x_2^{(2)}(t)\right|dt,\\
& \leq&\int^\tau_0 \sigma dt = \sigma \tau.
\end{eqnarray}
\begin{eqnarray}
\nonumber|y_2^{(1)}-y_2^{(2)}|& =& \left|\int^\tau_0(\alpha^{(1)}(t)x^{(1)}_1(t)-\alpha^{(2)}(t)x^{(2)}_1(t))\frac{t^2}{\tau^2}dt\right|, \\
&\leq& \int^\tau_0\left|(\alpha^{(1)}(t)x^{(1)}_1(t)-\alpha^{(2)}(t)x^{(2)}_1(t))\right|dt,
\end{eqnarray}
where $\alpha^{(j)}= \alpha(x_3^{(j)})$. Let
$p^{(i)}= (x_1^{(i)},x_3^{(i)})^T,
f(p^{(i)}) = \alpha(x_3^{(i)})x_1^{(i)}$ for $i={1,2}$.\\
Then by the mean value theorem, there exists some point $\hat{p}=(\hat{x}_1,\hat{x}_3)$ on the line segment joining $p^{(1)}$ to $p^{(2)}$ such that 
\begin{equation}
\alpha^{(1)}x^{(1)}_1-\alpha^{(2)}x^{(2)}_1 = f(\boldsymbol{p}^{(1)})-f(\boldsymbol{p}^{(2)})=\boldsymbol{\nabla}f(\hat{p})\cdot(p^{(1)}-p^{(2)}).
\end{equation}
Then by the Cauchy-Schwarz inequality we have 
\begin{equation}
\label{C-S}\left|\alpha^{(1)}x^{(1)}_1-\alpha^{(2)}x^{(2)}_1\right|\le\left |\boldsymbol{\nabla}f(\hat{p})\right|\left|p^{(1)}-p^{(2)}\right|.
\end{equation}
Note that $f$ is differentiable everywhere and thus satisfies the hypotheses of the mean value theorem. We have 
\begin{equation}
\boldsymbol{\nabla}f(\hat{p})= \left(\frac{1}{4}-V_0 e^{\lambda \hat{x}_3},-\lambda V_0 e^{\lambda \hat{x}_3}\hat{x}_1\right)^T,
\end{equation}
Using the inequalities $\hat{x}_1\leq B+1$, $\hat{x}_3 \leq B + |l_0|$, we find
\begin{equation}
\label{Grad}\left|\boldsymbol{\nabla}f(\hat{p})\right| \leq \max \left\{\frac{1}{4} +\delta,\lambda(B+1)\delta\right\} = \tilde{M}.
\end{equation}
Using \refb{C-S} and \refb{Grad}  we find $|\alpha^{(1)}x^{(1)}_1-\alpha^{(2)}x^{(2)}_1|\leq \tilde{M}|p^{(1)}-p^{(2)}|$.
Clearly $|p^{(1)}-p^{(2)}|\leq \sigma=||\boldsymbol{x}^{(1)}-\boldsymbol{x}^{(2)}||_\chi$, and so 
\begin{eqnarray}
\nonumber \int^\tau_0\left|y_2^{(1)}-y_2^{(2)}\right|dt &= \int^\tau_0\left|(\alpha^{(1)}(t)x^{(1)}_1(t)-\alpha^{(2)}(t)x^{(2)}_1(t))\right|dt\\
 &\leq \int^\tau_0 \sigma \tilde{M}dt = \sigma \tilde{M} \tau.
\end{eqnarray}
Similarly, we have
\bea\left|y_3^{(1)}-y_3^{(2)}\right| = \left|\int^\tau_0 x_4^{(1)}(t)-x_4^{(2)}(t)dt\right|&\leq\int^\tau_0\left|x_4^{(1)}(t)-x_4^{(2)}(t)\right|dt \\
\nonumber&\leq\int^\tau_0 \sigma dt = \sigma \tau.
\eea
Finally,
\begin{eqnarray} 
\left|y_4^{(1)}-y_4^{(2)}\right| &=\left|\int^\tau_0\left(\frac{\beta^{(1)}x_1^{(1)}(t)}{x_1^{(1)}(\tau)}-\frac{\beta^{(2)}x_1^{(2)}(t)}{x_1^{(2)}(\tau)}\right)\frac{t}{\tau}dt\right|\\
\nonumber &\leq \int^\tau_0\left|\frac{\beta^{(1)}x_1^{(1)}(t)}{x_1^{(1)}(\tau)}-\frac{\beta^{(2)}x_1^{(2)}(t)}{x_1^{(2)}(\tau)}\right|dt\\
\nonumber & = \int^\tau_0\left|\beta^{(1)}x_1^{(1)}(t)\left(\frac{x_1^{(2)}(\tau)-x_1^{(1)}(\tau)}{x_1^{(1)}(\tau)x_1^{(2)}(\tau)}\right) +
\frac{\beta^{(1)}x_1^{(1)}(t)-\beta^{(2)}x_1^{(2)}(t)}{x_1^{(2)}(\tau)}\right|dt\\
\nonumber &\leq \int^\tau_0\left|\beta^{(1)}x_1^{(1)}(t)\right|\frac{\sigma}{b^2}dt+\int^\tau_0\frac{1}{b}\left|\beta^{(1)}x_1^{(1)}(t)-\beta^{(2)}x_1^{(2)}\right|dt = I_1 + I_2,
\end{eqnarray}
using $1/b\geq1/x_1$ and $|x_1^{(2)}-x_1^{(1)}| \leq \sigma$. Using the mean value theorem and the Cauchy-Schwarz inequality again here we find $I_2 \leq \bar{M}\sigma\tau/b$, where 
\beq
\bar{M} = \max \left\{\frac{1}{4}+\frac{2}{k^2}\delta,\frac{2}{k^2}\lambda (B+1)\delta\right\}.\eeq
Using the bounds defined by (60) we find 
\beq
I_1 \leq(B+1)\left(\frac{1}{4}+\frac{2}{k^2}\delta\right)\frac{\sigma}{b^2}\tau.\eeq
So we have 
\beq|y_4^{(1)}-y_4^{(2)}| \leq \left[(B+1)\left(\frac{1}{4}+\frac{2}{k^2}\delta\right)\frac{1}{b^2}+\frac{1}{b}\bar{M}\right]\sigma\tau
\eeq
Gathering these bounds we find that 
$\sup_{\tau\in[0,\tau_*]} \max_{1\leq i\leq 4}|y^1_i-y^2_i| \leq M_2\tau_*\sigma,
$\\
where 
\beq
M_2 = \max\left\{1,\tilde{M},(B+1)\left(\frac{1}{4}+\frac{2}{k^2}\delta\right)\frac{1}{b^2}+\frac{1}{b}\bar{M}\right\}.
\eeq
For $\tau_* < 1/M_2$, $||\boldsymbol{y^{(1)}}-\boldsymbol{y^{(2)}}||_\chi \leq \kappa||\boldsymbol{x^{(1)}}-\boldsymbol{x^{(2)}}||_\chi$ where $0<\kappa<1$. \end{proof}
\begin{proposition} For $\tau_*$ sufficiently small, the mapping $T$ defined above has a unique fixed point on $[0,\tau_*]$.
\end{proposition}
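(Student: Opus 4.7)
The proof is essentially a direct invocation of the Banach (contraction mapping) fixed-point theorem, since all of the hard analytic work has been carried out in the preceding lemmas and propositions. The plan is to assemble these pieces and check that the hypotheses of the theorem are met.

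First I would fix constants $b \in (0,1)$ and $B>0$ and choose
\begin{equation*}
\tau_* < \min\!\left\{\frac{b}{M_1},\,\frac{1-b}{B},\,\frac{1}{M_2}\right\},
\end{equation*}
where $M_1$ and $M_2$ are as defined in the statements of Lemma 3.4 and Proposition 3.2, respectively. Since $b<1$ and $B>0$ we have $(1-b)/B>0$, and the quantities $M_1$, $M_2$ are finite and strictly positive (they are maxima of finite, positive expressions involving $b,B,k,V_0,l_0$), so this minimum is strictly positive and therefore a legitimate choice of $\tau_*$ is possible.

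Next I would verify the two hypotheses of Banach's theorem. For completeness of the domain, recall that $E = C^0([0,\tau_*],\mathbb{R}^4)$ equipped with the sup norm is a Banach space, and the set $\chi = \chi(\tau_*,b,B)$ defined in \refb{chi} is cut out of $E$ by the closed conditions $\boldsymbol{x}(0)=\boldsymbol{x}_0$, $\sup\|\boldsymbol{x}-\boldsymbol{x}_0\|\le B$, and $\inf x_1 \ge b$, all of which are preserved under uniform limits; hence $\chi$ is a closed, and therefore complete, metric subspace of $E$. For the self-map property, Proposition 3.1 (itself the combination of Lemmas 3.3, 3.4 and 3.5) gives $T(\chi)\subseteq\chi$ for the chosen $\tau_*$. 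For the contraction property, Proposition 3.2 provides a constant $0<\kappa<1$ with $\|T\boldsymbol{x}^{(1)}-T\boldsymbol{x}^{(2)}\|_\chi \le \kappa\,\|\boldsymbol{x}^{(1)}-\boldsymbol{x}^{(2)}\|_\chi$ for all $\boldsymbol{x}^{(1)},\boldsymbol{x}^{(2)}\in\chi$.

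With these two hypotheses in place, Banach's fixed-point theorem applies and produces a unique $\boldsymbol{x}^\star\in\chi$ with $T\boldsymbol{x}^\star=\boldsymbol{x}^\star$. By construction (Lemma 3.2), such a fixed point is equivalent to a $C^0$ solution of the integral system \refb{int x1}--\refb{int x4} on $[0,\tau_*]$, which in turn is equivalent to a $C^2$ solution of \refb{EFE tau B}, \refb{EFE tau D}, \refb{EFE tau E} with the regular-axis initial data of Lemma 3.1. The only mild subtlety in the argument is bookkeeping: one must check that the three upper bounds on $\tau_*$ coming from Lemmas 3.4 and 3.5 and Proposition 3.2 are simultaneously positive, and that the resulting fixed point is unique \emph{within $\chi$} rather than absolutely. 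Uniqueness in the larger class of $C^2$ solutions with the stipulated initial data then follows by noting that any such solution automatically lies in $\chi$ for $\tau_*$ small enough, by continuity of $x_1$ at $\tau=0$ where $x_1(0)=1>b$, together with the Taylor expansion used in Lemma 3.1.
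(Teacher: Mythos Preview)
Your proof is correct and follows essentially the same approach as the paper: choose $\tau_*$ smaller than the three bounds arising from Lemmas 3.4, 3.5 and Proposition 3.2, note that $\chi$ is a closed subset of the Banach space $E$, invoke Propositions 3.1 and 3.2 to conclude $T$ is a contraction on $\chi$, and apply Banach's fixed point theorem. Your write-up is simply more explicit about why the minimum is strictly positive and about the distinction between uniqueness in $\chi$ and uniqueness among all $C^2$ solutions, but the underlying argument is identical.
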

\begin{proof} Given any constants $B $ and $b$, let $m = \min\left\{B/M_1,(1 - b)/B,1/M_2\right\}$. For $\tau_* < m$, then, Propositions 3.1 and 3.2 hold, so $T$ is a contractive mapping from a closed subset $\chi$ of a Banach space $E$, into itself. Using Banach's fixed point theorem completes the proof. \end{proof}
In light of this theorem, we know that there is a unique $\boldsymbol{x}$ on $[0,\tau_*]$, such that $T\boldsymbol{x} = \boldsymbol{x}$. Combining this with (60) shows that there is a unique $\boldsymbol{x}$ such that (57) holds, hence (50) has a unique solution on some interval $[0,\tau_*]$.
\begin{proposition} \refb{EFE tau B} and \refb{EFE tau D} subject to \refb{EFE tau F} have a unique solution on $[0,\tau_*]$, for some $\tau_*>0$.
\end{proposition}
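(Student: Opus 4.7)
The plan is to invoke the contraction-mapping machinery already assembled in Propositions 3.1--3.3 and then translate between the integral and differential formulations. First, fix constants $B > 0$ and $b \in (0,1)$, and choose $\tau_* < \min\{B/M_1,(1-b)/B,1/M_2\}$. By Proposition 3.3 the map $T$ defined in (60) has a unique fixed point $\boldsymbol{x} = (x_1,x_2,x_3,x_4) \in \chi(\tau_*,b,B)$, so $\boldsymbol{x}$ is the unique continuous vector on $[0,\tau_*]$ satisfying the integral system (57) with $\boldsymbol{x}(0) = (1,0,l_0,0)$ and $x_1(\tau) \ge b > 0$.

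Next I would translate the fixed point back into a classical solution of \refb{EFE tau B} and \refb{EFE tau D}. Setting $R(\tau) = \tau x_1(\tau)$ and $l(\tau) = x_3(\tau)$, the first and third equations of (57) give $\dot{x}_1 = x_2$ and $\dot{l} = x_4$ by the fundamental theorem of calculus, so $R, l \in C^1([0,\tau_*])$. The lower bound $x_1 \ge b > 0$ ensures that the integrands in the remaining equations of (57) are continuous, so differentiation of those identities followed by rearrangement (multiplying through by $\tau$ or by $\tau x_1(\tau)$ where appropriate) reproduces \refb{EFE tau B} and \refb{EFE tau D} on $[0,\tau_*]$. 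The initial data \refb{EFE tau F} follow directly from the fixed-point data and Lemma 3.1: $R(0) = 0$ and $l(0) = l_0$ by definition, $\dot{R}(0) = x_1(0) = 1$, and $\dot{l}(0) = x_4(0) = 0$ by the weighted mean value theorem applied to the integrand at $\tau = 0$.

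For uniqueness, suppose $(\tilde{R},\tilde{l})$ is any $C^2$ solution of \refb{EFE tau B}, \refb{EFE tau D} on $[0,\tau_*]$ satisfying \refb{EFE tau F}. By Lemma 3.1 the function $\tilde{x}_1 = \tilde{R}/\tau$ extends continuously to the axis with $\tilde{x}_1(0) = 1$ and $\dot{\tilde{x}}_1(0) = 0$. Integrating the ODEs using the regular-axis data produces exactly the integral identities (57), i.e.\ $T\tilde{\boldsymbol{x}} = \tilde{\boldsymbol{x}}$ for $\tilde{\boldsymbol{x}} = (\tilde{x}_1,\dot{\tilde{x}}_1,\tilde{l},\dot{\tilde{l}})$. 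By continuity of $\tilde{\boldsymbol{x}}$ at the axis, $\tilde{\boldsymbol{x}} \in \chi(\tau_*,b,B)$ after possibly shrinking $\tau_*$ once more, and uniqueness of the Banach fixed point in Proposition 3.3 forces $\tilde{\boldsymbol{x}} = \boldsymbol{x}$, hence $(\tilde{R},\tilde{l}) = (R,l)$.

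The only genuinely new content here beyond the fixed-point argument is the equivalence between the ODE system with its singular axis data and the integral system on $\chi$; this is the step that requires care, since the $C^2$ hypothesis together with Lemma 3.1 is what allows the singular division by $\tau$ hidden in $R = \tau x_1$ to be handled reversibly without altering the regular-axis data or the regularity of $R$ and $l$.
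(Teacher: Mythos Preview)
Your proposal is correct and follows the same route as the paper: invoke the contraction-mapping result (Proposition~3.3) to obtain a unique fixed point of the integral system~(57), then recover $(R,l)$ via $R=\tau x_1$, $l=x_3$. The paper's own proof is a single sentence to this effect, so the only difference is that you have written out the equivalence between the integral and differential formulations and the uniqueness argument in detail, whereas the paper leaves these implicit.
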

\begin{proof} Proposition 3.3 shows that we have a unique solution for $l$ and $x = R\tau$ on $[0,\tau_*]$, so we have a unique solution for $R$ and $l$. \end{proof}
\noindent\textbf{Proof of Theorem 2.2}\begin{proof} Using Proposition 3.4 we have a unique solution for $R$ and $l$ on some interval $[0,\tau_*]$. Integrating \refb{EFE tau C} we have \begin{equation*}
\phi = \phi_0 +\frac{\tau}{4}+\int^\tau_0 \frac{1-\dot{R}}{R}dt,
\end{equation*}
which gives a unique solution for $\phi$. Equation \refb{EFE tau A} then gives a unique solution for $\gamma$. \end{proof}
\section{Evolution of solutions.}
In this section we determine the global structure of solutions in the region bounded by the axis and $\mathcal{N}_-$, with the exception of a few special cases, which are deferred until the following section. An alternative set of variables proves useful:
\begin{equation}u_1=\dot{R}/R, \qquad u_2= |V_0|e^{\lambda l}, \qquad u_3=\dot{l},\end{equation}
They satisfy
\begin{subequations}\begin{eqnarray}
\label{u1}\dot{u}_1=\frac{1}{4}-\epsilon u_2-u_1^2,\\
\label{u2}\dot{u}_2=\lambda u_2u_3,\\
\label{u3}\dot{u}_3=\frac{1}{4}-\epsilon\frac{2u_2}{k^2}-u_1u_3,\\
u_2(0)=|V_0|e^{\lambda l_0}>0,\qquad u_3(0)=0,
\end{eqnarray}\end{subequations}
where $\epsilon=\mbox{sgn}(V_0)$ and $\lambda=k^2/2-1$. Note that $u_1$ is not defined on $\tau\le0$ and that $\lim_{\tau\rightarrow0^+}u_1=\infty$. Note also that $u_2>0$ by definition.\\
Using results from section 3, there exists $\tau_{M}$ such that $\boldsymbol{u}=(u_1,u_2,u_3)^T$ has a unique solution on $(0,\tau_{M})$. The following standard result proves useful in determining the maximal interval of existence in each case(see, for example, \cite{Tav}).
\begin{theorem} Let $\Psi_{\textbf{a}}(t)$ be the unique solution of the differential equation $\textbf{x}'=\textbf{f}(\textbf{x})$, where $\textbf{f}\in C^1(\mathbb{R}^n)$, which satisfies $\textbf{x}(0)=\textbf{a}$, and let $(t_{\min},t_{\max})$ be the maximal interval of existence on which $\Psi_{\textbf{a}}(t)$ is defined. If $t_{\max}$ is finite, then 
\beq
\lim_{t\rightarrow t^-_{\max}}||\Psi_{\textbf{a}}(t)||=+\infty.
\eeq\end{theorem}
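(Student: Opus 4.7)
The plan is to argue by contradiction, assuming $t_{\max}<\infty$ while $\|\Psi_{\textbf{a}}(t)\|$ fails to diverge, and then use the local existence theorem for $C^1$ vector fields to extend $\Psi_{\textbf{a}}$ past $t_{\max}$, contradicting maximality. So suppose the solution does not blow up: then there exist $M>0$ and a sequence $t_n\nearrow t_{\max}$ with $\|\Psi_{\textbf{a}}(t_n)\|\le M$ for every $n$.

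Next I would pass to the compact ball $\overline{B}:=\{\textbf{x}\in\mathbb{R}^n:\|\textbf{x}\|\le M+1\}$. Since $\textbf{f}\in C^1(\mathbb{R}^n)$, both $\textbf{f}$ and its Jacobian are continuous and therefore bounded on $\overline{B}$; in particular $\textbf{f}$ is globally Lipschitz on $\overline{B}$ with some constant $L$, and $\|\textbf{f}\|\le K$ there. A standard Picard iteration argument, carried out once on $\overline{B}$, then produces a $\delta>0$ depending only on $K$, $L$ and the distance from $\overline{B(0,M)}$ to $\partial\overline{B}$ (which is $1$) such that for every initial condition $(s_0,\textbf{y}_0)$ with $\textbf{y}_0\in\overline{B(0,M)}$ there exists a unique solution of $\textbf{x}'=\textbf{f}(\textbf{x})$ on $[s_0-\delta,s_0+\delta]$ taking the value $\textbf{y}_0$ at $s_0$ and remaining inside $\overline{B}$. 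The crucial point is that this $\delta$ is uniform in the initial point $\textbf{y}_0\in\overline{B(0,M)}$.

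Now I would pick $n$ large enough that $t_{\max}-t_n<\delta/2$. Applied with $s_0=t_n$ and $\textbf{y}_0=\Psi_{\textbf{a}}(t_n)\in\overline{B(0,M)}$, the local existence theorem yields a solution $\varphi$ on $[t_n-\delta,t_n+\delta]$ with $\varphi(t_n)=\Psi_{\textbf{a}}(t_n)$. Since $\textbf{f}\in C^1$ gives local uniqueness, $\varphi$ agrees with $\Psi_{\textbf{a}}$ on $[t_n,t_{\max})\subset[t_n,t_n+\delta]$ and therefore extends $\Psi_{\textbf{a}}$ continuously to the strictly larger interval $[0,t_n+\delta]$, which contains points beyond $t_{\max}$. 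This contradicts the definition of $t_{\max}$ as the right endpoint of the maximal interval of existence, and the theorem follows.

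The substantive step is the second one: one must be careful to obtain a local existence time $\delta$ that does not depend on which initial datum $\Psi_{\textbf{a}}(t_n)\in\overline{B(0,M)}$ is used, since otherwise $\delta$ could conceivably shrink to zero as $t_n\to t_{\max}$ and the extension argument would fail. This is exactly where the compactness of $\overline{B}$ and the $C^1$ hypothesis on $\textbf{f}$ enter, giving the uniform bounds on $\|\textbf{f}\|$ and on the Lipschitz constant that feed into Picard's iteration with a uniform $\delta$.
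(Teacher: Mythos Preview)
The paper does not prove Theorem~4.1 at all; it is quoted as a standard result with a reference to \cite{Tav}, so there is no in-paper proof to compare against. Your argument is the standard contradiction-via-extension proof and is correct: negating the conclusion gives a sequence $t_n\nearrow t_{\max}$ with $\|\Psi_{\mathbf a}(t_n)\|\le M$, the $C^1$ hypothesis yields uniform bounds for $\mathbf f$ and its Lipschitz constant on the compact ball $\overline{B(0,M+1)}$, and hence a uniform local existence time $\delta$ independent of the starting point in $\overline{B(0,M)}$; choosing $t_n$ within $\delta$ of $t_{\max}$ and invoking uniqueness then extends the solution past $t_{\max}$, contradicting maximality. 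Your emphasis on the uniformity of $\delta$ is exactly the point that makes the argument work.
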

This result may be adapted to our system by defining $(0,\tau_{M})$ as the maximal interval of existence for the unique solution $\boldsymbol{u}(\tau)$ of (94). It follows from Theorem 4.1 that if the components of the solution $u_1,u_2$ and $u_3$ satisfy finite lower and upper bounds for all $\tau\in(0,\tau_{M})$, then we have $\tau_{M}=\infty$. Furthermore, if $\tau_M$ is finite then we have $\lms |u_i|=+\infty$ for at least one $i\in\{1,2,3\}$.\\
The system has three parameters $\{k^2,V_0,l_0$\}. The qualitative picture of solutions depends primarily on the signs of $V_0$ and $\lambda=k^2/2-1$ and so we devote a subsection to each of the four permutations. 
\subsection{$V_0<0,\lambda>0$}
In this case we find that radial null infinity exists along a hypersurface corresponding to a finite value of $\tau$. Note that $\epsilon=-1$ and $k^2>2$ here. 
\noindent\begin{lemma} If $V_0<0$, $\lambda>0$ then $u_3>0,\dot{u}_3>0$ and $u_1>u_3$ for $\tau\in(0,\tau_M)$.\end{lemma}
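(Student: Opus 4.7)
The approach is a first-failure (continuity) argument on the joint inequality $\{u_{3}>0,\ \dot u_{3}>0,\ u_{1}>u_{3}\}$. Define $\tau^{*}$ to be the supremum of those $\tau_{1}\in(0,\tau_{M})$ for which all three strict inequalities hold throughout $(0,\tau_{1})$. The aim is to show $\tau^{*}=\tau_{M}$.

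First I check non-emptiness by examining the behaviour at the axis. From the fixed-point setup of Section 3, $x=R/\tau\to 1$ as $\tau\to 0^{+}$, so $u_{1}=\dot R/R\sim 1/\tau$. The integral formula \refb{int x4}, together with $\beta(l_{0})=\frac{1}{4}+\frac{2}{k^{2}}u_{2}(0)>0$ (which uses $V_{0}<0$), gives $u_{3}(\tau)\sim\frac{1}{2}\beta(l_{0})\tau$, hence $u_{1}u_{3}\to\frac{1}{2}\beta(l_{0})$, and \refb{u3} then yields $\dot u_{3}(0^{+})=\frac{1}{2}\beta(l_{0})>0$. All three strict inequalities therefore hold on some interval $(0,\delta)$.

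Now suppose for contradiction that $\tau^{*}<\tau_{M}$. By continuity at least one of the three strict inequalities degenerates to an equality at $\tau^{*}$, and I rule out each possibility in turn. If $u_{3}(\tau^{*})=0$, then $u_{3}>0$ on $(0,\tau^{*})$ forces $\dot u_{3}(\tau^{*})\le 0$, but \refb{u3} evaluated at $\tau^{*}$ gives $\dot u_{3}(\tau^{*})=\frac{1}{4}+\frac{2}{k^{2}}u_{2}(\tau^{*})>0$. If instead $u_{1}(\tau^{*})=u_{3}(\tau^{*})$, set $w=u_{1}-u_{3}$; subtracting \refb{u3} from \refb{u1} (with $\epsilon=-1$) and using $2\lambda/k^{2}=1-2/k^{2}$ gives $\dot w=(2\lambda/k^{2})u_{2}-u_{1}w$, so $\dot w(\tau^{*})=(2\lambda/k^{2})u_{2}(\tau^{*})>0$ since $\lambda>0$, again contradicting $w>0$ on $(0,\tau^{*})$ with $w(\tau^{*})=0$. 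Otherwise $\dot u_{3}(\tau^{*})=0$ with $u_{3}(\tau^{*})>0$ and $u_{1}(\tau^{*})>u_{3}(\tau^{*})$ (the two preceding alternatives having been ruled out), so $\ddot u_{3}(\tau^{*})\le 0$. Differentiating \refb{u3} and using \refb{u1}, \refb{u2} together with the algebraic identity $u_{1}u_{3}=\frac{1}{4}+\frac{2}{k^{2}}u_{2}$ that holds whenever $\dot u_{3}=0$, a short computation collapses $\ddot u_{3}(\tau^{*})$ to $u_{1}u_{3}(u_{1}-u_{3})>0$, again a contradiction. Hence $\tau^{*}=\tau_{M}$.

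The principal obstacle is the tight coupling of the three conclusions: excluding the $\dot u_{3}=0$ scenario at $\tau^{*}$ requires both $u_{3}>0$ and $u_{1}>u_{3}$ to still hold there, so the three inequalities must be propagated jointly as a single invariant rather than proved one after the other. A secondary technical point is the singular behaviour at $\tau=0$, where $u_{1}\to\infty$ while $u_{3}\to 0$; only the product $u_{1}u_{3}$ has a finite limit, which must be extracted from the leading-order behaviour of the fixed-point integral equations of Section 3 rather than from the ODE system \refb{u1}--\refb{u3} directly.
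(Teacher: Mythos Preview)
Your proof is correct and uses the same two key computations as the paper: the identity $\dot w=(2\lambda/k^{2})u_{2}-u_{1}w$ for $w=u_{1}-u_{3}$, and the reduction $\ddot u_{3}=u_{1}u_{3}(u_{1}-u_{3})$ at a zero of $\dot u_{3}$. The only organisational difference is that the paper proceeds sequentially rather than via a joint first-failure: it first establishes $u_{1}>u_{3}$, then separately shows $u_{1}>1/2$ (hence $R>0$), deduces $u_{3}>0$ from the integral representation $u_{3}=R^{-1}\int_{0}^{\tau}(\tfrac14+\tfrac{2}{k^{2}}u_{2})R\,d\tau'$, and finally obtains $\dot u_{3}>0$; your packaging bypasses the intermediate $u_{1}>1/2$ step by ruling out $u_{3}(\tau^{*})=0$ directly from \refb{u3}, while the paper's route picks up the extra bound $u_{1}>1/2$ that it uses in subsequent lemmas.
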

\begin{proof} Consider
\begin{equation}
\dot{u}_1-\dot{u}_3 = -\epsilon\frac{2\lambda}{k^2}u_2-u_1(u_1-u_3)=\frac{2\lambda}{k^2}u_2-u_1(u_1-u_3).
\end{equation}
for $\epsilon=-1$. Since $u_2>0,\lambda>0,$ it is clear that $u_1-u_3$ cannot cross zero from above, so $u_1>u_3$ for $\tau\in(0,\tau_M)$. Equation \refb{u1} with $\epsilon=-1$ shows that $u_1$ cannot cross $1/2$ from above and so $u_1>1/2$ for all $\tau\in(0,\tau_M)$. Hence, $R>0$ for all $\tau\in(0,\tau_M)$. Equation \refb{EFE tau D} may be integrated to give
\begin{equation}
\label{ldot}\dot{l}=u_3=\frac{1}{R}\int_0^\tau\left(\frac{1}{4}-\epsilon\frac{2u_2}{k^2}\right)R\,d\tau',
\end{equation}
which is clearly positive for $R>0,\epsilon=-1$. Since $u_3(0)=0$ we must have $\dot{u}_3>0$ initially. It is straight-forward to check that at $\dot{u}_3=0$ we have $\ddot{u}_3=(u_1-u_3)u_1u_3$ which is positive for $u_1>u_3>0$. \end{proof}
\begin{lemma} Let $\hat{\lambda}= (\lambda+\sqrt{\lambda^2+16})/4$. If $V_0<0$ and $\lambda>0$, then $u_1>u_2^{1/2}/\hat{\lambda}$ for all $\tau\in(0,\tau_M)$ and there exists $\tau_1\in(0,\tau_M)$ such that $u_1$ is monotonically increasing and bounded above by $\sqrt{1/4+u_2}$ for all $\tau\in(\tau_1,\tau_M)$.\end{lemma}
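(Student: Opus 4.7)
The plan is to handle the two claims separately using comparison-type arguments against the first-order system (94), exploiting the sign information already established in Lemma 4.1.

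For the lower bound $u_1 > u_2^{1/2}/\hat\lambda$, I will consider $w=u_1-u_2^{1/2}/\hat\lambda$. Since $u_1(\tau)\to\infty$ as $\tau\to 0^+$ while $u_2(0)=|V_0|e^{\lambda l_0}$ is finite, $w>0$ on some interval $(0,\tau_0)$. If $w$ were to vanish, let $\tau^*$ be the first time it does so; then $\dot w(\tau^*)\le 0$. A direct computation at $\tau^*$, using $\dot u_2=\lambda u_2 u_3$, the identity $u_1^2=u_2/\hat\lambda^2$, and the strict bound $u_3<u_1=u_2^{1/2}/\hat\lambda$ from Lemma 4.1, produces
\[
\dot w(\tau^*) \;>\; \frac{1}{4}+\frac{u_2}{2\hat\lambda^2}\bigl(2\hat\lambda^2-2-\lambda\bigr).
\]
The constant $\hat\lambda$ is designed so that $2\hat\lambda^2-\lambda\hat\lambda-2=0$, which rearranges to $\lambda=2(\hat\lambda^2-1)/\hat\lambda$ and in turn to $2\hat\lambda^2-2-\lambda = 2(\hat\lambda-1)^2(\hat\lambda+1)/\hat\lambda\ge 0$ (using $\hat\lambda\ge 1$ when $\lambda\ge 0$). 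Hence $\dot w(\tau^*)>0$, contradicting $\dot w(\tau^*)\le 0$.

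For the upper bound, let $g=u_1^2-\tfrac14-u_2$, so that both $u_1<\sqrt{1/4+u_2}$ and the monotonicity $\dot u_1>0$ translate to $g<0$. Differentiating along solutions of (94) gives
\[
\dot g \;=\; -2u_1 g-\lambda u_2 u_3,
\]
where the second term is strictly negative for $\tau>0$ by Lemma 4.1. I will then argue in two steps. Step 1: $g$ cannot remain positive on all of $(0,\tau_M)$. If it did, then $\dot g\le -g$ (using $u_1>1/2$), so $u_1$ would be non-increasing on $[\tau_0,\tau_M)$ and the bounds $u_2\le u_1^2-1/4\le u_1(\tau_0)^2$ together with $u_3<u_1\le u_1(\tau_0)$ would make all components of $\boldsymbol u$ bounded; by Theorem 4.1, $\tau_M=\infty$. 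Integrating the inequality $\dot g\le -\lambda u_2 u_3$ over $[\tau_0,\infty)$ then forces $\int_{\tau_0}^\infty u_2 u_3\,d\tau<\infty$, but $u_2$ is non-decreasing so $u_2\ge u_2(0)>0$ and $u_3$ is non-decreasing with $u_3(\tau_0)>0$, contradicting integrability. So there exists $\tau_1$ with $g(\tau_1)\le 0$. Step 2: $g$ cannot recross zero from below, because at any point where $g=0$ one has $\dot g=-\lambda u_2u_3<0$. Hence $g(\tau)<0$ for all $\tau>\tau_1$, yielding the required monotonicity and upper bound.

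The main obstacle is the first part: one must identify the exact algebraic relation defining $\hat\lambda$ so that the sign of $\dot w$ at a putative crossing comes out strictly positive — this is the only place in the lemma that is not a routine comparison argument. The second part is essentially a trapping-region argument whose sign pattern is dictated by $V_0<0,\lambda>0$ together with the positivity results of Lemma 4.1.
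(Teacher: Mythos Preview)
Your proposal is correct and follows essentially the same route as the paper. The lower-bound argument is identical: a first-crossing analysis for $w=u_1-u_2^{1/2}/\hat\lambda$, using $u_3<u_1$ and the defining quadratic $2\hat\lambda^2-\lambda\hat\lambda-2=0$ to force $\dot w(\tau^*)>0$ (your algebra is in fact cleaner than the paper's, which contains a minor typo in the coefficient). For the upper bound, both you and the paper use the trapping observation that $g=u_1^2-\tfrac14-u_2$ satisfies $\dot g=-\lambda u_2u_3<0$ at $g=0$; the only difference is in ruling out $g>0$ persisting forever: the paper notes $\ddot u_2>0$ so $u_2\to\infty$ while $u_1$ stays bounded, whereas you bound $\int u_2u_3\,d\tau$ and contradict monotonicity of $u_2,u_3$. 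These are equivalent in spirit and difficulty. One cosmetic point: the sentence ``then $\dot g\le -g$, so $u_1$ would be non-increasing'' is misphrased---$u_1$ non-increasing follows directly from $g>0$ via $\dot u_1=-g$, not from $\dot g\le -g$---but the subsequent bounds are derived correctly.
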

\begin{proof} First note that $u_1>(1/4+u_2)^{1/2}>u_2^{1/2}/\hat{\lambda}$ on some initial interval, where the second inequality holds due to $\hat{\lambda}>1$. The preceding lemma tells us that $u_3>0,\dot{u}_3>0$, from which it follows that $\dot{u}_2>0,\ddot{u}_2>0$ for all $\tau\in(0,\tau_M)$. Since $u_3<u_1$, the $u_i$ are bounded, and solutions therefore exist, while $u_1$ is decreasing. By inspection of \refb{u1}, with $\epsilon=-1$, $u_1$ is decreasing for $u_1>(1/4+u_2)^{1/2}$. Since $\ddot{u}_2>0$, there must exist some $\tau_1\in(0,\tau_M)$ such that $u_1(\tau_1)=(1/4+u_2(\tau_1))^{1/2}$. Note that at $u_1=(1/4+u_2)^{1/2}$ we have
\beq
\label{u1dot0}\frac{d}{d\tau}\left[{u}_1-\left(\frac{1}{4}+u_2\right)^{1/2}\right]=-\frac{\dot{u}_2}{2(1/4+u_2)^{1/2}}<0.
\eeq
$u_1$ is therefore increasing and bounded above by $(1/4+u_2)^{1/2}$ for all $\tau\in(\tau_1,\tau_{M})$. Now consider
\begin{equation}
\label{sqrt u2}\frac{d}{d\tau}\left(\frac{u_2^{1/2}}{\hat{\lambda}}\right)=\frac{\lambda u_2^{1/2}u_3}{2\hat{\lambda}}<\frac{\lambda u_2^{1/2}u_1}{2\hat{\lambda}}.
\end{equation}
Suppose there exists $\tau_{*}$ such that $u_1(\tau_{*})=u_2^{1/2}(\tau_{*})/\hat{\lambda}$. Then, using \refb{u1} and \refb{sqrt u2}
\begin{eqnarray}
\dot{u}_1(\tau_{*})-\left.\frac{d}{d\tau}\left(\frac{u_2^{1/2}}{\hat{\lambda}}\right)\right|_{\tau=\tau_{*}}&>\frac{1}{4}+u_2(\tau_{*})-\frac{u_2(\tau_{*})}{\hat{\lambda}^2}-\frac{\lambda u_2(\tau_{*})}{2\hat{\lambda}^2}\\
\nonumber&=\frac{1}{4}+\left(1-\frac{1}{\hat{\lambda}^2}-\frac{\lambda}{2\hat{\lambda}}\right)u_2(\tau_{*})=\frac{1}{4}>0,
\end{eqnarray}
so $u_1$ cannot cross $u_2^{1/2}/\hat{\lambda}$ from above. \end{proof}
\begin{lemma} If $V_0<0,\lambda>0$, then there exists $\tau_2\in(0,\tau_M)$ such that $u_1(\tau_2)=k^2u_3(\tau_2)$ and $u_1<k^2u_3$ for all $\tau\in(\tau_2,\tau_M)$.\end{lemma}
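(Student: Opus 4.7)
The strategy is to set $w := u_1 - k^2 u_3$ and show that it must cross zero exactly once on $(0,\tau_M)$. Differentiating and using \refb{u1} and \refb{u3} with $\epsilon = -1$, one finds
\beq
\dot{w} = \frac{1-k^2}{4} - u_2 - u_1 w.
\eeq
Since $\lim_{\tau\to 0^+} u_1 = +\infty$ (as noted just after (94)) while $u_3(0) = 0$, we have $w > 0$ on an initial interval.

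I would argue by contradiction that $w$ must vanish before $\tau_M$. Assume $w > 0$ on all of $(0,\tau_M)$; then $-u_1 w \le 0$ by Lemma 4.2, and integrating $\dot{w} \le (1-k^2)/4 - u_2$ from some $\tau_0 \in (\tau_1,\tau_M)$, with $\tau_1$ as in Lemma 4.3, gives
\beq
w(\tau) \le w(\tau_0) + \frac{1-k^2}{4}(\tau - \tau_0) - \int_{\tau_0}^{\tau} u_2(s)\,ds.
\eeq
The goal is to show that the integral diverges as $\tau \to \tau_M^-$, so that $w \to -\infty$ contradicts $w > 0$.

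The divergence $\int_{\tau_0}^{\tau_M} u_2\,ds = +\infty$ is the key step. When $\tau_M = \infty$ it is immediate from $u_2 \ge u_2(0) > 0$. When $\tau_M < \infty$, Theorem 4.1 forces some $|u_i|$ to blow up; the assumption $w > 0$, combined with the bound $u_1 \le \sqrt{1/4 + u_2}$ from Lemma 4.3, yields $u_3 < u_1/k^2 \le \sqrt{1/4 + u_2}/k^2$ on $(\tau_0,\tau_M)$, so only $u_2$ can diverge. Since $\dot{u}_2/u_2 = \lambda u_3$, finite-time blow-up of $u_2$ forces $\int_{\tau_0}^{\tau_M} u_3\,ds = +\infty$. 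Applying the Cauchy--Schwarz inequality on the finite interval $(\tau_0,\tau_M)$ to the estimate $u_3^2 \le (1/4 + u_2)/k^4$ then yields $\int_{\tau_0}^{\tau_M} u_2\,ds = +\infty$, as required.

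Having secured a first zero $\tau_2 \in (0,\tau_M)$ of $w$, the second claim follows from a sign observation. At any $\tau$ with $w(\tau) = 0$ one has $\dot{w}(\tau) = (1-k^2)/4 - u_2(\tau) < 0$ since $k^2 > 2$, so $w$ crosses zero only from above. Thus $w < 0$ immediately past $\tau_2$, and any later first return to zero $\tau_3$ would require $\dot{w}(\tau_3) \ge 0$, again contradicted. Therefore $u_1 < k^2 u_3$ throughout $(\tau_2,\tau_M)$. The main obstacle is the divergence of $\int u_2$ in the $\tau_M < \infty$ case, where the Cauchy--Schwarz coupling of $u_2$ and $u_3$ is essential.
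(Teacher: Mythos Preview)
Your proof is correct and shares the paper's architecture: set $w=u_1-k^2u_3$, compute $\dot w=(1-k^2)/4-u_2-u_1w$, argue by contradiction that $w$ must vanish via divergence of $\int u_2$, and note $\dot w<0$ at any zero of $w$ so the crossing is unique and the inequality persists. The one substantive difference is how $\int_{\tau_0}^{\tau_M}u_2=\infty$ is obtained when $\tau_M<\infty$. The paper uses the \emph{lower} bound $u_1>u_2^{1/2}/\hat\lambda$ from Lemma~4.2 to get $\dot u_1<1/4+(\lambda/2)u_2$, integrates, and invokes $u_1\to\infty$ directly. You instead use only the \emph{upper} bound on $u_1$: from $\log u_2\to\infty$ deduce $\int u_3=\infty$, bound $u_3^2\le(1/4+u_2)/k^4$, and close with Cauchy--Schwarz on the finite interval. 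Both routes are short; yours avoids the specific constant $\hat\lambda$, while the paper's avoids the extra integral inequality. You are also more explicit than the paper about the persistence claim $w<0$ on $(\tau_2,\tau_M)$. One housekeeping correction: your references to ``Lemma~4.3'' for the bound $u_1\le\sqrt{1/4+u_2}$ and the point $\tau_1$ should be to Lemma~4.2 (Lemma~4.3 is the result you are proving).
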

\begin{proof} Consider
\begin{equation}
\label{u1-ku3}\dot{u}_1-k^2\dot{u}_3 = -\frac{k^2-1}{4}- u_2-u_1\left(u_1-k^2u_3\right)\le -\frac{1}{4}-u_2,
\end{equation}
provided $u_1>k^2u_3$, which holds initially. If $\tau_M$ is infinite, then the result must follow. If not, then by Theorem 4.1, and since $u_2^{1/2}/\hat{\lambda}<u_1<(1/4+u_2)^{1/2}$ and $u_3<u_1$ for $\tau\in(\tau_1,\tau_M)$, we must have $\lms u_2=\lms u_1=\infty$. We know from Lemma 4.2 that $u_1>u_2^{1/2}/\hat{\lambda}$ for all $0<\tau<\tau_M$, which gives $\dot{u}_1<1/4+\lambda u_2/2$, since $1-1/\hat{\lambda}^2=\lambda/2$. We then have
\beq
\label{int u2}\lim_{\tau\rightarrow\tau_M} \int_{\tau_1}^\tau\frac{\lambda u_2}{2}\,d\tau'>\lim_{\tau\rightarrow\tau_M}\left(u_1-u_1(\tau_1) -\frac{\tau-\tau_1}{4}\right)=\infty.
\eeq
Suppose then that $u_1>k^2u_3$ for all $\tau\in(0,\tau_M)$. Integrating \refb{u1-ku3} and taking the limit gives $\lms (u_1-k^2u_3)=-\infty$, so we have a contradiction. 
\end{proof}
\begin{lemma} If $V_0<0$ and $\lambda>0$, then $\tau_M<\infty$ and for $i=\{1,2,3\}$, 
\beq\lim_{\tau\rightarrow\tau_M} u_i=\infty.\eeq \end{lemma}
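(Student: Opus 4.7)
The plan is to force a finite-time blow-up of $u_2$ by combining the lower bounds on $u_1$ and $u_3$ that were established in the preceding lemmas, and then to conclude that $u_1$ and $u_3$ must diverge as well.

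First I would work on the interval $\tau\in(\tau_*,\tau_M)$ where $\tau_*=\max\{\tau_1,\tau_2\}$, so that the conclusions of Lemmas 4.2 and 4.3 are simultaneously in force. On this interval we have the two key bounds
\beq
u_1>\frac{u_2^{1/2}}{\hat{\lambda}},\qquad u_3>\frac{u_1}{k^2}.
\eeq
Chaining these gives $u_3>u_2^{1/2}/(k^2\hat{\lambda})$, and substituting into \refb{u2} yields the differential inequality
\beq
\dot{u}_2=\lambda u_2u_3> \frac{\lambda}{k^2\hat{\lambda}}\,u_2^{3/2}.
\eeq

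Next I would integrate this scalar inequality explicitly. Setting $c=\lambda/(k^2\hat{\lambda})>0$ and separating variables gives
\beq
\frac{2}{\sqrt{u_2(\tau_*)}}-\frac{2}{\sqrt{u_2(\tau)}}\ge c(\tau-\tau_*),
\eeq
for all $\tau\in(\tau_*,\tau_M)$ on which the solution is finite. The right-hand side reaches $2/\sqrt{u_2(\tau_*)}$ at $\tau=\tau_*+2/(c\sqrt{u_2(\tau_*)})$, so $u_2$ cannot remain finite beyond this time. Hence $\tau_M$ is finite and $\lms u_2=+\infty$.

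Finally, the divergences of $u_1$ and $u_3$ follow directly from the same two inequalities: $u_1>u_2^{1/2}/\hat{\lambda}\to\infty$, and then $u_3>u_1/k^2\to\infty$. (Alternatively, Theorem 4.1 combined with the monotonicity information in Lemmas 4.1 and 4.2 immediately propagates the blow-up of $u_2$ to the other two components.) The main subtlety is simply the bookkeeping — making sure one works past $\tau_*=\max\{\tau_1,\tau_2\}$ so that both bounds from Lemmas 4.2 and 4.3 apply — after which the Riccati-type blow-up for $u_2^{3/2}$ does all of the work and no further estimates are required.
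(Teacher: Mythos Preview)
Your proof is correct and is essentially the same argument as the paper's: chain the bounds $u_3>u_1/k^2$ (Lemma 4.3) and $u_1>u_2^{1/2}/\hat\lambda$ (Lemma 4.2) to obtain $\dot u_2>\frac{\lambda}{k^2\hat\lambda}u_2^{3/2}$, integrate the resulting Riccati-type inequality to force finite-time blow-up of $u_2$, and then read off the blow-up of $u_1$ and $u_3$ from the same two inequalities. The only cosmetic difference is that you integrate from $\tau_*=\max\{\tau_1,\tau_2\}$, whereas the paper integrates from $\tau_2$ (permissible because the bound $u_1>u_2^{1/2}/\hat\lambda$ in Lemma 4.2 holds on all of $(0,\tau_M)$, not just past $\tau_1$).
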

\begin{proof}
Using the previous lemma, 
\begin{equation}
\dot{u}_2>\frac{\lambda u_1u_2}{k^2}>\frac{\lambda u_2^{3/2}}{k^2\hat{\lambda}},
\end{equation} 
for $\tau\in(\tau_2,\tau_M)$. Integrating over $[\tau_2,\tau]$ and rearranging we find
\begin{equation}
u_2>\left(\frac{1}{u^{1/2}(\tau_2)}-\frac{\lambda(\tau-\tau_2)}{2k^2\hat{\lambda}}\right)^{-2},
\end{equation}
so we have $\tau_M\le k^2\hat{\lambda}u_2^{-1/2}(\tau_2)/\lambda+\tau_2$ and $\lms u_2=\infty$. It follows directly from Lemma 4.2 that $\lms u_1=\infty$. Using Lemmas 4.1 and 4.3 complete the proof.  \end{proof} 
\noindent\begin{proposition}  For $V_0<0$, $\lambda>0$, the surface corresponding to $\tau=\tau_M$ represents future null infinity and the Ricci scalar decays to zero there.
\end{proposition}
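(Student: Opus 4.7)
The plan has two parts: show that outgoing radial null rays reach $\tau=\tau_M$ at infinite affine parameter, and show that the Ricci scalar tends to zero there.

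For the affine parameter, consider outgoing radial null rays, that is, curves of constant $u=u_0<0$, $\theta$ and $z$. For the metric~\refb{LE eta} one computes $\Gamma^v_{vv}=2(\gamma'+\phi')/u_0$, so the geodesic equation for $A=dv/d\lambda$ integrates to $A\propto e^{-2(\gamma+\phi)}$. Changing variables via $v=u_0 e^{-\tau}$ and using~\refb{EFE tau A} to eliminate $\gamma+\phi$ then gives
\[
\frac{d\lambda}{d\tau}=C\,e^{k^2l/2-\tau/2},\qquad C>0.
\]
To show this integrates to $+\infty$ on $[0,\tau_M)$, a quantitative rate on $u_2$ is needed. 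Bootstrapping the differential-inequality argument from the proof of Lemma~4.4, applied with an arbitrary starting point $\tau_3\in(\tau_2,\tau_M)$ and using the fact that the associated comparison blow-up time must be at least $\tau_M$, yields
\[
u_2(\tau)\ge \frac{4k^4\hat{\lambda}^2}{\lambda^2(\tau_M-\tau)^2}
\]
near $\tau_M$. Since $u_2=|V_0|e^{\lambda l}$ with $\lambda=k^2/2-1>0$, this gives $e^{k^2l/2}=(u_2/|V_0|)^{k^2/(k^2-2)}\gtrsim (\tau_M-\tau)^{-2k^2/(k^2-2)}$. As $k^2>2$, the exponent exceeds $2$, so the integral diverges.

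For the Ricci scalar, take the trace of the Einstein equations with energy--momentum~\refb{EMT}, which gives $R_{\mathrm{scalar}}=\nabla^c\psi\nabla_c\psi+4V$. Substituting the self-similar forms~\refb{SS SF}, using~\refb{EFE tau A} to replace $2(\gamma+\phi)$ by $k^2l/2+\tau/2+c_1$, and expressing $\eta F'$ through $u_3$, a direct calculation gives
\[
R_{\mathrm{scalar}}=\frac{e^{-c_1}e^{\tau/2}}{|u_0|}\!\left[\frac{k^2}{2}e^{-k^2l/2}\!\left(u_3^2-\frac{1}{4}\right)+4V_0\,e^{-l}\right].
\]
The overall factor $e^{\tau/2}/|u_0|$ is bounded on $[0,\tau_M]$. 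The term $4V_0 e^{-l}$ vanishes as $l\to\infty$. For the first bracketed term, Lemma~4.2 gives $u_3^2<u_1^2\le 1/4+u_2$, so
\[
e^{-k^2l/2}u_3^2\lesssim |V_0|^{k^2/(k^2-2)}\,u_2^{1-k^2/(k^2-2)}=|V_0|^{k^2/(k^2-2)}\,u_2^{-2/(k^2-2)},
\]
which tends to $0$ since $k^2>2$ and $u_2\to\infty$.

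The main obstacle is obtaining the sharp rate $u_2\gtrsim(\tau_M-\tau)^{-2}$ needed to force divergence of the affine-parameter integral. The direct estimate from Lemma~4.4 only bounds $u_2$ below by a function whose comparison blow-up time $\tau^*$ could in principle exceed $\tau_M$; the fix is to re-run the Riccati-type comparison starting from each $\tau_3$ close to $\tau_M$, observing that the blow-up time of the resulting comparison solution cannot precede $\tau_M$. Once this estimate is secured, the Ricci-scalar analysis is an algebraic race between the exponential decay of $e^{-k^2l/2}$ and the polynomial-in-$u_2$ growth of $u_3^2$, in which the strict inequality $k^2/(k^2-2)>1$ (equivalently $\lambda>0$) ensures that decay wins.
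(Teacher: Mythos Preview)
Your Ricci-scalar argument is correct and essentially matches the paper's: both compute $\mathcal{R}$ in the self-similar variables, use $u_3^2<1/4+u_2$ (Lemma~4.2), and observe that $e^{-k^2l/2}u_2\sim u_2^{-2/(k^2-2)}\to0$.

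The affine-parameter argument, however, has a genuine error. Your bootstrap of Lemma~4.4 yields the inequality in the wrong direction. From $\dot u_2>\frac{\lambda}{k^2\hat\lambda}u_2^{3/2}$ one gets the lower bound
\[
u_2(\tau)>\Bigl(u_2(\tau_3)^{-1/2}-\tfrac{\lambda}{2k^2\hat\lambda}(\tau-\tau_3)\Bigr)^{-2},
\]
whose blow-up time $\tau^*(\tau_3)=\tau_3+\tfrac{2k^2\hat\lambda}{\lambda}\,u_2(\tau_3)^{-1/2}$ therefore satisfies $\tau^*(\tau_3)\ge\tau_M$ (since $u_2$ is finite on $[0,\tau_M)$). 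But rearranging $\tau^*(\tau_3)\ge\tau_M$ gives
\[
u_2(\tau_3)\le\frac{4k^4\hat\lambda^2}{\lambda^2(\tau_M-\tau_3)^2},
\]
an \emph{upper} bound, not the lower bound you claim. An upper bound on $u_2$ cannot force divergence of $\int e^{(k^2l-\tau)/2}\,d\tau$.

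The paper sidesteps the rate question entirely. In the proof of Lemma~4.3 it is shown (via $\dot u_1<\tfrac14+\tfrac{\lambda}{2}u_2$ and $u_1\to\infty$) that $\int^{\tau_M} u_2\,d\tau=\infty$. Since $e^{k^2l/2}=|V_0|^{-1}e^{l}u_2$ and $l\to+\infty$, one has $e^{k^2l/2}>u_2$ near $\tau_M$; with $e^{-\tau/2}\ge e^{-\tau_M/2}>0$, this immediately gives $\int e^{(k^2l-\tau)/2}\,d\tau=\infty$. No pointwise blow-up rate for $u_2$ is needed.
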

\begin{proof} We aim to show that along outgoing radial null geodesics, an infinite amount of affine parameter time is required to reach the surface $\tau=\tau_M$. These geodesics correspond to the lines $u=u_0$ where $u_0$ is constant. We look for solutions to the equation for null geodesics, which reduces to 
\begin{equation}
\ddot{v}+(2\bar{\gamma}_v+2\bar{\phi}_v)\dot{v}^2 = 0,
\end{equation}
where here the dot denotes differentiation with respect to an affine parameter $\mu$, which is chosen such that $\dot{v}>0$ and $\mu(\tau=0)=0$. Dividing by $\dot{v}$ and integrating, we find
\begin{equation}
e^{2\bar{\gamma}+2\bar{\phi}}\dot{v} = \frac{1}{|u_0|}e^{2\gamma+2\phi}\dot{v}=C,
\end{equation}
with $C>0$.
Substituting $2\gamma+2\phi$ using \refb{EFE tau A} gives
\begin{equation}
\frac{1}{|u_0|}e^{(k^2l+\tau)/2}\dot{v}= C.
\end{equation}
We also have $v=u_0\eta=u_0e^{-\tau}$, and thus $dv=-u_0e^{-\tau}d\tau$, along the geodesics. Integrating then leads to 
\begin{equation}
\label{geo}\frac{1}{|u_0|}\int^v_{v_0} e^{(k^2l+\tau)/2}dv'=\int_0^\tau e^{(k^2l-\tau')/2}d\tau'= C\mu.
\end{equation}
Clearly $e^{k^2l/2}=|V_0|^{-1}e^{l}u_2>u_2$ holds for $\tau$ sufficiently close to $\tau_M$. Using \refb{int u2} then gives $\lim_{\tau\rightarrow\tau_M}\mu=\infty$. 
This confirms that the surface $\tau=\tau_M$ corresponds to radial null infinity.\\
To demonstrate the decay of the Ricci scalar, which we label $\mathcal{R}$, it is convenient to consider the trace of the energy-momentum tensor:
\bea
\nonumber g^{ab}T_{ab}=-2g^{01}\psi_u\psi_v-4V=2|u|e^{-2\gamma-2\phi}\left(-\frac{\eta F'}{u}+\frac{k}{2u}\right)\frac{F'}{u}-\frac{4\bar{V}_0e^{-2F/k}}{|u|}\\
\nonumber=\frac{2e^{-2\gamma-2\phi}}{|u|\eta}\left[\left(-\frac{k\eta l'}{2}+\frac{k}{4}\right)\left(\frac{k\eta l'}{2}+\frac{k}{4}\right)-2V_0e^{\lambda l}\right]\\
\nonumber=\frac{e^{-k^2l/2+\tau/2-c_1}}{|u|}\left(\frac{k^2}{2}\left(\frac{1}{4}-\dot{l}^2\right)-4V_0e^{\lambda l}\right)\\
\label{Ricci}=\frac{e^{\tau/2-c_1}}{|u|}\left(\frac{k^2}{2}\left(\frac{1}{4}-\dot{l}^2\right)e^{-k^2l/2}-4V_0e^{-l}\right)=-\mathcal{R}.
\eea
Note that $1/4-\dot{l}^2<0$ approaching $\tau_M$. Using $\dot{l}=u_3<(1/4+u_2)^{1/2}=(1/4-V_0e^{\lambda l})^{1/2}$ we have $V_0e^{-l}<e^{-k^2l/2}(1/4-\dot{l}^2)<0$. Hence, both terms in the bracket decay to zero as $\tau\rightarrow\tau_M$. Since $\tau_M<\infty$, it follows that $\lim_{\tau\rightarrow\tau_M}\mathcal{R}=0$. \end{proof}
\subsection{$V_0<0,\lambda <0$}
Here we find that the surface $\mathcal{N}_-$ corresponds to a fixed point of the system, is regular and is reached by radial null rays in finite affine time. These are some of the solutions which may be extended into region \textbf{II}.
\begin{lemma} If $V_0<0$ and $\lambda<0$, then $\tau_M=+\infty$ and 
\begin{equation}\lim_{\tau\rightarrow\infty}(u_1,u_2,u_3)=\left(\frac{1}{2},0,\frac{1}{2}\right).
\end{equation}\end{lemma}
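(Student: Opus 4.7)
My plan is to exploit the structure of the autonomous system \refb{u1}--\refb{u3}, which admits $(1/2,0,1/2)$ as its unique equilibrium in the physical region, and to combine sharp uniform bounds with Theorem~4.1 to get global existence, then use a linear attractor argument for convergence. Note that the Jacobian of the right-hand side of \refb{u1}--\refb{u3} at $(1/2,0,1/2)$ is triangular with eigenvalues $-1,\,\lambda/2,\,-1/2$, all strictly negative when $\lambda<0$, so the equilibrium is a stable node; the task is to capture the trajectory in its basin of attraction.

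\textbf{Bounds and global existence.} Since $\dot u_3=\frac{1}{4}+\frac{2u_2}{k^2}>0$ whenever $u_3=0$ and $u_2>0$, we have $u_3>0$ throughout $(0,\tau_M)$; combined with $\epsilon=-1$ and $\lambda<0$, equation \refb{u2} yields $\dot u_2<0$, so $u_2$ is strictly decreasing and $0<u_2(\tau)\leq u_2(0)$. Let $M=(1/4+u_2(0))^{1/2}$. From \refb{u1} one has $\dot u_1\leq M^2-u_1^2$, which forces $u_1$ below $M$ in finite time (otherwise $u_1\to L_1\geq M$ with $L_1^2=1/4+L_2$, contradicting $L_2<u_2(0)$), and the boundary identity $\dot u_1|_{u_1=M}=u_2-u_2(0)\leq 0$ traps it there; similarly $\dot u_1|_{u_1=1/2}=u_2>0$ keeps $u_1>1/2$. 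Exploiting $1/2<u_1\leq M$, at any point where $u_3\leq 1/(4M)$ we have $u_1u_3\leq 1/4$ and hence $\dot u_3\geq 2u_2/k^2>0$; this both prevents $u_3$ from staying below $1/(4M)$ forever and precludes a crossing of $1/(4M)$ from above, so $u_3\geq 1/(4M)$ for all sufficiently large $\tau$. An upper bound on $u_3$ follows from $\dot u_3\leq 1/4+2u_2(0)/k^2-u_3/2$. With all three components bounded, Theorem~4.1 delivers $\tau_M=+\infty$.

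\textbf{Convergence.} The bound $u_3\geq 1/(4M)$ gives $\int_0^\infty u_3\,d\tau=\infty$, and solving \refb{u2} explicitly as $u_2(\tau)=u_2(0)\exp\!\bigl(\lambda\int_0^\tau u_3\,d\tau'\bigr)$ forces $u_2\to 0$. Setting $v=u_1-1/2\geq 0$, equation \refb{u1} becomes $\dot v=u_2-v(1+v)$; for any $\epsilon\in(0,1)$, once $u_2<\epsilon$ the level $v=\sqrt{\epsilon}$ is impassable from below (since $\dot v<0$ there and everywhere above), so $\limsup v\leq\sqrt{\epsilon}$ and letting $\epsilon\to 0$ gives $u_1\to 1/2$. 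Finally $u_3$ satisfies the linear non-autonomous equation $\dot u_3+u_1u_3=1/4+2u_2/k^2$ in which $u_1\geq 1/2$ uniformly and the right-hand side tends to $1/4$; variation of parameters against the integrating factor $\exp\!\bigl(\int_0^\tau u_1\bigr)$ yields $u_3\to 1/2$.

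\textbf{Main obstacle.} The crux is establishing the uniform positive lower bound $u_3\geq 1/(4M)$: this single estimate both rules out finite-time blow-up and drives $u_2\to 0$, unlocking the convergence of $u_1$ and $u_3$. Its proof depends delicately on the sharp upper bound $u_1\leq M$ obtained from strict monotonicity of $u_2$; without that ordered pair of estimates, $u_3$ could in principle stall at an arbitrarily small positive level while $u_2$ remained bounded away from zero, obstructing the approach to the equilibrium.
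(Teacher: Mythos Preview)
Your proof is correct and reaches the same conclusion, but the route differs from the paper's. The paper argues monotonicity: it shows $u_1$ stays above the moving barrier $(1/4+u_2)^{1/2}$ (so $\dot u_1\le 0$ throughout), then establishes that $u_1-u_3$ can change sign at most once, from which $\dot u_3$ eventually has fixed sign; with all three components bounded and eventually monotone, the trajectory must converge to a fixed point, and $(1/2,0,1/2)$ is the only one compatible with the bounds. You instead run a cascade: uniform bounds give $\tau_M=\infty$; the eventual lower bound on $u_3$ forces $\int_0^\infty u_3\,d\tau=\infty$ and hence $u_2\to 0$; then $u_1\to 1/2$ by a barrier argument on $v=u_1-1/2$; and finally $u_3\to 1/2$ by integrating-factor/L'H\^opital analysis of the linear equation for $u_3$. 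Your approach is more explicit and does not require identifying eventual monotonicity of $u_3$; the paper's is slicker once one spots the barrier $(1/4+u_2)^{1/2}$ that makes $u_1$ globally monotone.

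Two small points of presentation. First, your argument that $u_1$ drops below $M$ ``in finite time (otherwise $u_1\to L_1\ge M$ \ldots)'' tacitly uses $\tau_M=\infty$, which you only state afterwards; the circularity dissolves if you note that while $u_1\ge M$ one already has $u_1$ decreasing (hence bounded by $u_1(\tau_*)$), and the bounds on $u_2,u_3$ require only $u_1>1/2$, so Theorem~4.1 applies in that hypothetical. Second, the inequality $\dot u_3\ge 2u_2/k^2>0$ when $u_3\le 1/(4M)$ shows $u_3$ is increasing there but not by itself that $u_3$ actually reaches $1/(4M)$, since $u_2$ could decay. This is harmless for your purposes: if $u_3$ stayed below $1/(4M)$ forever it would be increasing to a positive limit, so $\int_0^\infty u_3\,d\tau=\infty$ holds in either case, which is all you need for $u_2\to 0$. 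Your identification of the $u_3$ lower bound as the ``crux'' is therefore slightly overstated; the genuinely indispensable step is $u_1>1/2$, which drives both the upper bound on $u_3$ and the integrating-factor convergence argument.
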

\begin{proof} We have seen in the proof of Lemma 4.1 that if $V_0<0$, then $u_3>0$ for all $\tau\in(0,\tau_{M})$. Equation \refb{u2} with $\lambda<0$ then tells us that $u_2$ is monotonically decreasing on $(0,\tau_M)$. Equation \refb{u1dot0} then tells us that $u_1$ cannot cross $(1/4+u_2)^{1/2}$ from above. Since $\dot{u}_1<0$ if $u_1>(1/4+u_2)^{1/2}$, $u_1$ is decreasing and bounded below by this term for all $\tau\in(0,\tau_{M})$. For any $\tau_*\in(0,\tau_{M})$ we then have $u_2<u_2(\tau_*)$ and $1/2<u_1<u_1(\tau_*)$ for all $\tau\in(\tau_*,\tau_{M})$. Hence, for all $\tau\in(\tau_*,\tau_M)$ we have
\beq
\frac{1}{4}-u_1(\tau_*)u_3<\dot{u}_3<\frac{1}{4}+\frac{2u_2(\tau_*)}{k^2}-\frac{u_3}{2},
\eeq
from which it follows that 
\beq
\min\{1/4u_1(\tau_*),u_3(\tau_*)\}<u_3<\max\{1/2+4u_2(\tau_*)/k^2,u_3(\tau_*)\}.
\eeq
We have thus far proven that each of the $u_i$ are bounded above and below for all $\tau\in(\tau_*,\tau_M)$ and so $\tau_M=+\infty$ by Theorem 4.1. Now, it follows from \refb{u1-u3} that $u_1-u_3$ can only change sign once. Recalling $\ddot{u}_3=(u_1-u_3)u_1u_3$ at $\dot{u}_3=0$, $\dot{u}_3$ can only change sign a finite number of times and so $u_3$ must be monotone as $\tau\rightarrow\infty$. Hence, each of the $u_i$ are bounded and monotone in the limit as $\tau\rightarrow\infty$ and so the system must evolve to a fixed point. It is easily checked that the only fixed point of the system \refb{u1}-\refb{u3} consistent with the given analysis is $(1/2,0,1/2)$. 
\end{proof}
We now prove that the metric is regular in the limit as $\tau\rightarrow\infty$ in this case. The following theorem, which may be found in Chapter 9 of \cite{Hartman}, proves useful. 
\begin{theorem} In the differential equation
\beq
\label{ODE}{\boldsymbol{x}}'(t)=E{\boldsymbol{x}}+F({\boldsymbol{x}}),
\eeq
let $F(\boldsymbol{x})$ be of class $C^1$ with $F(0)=0, \partial_{\boldsymbol{x}}F(0)=0$. Let the constant matrix $E$ possess $d>0$ eigenvalues having negative real parts, say, $d_i$ eigenvalues with real parts equal to $\alpha_i$, where $\alpha_1<\ldots<\alpha_r<0$ and $d_1+\ldots+d_r=d,$ whereas the other eigenvalues, if any, have non-positive real parts. If $\alpha_r<\omega<0,$ then \refb{ODE} has solutions ${\boldsymbol{x}=\boldsymbol{x}}(t)$$\neq0$, satisfying
\beq 
\label{H1}||{\boldsymbol{x}}(t)||e^{\omega t} = 0,\qquad \mbox{as}\qquad t\rightarrow+\infty,
\eeq
where $||{\boldsymbol{x}}(t)||$ denotes the Euclidean norm, and any such solution satisfies
\beq 
\label{H2}\lim_{t\rightarrow+\infty}t^{-1}\log||{\boldsymbol{x}}(t)||=\alpha_i,\qquad\mbox{for some }i.
\eeq\end{theorem}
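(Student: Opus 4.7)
The plan is to establish Theorem 4.2 by constructing the local stable manifold of the origin via an integral-equation fixed point argument, and then refining the resulting decay estimate to read off the exact exponential type. Since the hypotheses $F(0)=0$ and $\partial_{\boldsymbol{x}}F(0)=0$ force $F(\boldsymbol{x})=O(\|\boldsymbol{x}\|^2)$ near the origin, the nonlinearity is subdominant to the linear flow, so standard perturbative arguments apply.

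First I would decompose $\mathbb{R}^n$ using the spectral projections of $E$: let $P_s$ project onto the direct sum of generalised eigenspaces for eigenvalues with negative real parts, and let $P_0=I-P_s$ be the complementary projection. A solution obeying \refb{H1} for some $\alpha_r<\omega<0$ must vanish on the $P_0$-subspace at $+\infty$, so by variation of parameters it satisfies
\[
\boldsymbol{x}(t)=e^{Et}\boldsymbol{\xi}+\int_0^t e^{E(t-s)}P_s F(\boldsymbol{x}(s))\,ds-\int_t^{\infty} e^{E(t-s)}P_0 F(\boldsymbol{x}(s))\,ds,
\]
for some $\boldsymbol{\xi}$ in the stable subspace. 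Choose $\omega$ with $\alpha_r<\omega<0$ lying strictly to the left of the real part of any eigenvalue in the complementary spectrum. On the Banach space of continuous curves $[T,\infty)\to\mathbb{R}^n$ with weighted norm $\|\boldsymbol{x}\|_\omega:=\sup_{t\ge T}e^{-\omega t}\|\boldsymbol{x}(t)\|$, both integrals converge, and the standard exponential bounds on $e^{Et}P_s$ and $e^{Et}P_0$ combined with the quadratic estimate on $F$ show that the right-hand side defines a contraction on a sufficiently small ball about the origin, provided $\|\boldsymbol{\xi}\|$ is small and $T$ is large. Banach's fixed point theorem then produces a continuous family of nontrivial solutions parametrised by $\boldsymbol{\xi}$, each satisfying \refb{H1}.

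The second and more delicate step is to sharpen \refb{H1} to the precise rate \refb{H2}. For each $i$ let $P^{(i)}$ denote the projection onto the generalised eigenspace whose eigenvalues have real part $\alpha_i$, and set $\boldsymbol{x}^{(i)}(t)=P^{(i)}\boldsymbol{x}(t)$. Substituting the known decay $\boldsymbol{x}(t)=O(e^{\omega t})$ back into the integral equation and applying Gronwall-type bounds to each spectral piece separately upgrades the decay component-by-component: the faster pieces are pinned to their natural linear rates $e^{\alpha_j t}$ (up to polynomial Jordan-block factors, which disappear upon taking $t^{-1}\log$), and the slowest surviving component dictates the overall rate. Identifying the minimal index $i$ for which $\boldsymbol{x}^{(i)}$ is nonzero at leading order then yields $\lim_{t\to+\infty}t^{-1}\log\|\boldsymbol{x}(t)\|=\alpha_i$, which is \refb{H2}.

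The main obstacle will be executing the bootstrap in the second step cleanly: one must track how the quadratic remainder in the integral equation couples spectral components across the different eigenvalue groups, and rule out the degenerate scenario in which every $\boldsymbol{x}^{(i)}$ decays strictly faster than $e^{\alpha_i t}$, which via the stable-manifold parametrisation would force $\boldsymbol{\xi}=0$ and hence the trivial solution. The quantitative contraction estimates from the first step feed directly into these finer bounds, so the two steps are tightly intertwined and must be carried out with a consistent choice of $\omega$ and of neighbourhood of the origin.
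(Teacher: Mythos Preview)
The paper does not prove this theorem at all: it is quoted verbatim as a known result from Chapter~9 of Hartman's \emph{Ordinary Differential Equations} (reference~\cite{Hartman} in the paper), and the statement is followed only by an end-of-statement marker, not a proof. So there is no argument in the paper to compare against.

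That said, your sketch is a faithful outline of the standard Lyapunov--Perron construction that underlies Hartman's proof: the splitting into stable and complementary projections, the integral-equation reformulation with the forward/backward split of the Duhamel term, and the contraction in the weighted sup norm are exactly the ingredients used to build the local stable manifold and obtain the coarse decay \refb{H1}. Your second step, bootstrapping to the sharp rate \refb{H2} by projecting onto the individual generalised eigenspaces and iterating the decay estimate, is also the correct strategy, and your identification of the delicate point---ruling out that every spectral component decays strictly faster than its linear rate unless the solution is trivial---is precisely where the work lies in Hartman's treatment. If you were to flesh this out fully you would recover a proof equivalent to the textbook one; for the purposes of this paper, however, the authors simply invoke the result and apply it.
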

\hbox{}\hfill$\square$\\
%
\begin{proposition} If $V_0<0$ and $\lambda<0$, then the metric is regular in the limit as $\tau\rightarrow\infty$, i.e. on $\mathcal{N}_-$, and outgoing radial null rays reach $\mathcal{N}_-$ in finite parameter time.
\end{proposition}
\begin{proof} 
We define a new system of variables via
\beq
\hat{\boldsymbol{u}}=(\hat{u}_1,\hat{u}_2,\hat{u}_3), \qquad \begin{array}{ll} \hat{u}_1=u_1-\frac{1}{2}\\\hat{u}_2=u_2\\\hat{u}_3=u_3-\frac{1}{2} \end{array}
\eeq
Then it is easy to check that the $\hat{\boldsymbol{u}}$-system of equations is of the form \refb{ODE}, satisfying $F(0)=0,\partial_{\boldsymbol{x}}F(0)=0$, where the matrix
\beq
 E=\left(\begin{array}{ccc}-1&-\epsilon&0\\
0&\lambda/{2}&0\\
-1/{2}&-{2\epsilon}/{k^2}&-{1}/{2}\end{array}\right)
\eeq
has 3 negative eigenvalues, $\lambda/2-,1/2$ and $-1$, of which $\lambda/2$ is the greatest. 
Using \refb{H2} and $\alpha_i\le\lambda/2$, for any $\eps>0$ there exists $T(\eps)$ such that $|\hat{u}_1|\le||\hat{\boldsymbol{u}}||<e^{(\lambda/2+\eps)\tau}$. Note that $\hat{u}_1=\dot{S}/S$, and so
\beq
\label{Sdot/S}-e^{(\lambda/2+\eps)\tau}<\frac{\dot{S}}{S}<e^{(\lambda/2+\eps)\tau}
\eeq
for $\tau>T(\eps)$. Integrating and taking the limit $\tau\rightarrow\infty$ then shows that $0<\lm S<+\infty$. Rearranging \refb{Sphidot} we have
\beq
2\dot{\phi}=\frac{e^{-\tau/2}}{S}-\frac{\dot{S}}{S}, 
\eeq
which may be integrated using \refb{Sdot/S} to show $\lm|\phi|<+\infty$. Hence the metric components $g_{\theta\theta}=|u|e^{2\phi}S^2$ and $g_{zz}=|u|e^{-2\phi}$ are regular on $\cal{N}_-$. Notice, however, that $2\gamma\sim (k^2/4+1/2)\tau$ as $\tau\rightarrow\infty$, by \refb{EFE tau A}, and so the component $g_{uv}=|u|^{-1}e^{2\gamma+2\phi}$ blows up at $\cal{N}_-$. This turns out to be a coordinate singularity and may be avoided by making the transformation $|v|\rightarrow\bar{v}=|v|^{-\lambda/2}$. It is straightforward to check that the metric component is then given by $g_{u\bar{v}}=|u|^{-1}|v|^{1+\lambda/2}e^{2\gamma+2\phi}$. Note that $v=u\eta=ue^{-\tau}$ and $1+\lambda/2=k^2/4+1/2$, so the metric is well behaved in this coordinate system. \\
We also have $e^{(k^2l-\tau)/2}\sim e^{\lambda\tau/2}$ as $\tau\rightarrow+\infty$ and so it follows from \refb{geo} that $\lm\mu<+\infty$.
\end{proof}
\subsection{$V_0>0, \lambda<0$}
There are three subcases here, distinguished the sign of $u_2(0)-k^2/8$. When negative, the solutions have a similar structure to those outlined in the previous section. In the positive case, we have a finite interval of existence and a singularity at $\tau_M$. We deal with the case $u_2(0)=k^2/8$ in section 5.2.
\begin{lemma} If $V_0>0,\lambda<0$ and $u_2(0)<k^2/8$, then $\tau_M=\infty$ and
\begin{equation}\lim_{\tau\rightarrow\infty}(u_1,u_2,u_3)=\left(\frac{1}{2},0,\frac{1}{2}\right).
\end{equation}\end{lemma}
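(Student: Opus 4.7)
The plan is to follow the blueprint of Lemma 4.5. The crucial observation is that the hypothesis $u_2(0)<k^2/8$ makes $\dot u_3(0)=\frac{1}{4}-\frac{2u_2(0)}{k^2}>0$, which is just enough to force $u_3$ to become and stay positive and drive $u_2$ monotonically to zero in spite of the sign change $\epsilon=+1$.

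First I would establish, by a bootstrap, that $u_3>0$ on $(0,\tau_M)$ and $u_2$ is strictly decreasing there. Since $u_3(0)=0$ and $\dot u_3(0)>0$, we have $u_3>0$ on a right-neighbourhood of $0$, so $\dot u_2=\lambda u_2u_3<0$ there (as $\lambda<0$ and $u_2>0$) and $u_2<u_2(0)<k^2/8$. If $u_3$ were to return to zero at some later $\tau_0$, monotonicity of $u_2$ would give $\dot u_3(\tau_0)=\frac{1}{4}-\frac{2u_2(\tau_0)}{k^2}>0$, contradicting approach from above. Thus $u_3>0$, $\dot u_2<0$ and $0<u_2<u_2(0)<k^2/8<1/4$ throughout $(0,\tau_M)$ (the last inequality because $k^2<2$).

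Next I would obtain uniform bounds in order to invoke Theorem 4.1. With $\epsilon=+1$ and $u_2<1/4$, equation \refb{u1} factors as $\dot u_1=(\sqrt{1/4-u_2}-u_1)(\sqrt{1/4-u_2}+u_1)$, exhibiting the time-dependent attractor $u_1=\sqrt{1/4-u_2(\tau)}$. Since $\dot u_1|_{u_1=0}=\frac{1}{4}-u_2>0$ one has $u_1>0$; since $\dot u_1<0$ whenever $u_1>\sqrt{1/4-u_2}$ and this curve never exceeds $1/2$, $u_1$ is bounded above on $[\tau_*,\tau_M)$ by $\max\{1/2,u_1(\tau_*)\}$; and because any local minimum of $u_1$ must lie on the attractor curve with $u_2<u_2(0)$, $u_1$ is bounded below by $c:=\min\{u_1(\tau_*),\sqrt{1/4-u_2(0)}\}>0$. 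Plugging this positive lower bound into the $u_3$ equation gives $\dot u_3\leq \frac{1}{4}-cu_3$, hence a uniform upper bound on $u_3$. All three $u_i$ therefore lie in a compact subset of $\mathbb{R}^3$ on $[\tau_*,\tau_M)$, and Theorem 4.1 forces $\tau_M=+\infty$.

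For the limit, $u_2$ is monotone and bounded below by $0$, so $u_2\to u_2^\infty\geq 0$. Suppose for contradiction $u_2^\infty>0$. The attractor argument for $u_1$ together with $\dot u_2\to 0$ yields $u_1\to\sqrt{1/4-u_2^\infty}>0$; the linear-in-$u_3$ equation $\dot u_3+u_1u_3=\frac{1}{4}-\frac{2u_2}{k^2}$ with positive convergent coefficient and convergent forcing then drives $u_3\to u_3^\infty=(\tfrac{1}{4}-\tfrac{2u_2^\infty}{k^2})/\sqrt{1/4-u_2^\infty}$, which is strictly positive because $u_2^\infty<u_2(0)<k^2/8$. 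Consequently $\dot u_2=\lambda u_2u_3\to\lambda u_2^\infty u_3^\infty<0$ stays bounded away from zero, contradicting convergence of $u_2$. Hence $u_2^\infty=0$, which via the same attractor analyses gives $u_1\to 1/2$ and $u_3\to 1/2$. The hard part will be making the attractor arguments for $u_1$ and $u_3$ rigorous in the presence of the time-varying coefficients; the cleanest implementation is a squeeze between sub- and super-solution barriers converging to the same limit, or equivalently a Gronwall-type estimate on the deviation from the instantaneous equilibrium. Once these are in place, identification of $(1/2,0,1/2)$ as the only consistent limit is immediate.
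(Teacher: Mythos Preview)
Your proof is correct and follows essentially the same outline as the paper: bootstrap $u_3>0$ and $u_2$ decreasing below $k^2/8$, obtain two-sided bounds on the $u_i$ to invoke Theorem~4.1, then identify the limit as the unique consistent fixed point. Two small remarks. First, the formula $\dot u_3(0)=\tfrac14-\tfrac{2u_2(0)}{k^2}$ is not literally valid because $u_1u_3$ is an $\infty\cdot 0$ indeterminate at $\tau=0$; the paper sidesteps this by using the integral representation $u_3=R^{-1}\int_0^\tau(\tfrac14-\tfrac{2u_2}{k^2})R\,d\tau'$, which immediately gives $u_3>0$ for small $\tau$ without computing $\dot u_3(0)$. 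Second, for the lower bound on $u_1$ the paper uses the coarser but quicker inequality $\dot u_1>\,-\lambda/4-u_1^2$ (from $u_2<k^2/8$), yielding $u_1>\sqrt{-\lambda}/2$; and for the limits it argues that each $u_i$ is eventually monotone (via the sign-change count for $u_1-u_3$ and the second-derivative test for $u_3$), so the trajectory must land on a fixed point. Your attractor/squeeze programme reaches the same conclusions and is perfectly sound, just slightly more work.
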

\begin{proof}
Using equation \refb{ldot}, with $\epsilon=1$, and $u_2(0)<k^2/8$, we must have $u_3$ initially positive, since $R$ is initially positive. 
Since $u_3$ cannot cross zero from above while $u_2<k^2/8$ and $u_3>0,\lambda<0$ give $\dot{u}_2<0$,  we have $\dot{u}_2<0, u_2<k^2/8$ and $u_3>0$ for all $\tau\in(0,\tau_M)$. At $\dot{u}_1=0$ we have $\ddot{u}_1=-\dot{u}_2>0$. Given that $\dot{u}_1<0$ initially, it must then hold for all $\tau\in(0,\tau_M)$. Note also that $\dot{u}_1>-\lambda/4-u_1^2$ and so $u_1>\sqrt{-\lambda}/2$ for all $\tau\in(0,\tau_M)$. It then follows that $\dot{u}_3<1/4-\sqrt{-\lambda}u_3/2$, from which it follows that $u_3<1/2\sqrt{-\lambda}$, for all $\tau\in(0,\tau_M)$. Hence, all the $u_i$ are bounded and so $\tau_M=\infty$. The remainder of the proof is analogous to that of Lemma 4.5.
\end{proof}
\begin{proposition} If $V_0<0$ and $\lambda<0$, then the metric is regular in the limit as $\tau\rightarrow\infty$, i.e. on $\mathcal{N}_-$, and outgoing radial null rays reach $\mathcal{N}_-$ in finite parameter time.
\end{proposition}
\begin{proof} Note that the sign of $V_0$ does not affect the arguments in Proposition 4.2 and so the proof is identical.
\end{proof}
\begin{lemma} For $V_0>0,\lambda<0$ and $u_2(0)>k^2/8$, we have $\tau_M<\infty$ and 
\beq\lms u_1=-\infty,\quad\lms u_3=-\infty,\quad \lms u_2=+\infty. \eeq\end{lemma}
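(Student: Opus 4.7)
The plan is to proceed in three stages: first establish $u_3 < 0$ and monotonicity of $u_2$ throughout $(0,\tau_M)$, then prove $\tau_M < \infty$ together with $u_2 \to \infty$ by ruling out each alternative via Sturm comparison and the constraint \refb{EFE tau E}, and finally read off the blow-up of $u_1$ and $u_3$ from $\ddot{R} = (1/4 - u_2)R$.

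For the sign analysis I would integrate $\frac{d}{d\tau}(Ru_3) = (1/4 - 2u_2/k^2)R$ (equivalent to \refb{EFE tau D}) to obtain $Ru_3 = \int_0^\tau (1/4 - 2u_2/k^2)R\,d\tau'$. Since $u_2(0) > k^2/8$ and $R > 0$ just after the axis, the integrand is strictly negative initially, forcing $u_3 < 0$ on an initial interval. Then $\dot{u}_2 = \lambda u_2 u_3 > 0$, so $u_2$ is strictly increasing and remains above $k^2/8$, which keeps the integrand strictly negative and $u_3 < 0$ throughout $(0,\tau_M)$.

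To establish $\tau_M < \infty$ and $u_2 \to \infty$, I would rule out each of the three alternatives. (a) If $\tau_M = \infty$ and $u_2 \to \infty$, pick $\tau_0$ with $c := u_2(\tau_0) - 1/4 > 0$; then $u_2 - 1/4 \geq c$ on $[\tau_0,\infty)$, and the Sturm oscillation theorem applied to $\ddot{R} + (u_2 - 1/4)R = 0$ produces a zero of $R$ in any subinterval of length $\pi/\sqrt{c}$, contradicting $R > 0$. (b) If $\tau_M = \infty$ and $u_2$ is bounded, then $\dot{u}_2 = |\lambda| u_2 |u_3|$ gives $|u_3| \in L^1([0,\infty))$, while $\frac{d}{d\tau}(u_3 R) \leq -\delta R$ with $\delta = 2u_2(0)/k^2 - 1/4 > 0$ yields $|u_3| \geq \delta \bar{R}/R$ for $\bar{R} = \int_0^\tau R\,d\tau'$, and $\ddot{R} \leq R/4$ gives $\log \bar{R}(\tau) \leq \tau/2 + O(1)$; a Cauchy--Schwarz estimate on $R/\bar{R}$ versus $\bar{R}/R$ then produces $\int_1^T |u_3|\,d\tau \geq C(T-1)^2/T \to \infty$, a contradiction. (c) If $\tau_M < \infty$ but $u_2$ is bounded, blow-up of $u_3$ is forced (blow-up of $u_1$ alone would send $u_1 u_3 \to +\infty$ in $\dot{u}_3$ and drive $u_3 \to -\infty$ anyway); rearranging \refb{EFE tau E} as $1/R^2 = (u_1 + k^2 u_3/2)^2 - k^2(k^2+2) u_3^2/4 + 2u_2 - (2+k^2)/8$ then forces $u_1 \sim -\frac{k}{2}(\sqrt{k^2+2} - k)|u_3|$, and the $\dot{u}_3$ equation gives $|u_3| \sim C/(\tau_M - \tau)$, so $\int |u_3| = \infty$, contradicting the $L^1$ bound from bounded $u_2$.

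Finally, the Sturm argument produces the zero of $R$ at $\tau_M$. Uniqueness for $\ddot{R} = (1/4 - u_2)R$ forbids $\dot{R}(\tau_M^-) = 0$ (else $R \equiv 0$), and continuity gives $\dot{R}(\tau_M^-) < 0$, so $u_1 = \dot{R}/R \to -\infty$. Since $u_3 R$ is strictly decreasing and bounded above by $-\delta \bar{R}(\tau_M) < 0$, and $R \to 0$, we conclude $u_3 \to -\infty$. The main obstacle is alternative (c): ruling out a finite-time singularity with $u_2$ bounded requires the fine asymptotic analysis of the constraint near $\tau_M$, tying together $u_1$, $u_3$ and $R$ in a self-consistent blow-up profile that ultimately violates the $L^1$ bound on $u_3$ implied by bounded $u_2$.
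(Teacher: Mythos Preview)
Your route is genuinely different from the paper's and largely sound, but two steps need repair. In stage~2(c) the asymptotic $u_1 \sim -\tfrac{k}{2}(\sqrt{k^2+2}-k)\,|u_3|$ extracted from the constraint is incorrect: it would require $1/(R^2 u_3^2)\to 0$, but $u_3 R=\int_0^\tau(1/4-2u_2/k^2)R$ is bounded when $u_2$ is bounded, so $1/(R u_3)^2$ stays bounded away from zero. The contradiction you want still goes through, and more simply: with $u_2$ bounded the linear equation $\ddot R=(1/4-u_2)R$ has bounded coefficient, so $R,\dot R$ extend continuously to $\tau_M$; blow-up forces $R(\tau_M)=0$, uniqueness (now legitimate) gives $\dot R(\tau_M)\neq 0$, hence $R\sim c(\tau_M-\tau)$ and $|u_3|=|u_3 R|/R\gtrsim 1/(\tau_M-\tau)$, violating the $L^1$ bound. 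In stage~3, however, once $u_2\to\infty$ the coefficient in $\ddot R=(1/4-u_2)R$ is singular at $\tau_M$, so neither the Sturm comparison (which needs the equation on an interval of length $\pi/\sqrt{c}$ where it may not be defined) nor the uniqueness argument for $\dot R(\tau_M^-)\neq 0$ is available as stated. The clean fix is the Riccati route: once $u_2>1/2$ one has $\dot u_1<-1/4-u_1^2$, which drives $u_1\to-\infty$ at $\tau_M$; then the argument you already gave for $u_3$ (via $u_3 R\le -\delta\bar R<0$ and $R\to 0$, the latter following since $R\not\to 0$ would keep $u_3$ bounded and hence $u_2$ bounded) finishes the job.

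For comparison, the paper avoids the case split entirely by explicit integration: from $u_2>k^2/8$ one has $\dot u_1<-\lambda/4-u_1^2$, giving $u_1<m\coth m\tau$ with $m=\sqrt{|\lambda|}/2$; this feeds into $\dot u_3$ to produce $u_3<-b(\cosh m\tau-1)/(m\sinh m\tau)$, and integrating $\lambda u_3$ yields $u_2>u_2(0)\cosh^{8b}(m\tau/2)$. This growing lower bound, combined with the same Riccati inequality, forces $\tau_M<\infty$ and $u_1\to-\infty$ directly. The paper then introduces $X=u_1-k^2u_3/2>0$ to pull $u_3\to-\infty$ along, and closes with a short contradiction for $u_2\to\infty$. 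Your soft-analysis approach (Sturm, Cauchy--Schwarz on $\bar R/R$, $L^1$ tension) is more conceptual and would generalise better to settings without explicit comparison solutions; the paper's hard-analysis approach is shorter here because the comparison ODEs integrate in closed form.
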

\noindent\begin{proof}
In this case we have $u_2>k^2/8,u_3<0$ on some initial interval, using equation \refb{ldot}. Indeed, $u_3$ cannot cross zero from below while $u_2>k^2/8$ and since $u_2$ is increasing for $u_3<0$, these conditions hold for all $\tau\in(0,\tau_{M})$. We then have 
\beq
\dot{u}_1<-\lambda/4-u_1^2,
\eeq
for all $\tau\in(0,\tau_{M})$. For $u_1>\sqrt{|\lambda|}/2$, this may be integrated over $(0,\tau)$ to give
\beq
\label{coth}u_1<m\coth m\tau,
\eeq
where $m=\sqrt{|\lambda|}/2$ and we have used $\lim_{\tau\rightarrow0^+}u_1=\infty$. Note that \refb{coth} automatically holds for $u_1<m$ also, since $\coth m\tau>1$ for all $\tau$.\\
Combining \refb{coth} with \refb{u3} gives 
\beq
\dot{u}_3<-b-(m\coth m\tau)u_3,
\eeq
where $b=2u_2(0)/k^2-1/4>0$. This may be integrated to give
\begin{eqnarray}
u_3<-\frac{b(\cosh m\tau-1)}{m\sinh m\tau},\\
\lambda\int_0^\tau u_3\,d\tau'> -\frac{2\lambda b}{m^2}\log\left(\cosh \left(\frac{m\tau}{2}\right)\right)=8b\log\left(\cosh  \left(\frac{m\tau}{2}\right)\right).
\end{eqnarray}
We then arrive at 
\beq
u_2=u_2(0)\exp\left[\int_0^\tau u_3\,d\tau'\right]>\cosh^{8b}\left(\frac{m\tau}{2}\right).
\eeq
It follows that $\tau_M$ must be finite since, otherwise, we would have $\tau_*\in(0,\tau_M)$ such that $u_2>1/2$ for all $\tau>\tau_*$, which gives $\dot{u}_1<-1/4-u_1^2$ and thus $\lms u_1=-\infty$ for $\tau_M$ finite, using \refb{u1} with $\epsilon=1$. So we must have $\lms |u_i|=\infty$ for at least one of the $u_i$. Now consider $X=u_1-k^2u_3/2$, which satisfies
\beq
\label{X}\dot{X}=-\frac{\lambda}{4}-u_1X>-u_1X. 
\eeq
Note that $X$ is initially positive and at $X=0$ we have $\dot{X}>0$, so $X>0$ for $\tau\in(0,\tau_M)$. Furthermore, if $u_1$ is bounded below for $\tau\in(0,\tau_M)$, then $X$ is bounded above, using \refb{X}, which in turn gives a lower bound for $u_3$. This gives an upper bound on $u_2$, which contradicts $\lms||\boldsymbol{u}(\tau)||=\infty$. Hence, we must have $\lms u_1=-\infty$ and, since $X>0$, $\lms u_3=-\infty$. \\
To complete the proof, we assume $\lms u_2=B+1/4<\infty$ for some constant $B$, and then arrive at a contradiction. Note that $u_2$ is monotone and so the limit must exist. The assumption gives $u_2<B+1/4$, and thus
$\dot{u}_1>-B-u_1^2$, for all $\tau\in(0,\tau_{M})$. Dividing by $u_1$ we have $\dot{u}_1/u_1<-B/u_1-u_1$ for $u_1<0$. Choosing $\tau_0$ such that $u_1(\tau_0)<0$ and integrating over $[\tau_0,\tau]$ gives
\beq
u_1>u_1(\tau_0)\exp\left[\int_{\tau_0}^\tau\frac{-B}{u_1}-u_1\,d\tau'\right].
\eeq
Since $\lms u_1=-\infty$, it must follow that 
\beq
\lms\int_{\tau_0}^\tau u_1\,d\tau'=-\infty.
\eeq
Now, $u_1>k^2u_3/2$ gives $\dot{u}_2>2\lambda u_2u_1/k^2$. Dividing by $u_2$, integrating and using the above, we find $\lms u_2=\infty$, which is our contradiction. 
\end{proof}
\begin{proposition} If $V_0>0$, $\lambda<0$, and $u_2(0)>k^2/8$ then  $\tau_M<\infty$ and there is a singularity at $\tau_M$. 
\end{proposition}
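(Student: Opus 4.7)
The first assertion $\tau_M<\infty$ is already contained in Lemma 4.7, which moreover supplies the asymptotics $\lms u_1=\lms u_3=-\infty$ and $\lms u_2=+\infty$. So the only work left is to exhibit a curvature invariant that blows up as $\tau\to\tau_M^-$.

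My plan is to reuse the Ricci scalar calculation \refb{Ricci} that was carried out in the proof of Proposition 4.1, namely
\begin{equation*}
-\mathcal{R}=\frac{e^{\tau/2-c_1}}{|u|}\left(\frac{k^2}{2}\left(\frac{1}{4}-\dot{l}^2\right)e^{-k^2l/2}-4V_0e^{-l}\right),
\end{equation*}
and simply read off its behaviour from the limits above. Since $u_2=V_0e^{\lambda l}$ with $V_0>0$ and $\lambda<0$, the divergence $\lms u_2=+\infty$ forces $l\to-\infty$, so that both $e^{-l}$ and $e^{-k^2l/2}$ blow up to $+\infty$. Simultaneously, $\dot{l}=u_3\to-\infty$ gives $\tfrac14-\dot{l}^2\to-\infty$. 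Consequently the first term in the bracket is the product of a large negative factor and a large positive factor, hence tends to $-\infty$, while the second term $-4V_0e^{-l}$ also tends to $-\infty$ because $V_0>0$.

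The remaining observation is that along any fixed outgoing radial line $u=u_0\neq 0$, and because $\tau_M$ is finite, the prefactor $e^{\tau/2-c_1}/|u|$ stays bounded above and below by positive constants as $\tau\to\tau_M^-$. Therefore $-\mathcal{R}\to-\infty$, i.e.\ $\mathcal{R}\to+\infty$, and the hypersurface $\tau=\tau_M$ carries a genuine curvature singularity. The main point requiring care is the sign audit: one must verify that the two divergent contributions in the bracket have matching signs so that no cancellation can save finiteness of $\mathcal{R}$. This is precisely where both hypotheses of the proposition are used --- $V_0>0$ controls the sign of the second term, and $\lambda<0$ fixes the direction in which $l$ escapes, and hence the sign of $e^{-k^2l/2}$. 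With these signs in hand the conclusion is immediate; I do not foresee any further obstacle beyond this bookkeeping.
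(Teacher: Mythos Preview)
Your proof is correct and follows essentially the same route as the paper: you invoke Lemma 4.7 for $\tau_M<\infty$ and the asymptotics of $u_1,u_2,u_3$, then read off the divergence of the Ricci scalar from the expression \refb{Ricci}. The paper does exactly this, except it first rewrites \refb{Ricci} in terms of $u_2,u_3$ via $e^{-l}=(u_2/V_0)^{-1/\lambda}$ before taking the limit---the substance of the argument is identical.
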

\begin{proof} In terms of $u_2,u_3$, with $V_0>0$, the Ricci scalar is given by
\begin{equation}
\label{Ricci 2}\mathcal{R}=\frac{e^{\tau/2-c_1}}{|u|}\left(\frac{k^2}{2}\left(u_3^2-\frac{1}{4}\right)u_2^{-k^2/2\lambda}+4u_2^{-1/\lambda}\right).
\end{equation}
Using the previous lemma we have $\tau_M<\infty$, $\lms u_2=\infty$ and $\lms u_3=-\infty$. For $\lambda<0$ then, $\lms\mathcal{R}=\infty$. 
\end{proof}
\subsection{$V_0>0, \lambda>0$.}
Similarly to the previous section, we have two different pictures depending on the sign of $u_2(0)-k^2/8$. When positive, $u_3$ is initially negative, and vice-versa. Hence, $u_2$ either starts above $k^2/8$ and is decreasing, or vice-versa. The case $u_2(0)=k^2/8$ is dealt with in the next section. The following results show that in all cases, the maximal interval of existence of solutions is finite and there is a singularity at $\tau_M$.
\begin{lemma} For $V_0>0,\lambda>0$, suppose that $\lms u_3=-\infty$. Then $\lms\mathcal{R}=\infty$.\end{lemma}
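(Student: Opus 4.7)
The plan is to start from the expression for the Ricci scalar already derived in the previous subsection (equation (98) in its form for $V_0>0$), namely
\[
\mathcal{R}=\frac{e^{\tau/2-c_1}}{|u|}\left(\frac{k^2}{2}\left(u_3^2-\frac{1}{4}\right)u_2^{-k^2/(2\lambda)}+4u_2^{-1/\lambda}\right),
\]
and read off the sign structure in the regime $V_0>0,\lambda>0$. The second term in the bracket is then manifestly non-negative. The first term will carry the blow-up through the factor $u_3^2$, provided I can show that $u_2^{-k^2/(2\lambda)}$ does not collapse to zero faster than $u_3^2$ grows. So the entire argument reduces to controlling $u_2$ from above on a subinterval $(\tau_0,\tau_M)$ near $\tau_M$.

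Since $\lms u_3=-\infty$, I can pick $\tau_0\in(0,\tau_M)$ such that $u_3(\tau)<-1$ for all $\tau\in(\tau_0,\tau_M)$. Plugging this into \refb{u2} with $\lambda>0$ and $u_2>0$ yields $\dot{u}_2=\lambda u_2 u_3<0$ on that interval, so $u_2$ is monotone decreasing and hence $u_2(\tau)\le u_2(\tau_0)$ on $(\tau_0,\tau_M)$. Since $-k^2/(2\lambda)<0$, this gives the positive lower bound
\[
u_2(\tau)^{-k^2/(2\lambda)}\ge u_2(\tau_0)^{-k^2/(2\lambda)}>0.
\]
Combining this with $u_3^2\to\infty$ produces the bound
\[
\frac{k^2}{2}\left(u_3^2-\tfrac{1}{4}\right)u_2^{-k^2/(2\lambda)}\ge \frac{k^2}{2}\left(u_3^2-\tfrac{1}{4}\right)u_2(\tau_0)^{-k^2/(2\lambda)}\longrightarrow +\infty,
\]
as $\tau\to\tau_M^-$, while $4u_2^{-1/\lambda}\ge 0$.

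For the prefactor $e^{\tau/2-c_1}/|u|$, I fix a point with $u=u_*<0$ on the axis segment under consideration; this prefactor is then bounded below by a strictly positive constant on any bounded $\tau$-interval, and it diverges if $\tau_M=+\infty$. Either way, multiplying the two lower bounds gives $\lms\mathcal{R}=+\infty$, which is the claim. The only mild subtlety is making sure that I am genuinely entitled to discard the second term (i.e.\ that $u_2$ does not do something pathological that invalidates $u_2^{-1/\lambda}\ge 0$), but since $u_2>0$ by definition this is immediate, so no real obstacle arises — the lemma is essentially a bookkeeping consequence of $u_3\to-\infty$ together with monotonicity of $u_2$ forced by \refb{u2}.
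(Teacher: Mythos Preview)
Your proof is correct and follows essentially the same approach as the paper: the paper's argument is the one-line observation that $\lms u_3=-\infty$ with $\lambda>0$ forces $\lms u_2^{-k^2/(2\lambda)}\neq 0$, after which the Ricci formula \refb{Ricci 2} gives the result. You have simply unpacked why that limit is nonzero (monotonicity of $u_2$ from \refb{u2} once $u_3<0$) and handled the prefactor explicitly, which the paper leaves implicit.
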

\begin{proof} If $\lms u_3=-\infty$ and $\lambda>0$,  then $\lms u_2^{-k^2/2\lambda}\neq0$. Using \refb{Ricci 2} then gives the result. \end{proof}
\begin{lemma}  For $V_0>0$, $\lambda>0$, suppose that $\tau_M<\infty$ and 
$\lms u_1=-\infty,\lms u_3=+\infty.$ Then for any $\tau_{*}<\tau_M$ such that $u_3>0$ for all $\tau\in(\tau_*,\tau_M)$ we have
\beq\lms \int_{\tau_{*}}^\tau u_1 d\tau = -\infty. 
\eeq\end{lemma}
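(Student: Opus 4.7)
The plan is to derive a direct identity for $\int_{\tau_*}^\tau u_1\,d\tau'$ from equation \refb{u3}. Since $u_3>0$ on $(\tau_*,\tau_M)$ by hypothesis and $\epsilon=1$ here, \refb{u3} can be divided by $u_3$ to give $\frac{d}{d\tau}\log u_3 = (1/4-2u_2/k^2)/u_3 - u_1$, which upon integration over $[\tau_*,\tau]$ rearranges to
\beq\label{keyid}
\int_{\tau_*}^\tau u_1\,d\tau'=\log u_3(\tau_*)-\log u_3(\tau)+\int_{\tau_*}^\tau\frac{1/4-2u_2/k^2}{u_3}\,d\tau'.
\eeq
The hypothesis $\lms u_3=+\infty$ makes $-\log u_3(\tau)\rightarrow-\infty$, so the claim will follow once I show that the residual integral on the right of \refb{keyid} is bounded above as $\tau\rightarrow\tau_M^-$.

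I plan to bound this residual by simply discarding the manifestly non-positive contribution from $u_2$: since $u_2>0$, the integrand is at most $1/(4u_3)$. It then suffices to produce a positive lower bound for $u_3$ on $[\tau_*,\tau_M)$, which I intend to extract from continuity alone. Because $\lms u_3=+\infty$, there is some $\tau_1\in(\tau_*,\tau_M)$ with $u_3\ge 1$ on $[\tau_1,\tau_M)$, while on the compact interval $[\tau_*,\tau_1]$ the continuous, strictly positive function $u_3$ attains a positive minimum. Combining yields $u_3\ge c>0$ on $[\tau_*,\tau_M)$ for some $c$, and since $\tau_M<\infty$ by hypothesis, the residual integral is bounded above by $(\tau_M-\tau_*)/(4c)<\infty$. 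Inserting this bound into \refb{keyid} and using $\log u_3(\tau)\rightarrow+\infty$ delivers $\lms\int_{\tau_*}^\tau u_1\,d\tau'=-\infty$.

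The argument is short because the structure of \refb{u3} already couples $u_1$ to the logarithmic derivative of $u_3$, whose endpoint behaviour is controlled by hypothesis. The only mildly delicate point that I would write out carefully is the uniform positive lower bound on $u_3$, since the hypothesis supplies only pointwise positivity on the open interval; the prescribed limit at $\tau_M$ together with continuity on $[\tau_*,\tau_M)$ closes that gap. Notably the hypothesis $\lms u_1=-\infty$ and the sign of $\lambda$ are not used in this step, which is consistent with the lemma functioning as a preparatory estimate within a larger case analysis to follow.
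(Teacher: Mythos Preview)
Your argument is correct and follows essentially the same route as the paper: divide \refb{u3} by $u_3$, integrate, discard the non-positive $-2u_2/(k^2u_3)$ contribution, and show the remaining $\int 1/(4u_3)\,d\tau'$ term is bounded while $\log u_3\to+\infty$. The paper presents this as the inequality $\dot u_3/u_3<1/(4u_3)-u_1$ and then exponentiates, whereas you keep the exact identity and bound afterwards; your treatment of the uniform positive lower bound for $u_3$ on $[\tau_*,\tau_M)$ is in fact more explicit than the paper's brief ``since $\lms 1/(4u_3)=0$'', but the underlying idea is the same.
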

\begin{proof} Note that for $u_3>0$ and $\epsilon=1$ we have $\dot{u}_3/u_3<1/4u_3-u_1$. Integrating then gives
\begin{eqnarray}
u_3<u_3(\tau_{*})\exp\left(\int_{\tau_{*}}^\tau \frac{1}{4u_3}-u_1 d\tau'\right),\\
\lms \exp\left(-\int_{\tau_{*}}^\tau u_1 d\tau'\right)>\lms\frac{u_3}{u_3(\tau_{*})}\exp\left(-\int_{\tau_{*}}^\tau\frac{1}{4u_3} d\tau'\right)=+\infty,
\end{eqnarray}
since $\lms1/4u_3=0$. The result immediately follows. \end{proof}
\begin{lemma} For $V_0>0$, $\lambda>0$, suppose that $\tau_M<\infty$ and\\
$\lms u_1=-\infty,\,\lms u_3=+\infty.$
Then
 \label{Limit}\beq\lms u_2=+\infty,\qquad-\infty<\lms \frac{u_1}{u_3} =L<-k^2-\frac{1}{4}.\eeq\end{lemma}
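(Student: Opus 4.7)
The three assertions---$\lms u_2 = +\infty$, existence and finiteness of $L := \lms u_1/u_3$, and $L < -k^2 - 1/4$---will be handled in sequence, using as workhorse the identity
\[
u_3 \dot r = \frac{1-r}{4} - \frac{u_2(k^2 - 2r)}{k^2}, \qquad r := u_1/u_3,
\]
derived directly from \refb{u1} and \refb{u3}, together with the Hamiltonian constraint \refb{EFE tau E}.

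To show $\lms u_2 = +\infty$, I argue by contradiction. If $u_2$ is bounded above on $(\tau_*, \tau_M)$, then from \refb{u1} we have $\dot u_1 \le 1/4 - u_1^2$, and a Riccati comparison forces $u_1 \to -\infty$ at least as fast as $-1/(\tau_M - \tau)$, consistent with the preceding lemma's conclusion $\int |u_1|\,d\tau' = +\infty$. Substituting this into the logarithmic derivative $\dot u_3/u_3 = |u_1| + O(1/u_3)$ and invoking Gr\"onwall yields $u_3 \to +\infty$ rapidly enough that $\int^{\tau_M} u_3 \,d\tau' = +\infty$; but then $u_2(\tau) = u_2(\tau_*)\exp\bigl(\lambda \int_{\tau_*}^\tau u_3\,d\tau'\bigr) \to +\infty$, contradicting the assumed boundedness.

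Once $u_2 \to \infty$ is in hand, the right-hand side of the key identity is strictly negative for every $r \le 0$ as soon as $u_2 > k^2/8$, so $\dot r < 0$ eventually; hence $r$ is monotone decreasing and $L := \lms r$ exists in $[-\infty, 0)$. Evaluating the identity at $r = -k^2 - 1/4$ gives $u_3 \dot r = (k^2 + 5/4)/4 - u_2(3k^2 + 1/2)/k^2$, which is strictly negative once $u_2$ exceeds the finite $k^2$-dependent threshold $k^2(k^2 + 5/4)/(4(3k^2 + 1/2))$---a condition that holds for all $\tau$ sufficiently close to $\tau_M$. By continuity, at the moment $\tau_1$ when $r(\tau_1) = -k^2 - 1/4$ we have $\dot r(\tau_1) < 0$ strictly, so $r(\tau_1 + h) < -k^2 - 1/4$ for small $h > 0$, and by monotonicity $L \le r(\tau_1 + h) < -k^2 - 1/4$.

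To show $L > -\infty$, I would divide the constraint \refb{EFE tau E} (with $\epsilon = +1$) by $u_3^2$ to obtain
\[
r^2 + k^2 r - \frac{k^2}{2} + \frac{2 u_2}{u_3^2} = \frac{1}{(R u_3)^2} + \frac{2 + k^2}{8 u_3^2}.
\]
If $R u_3 = R\dot l$ is shown to remain bounded away from both $0$ and $\infty$ as $\tau \to \tau_M$, the right-hand side stays bounded; together with $2 u_2/u_3^2 \ge 0$ and $u_3^{-2} \to 0$, this bounds $r^2 + k^2 r$ from above, trapping $L$ between the real roots of the resulting quadratic---in particular $L \ge -(k^2 + k\sqrt{k^2 + 2})/2 > -\infty$. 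The main obstacle is precisely this bootstrap control of $R\dot l$: combining the blow-up rate of $u_1$ (hence of $R$ via $\dot R = R u_1$) with the monotonicity of $r$ established above, one must show that these competing behaviours balance so that $R \dot l$ approaches a finite positive limit, via a comparison argument in the spirit of Lemmas 4.1--4.4. Once this control is in hand, all three assertions of the lemma follow.
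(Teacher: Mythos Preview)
Your proposal has two substantive gaps that prevent the argument from closing.

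\textbf{The bound $L<-k^2-1/4$.} Your barrier computation at $r=-k^2-1/4$ shows only that \emph{if} $r$ ever attains this value (after $u_2$ is large), it crosses downward. It does not show that $r$ reaches the barrier. Since you have established that $r$ is eventually monotone decreasing, the possibility that $r\searrow L$ with $L\in(-k^2-1/4,0)$ is not excluded by anything you have written. The paper closes this gap by a different route: it first proves $L$ is finite, then divides the constraint \refb{EFE tau E} by $u_3^2$ and shows $q_4:=u_2/u_3^2\to 0$ by a monotonicity argument on $\dot q_4$. Passing to the limit in the constraint then yields the quadratic $\omega L^2+k^2L-k^2/2=0$ with $\omega=\lms(1-\dot R^{-2})<1$, which forces $L<-k^2/2-\sqrt{k^4/4+k^2/2}<-k^2-1/4$ (using $k^2>2$). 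Your identity for $u_3\dot r$ alone cannot deliver this without additional information about $u_2/u_3$ or $u_2/u_3^2$.

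\textbf{Finiteness of $L$.} You acknowledge that the crux is control of $R\dot l=Ru_3$, and leave this undone. The paper does not attempt to control $Ru_3$ at all; instead it assumes $L=-\infty$ and shows, via logarithmic differentiation and Lemma~4.9, that $q_2:=u_2/u_1\to 0$ and then $q_3:=u_2/u_3\to 0$. Feeding $q_3\to 0$ back into the equation for $\dot q_1$ gives $\dot q_1>-q_3+2q_3q_1/k^2$, which is integrable and hence prevents $q_1\to-\infty$: contradiction. This is a self-contained contradiction argument and avoids the bootstrap you describe.

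A smaller point: your route to $\lms u_2=+\infty$ via $\int^{\tau_M}u_3\,d\tau'=+\infty$ is borderline. With $u_2$ bounded one gets $|u_1|\sim 1/(\tau_M-\tau)$ only after a refined asymptotic (not just a one-sided Riccati comparison), and the resulting exponent in $u_3\gtrsim(\tau_M-\tau)^{-a}$ sits exactly at the threshold $a=1$; any loss of a constant makes $\int u_3$ finite. The paper sidesteps this by first observing that bounded $u_2$ forces $q_1$ bounded below, say $q_1>L_*$, which gives $u_3>u_1/L_*$; then $\dot u_2/u_2=\lambda u_3>\lambda u_1/L_*$ together with $\int u_1=-\infty$ (Lemma~4.9) yields $u_2\to\infty$ directly.
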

\begin{proof} First we define $q_1=u_1/u_3$, which satisfies
\beq
\label{q1}\dot{q_1}=\frac{1}{u_3}\left(\frac{1}{4}-u_2\right)-\frac{q_1}{u_3}\left(\frac{1}{4}-\frac{2u_2}{k^2}\right).
\eeq
Now, suppose $u_2$ is bounded above for all $\tau\in(0,\tau_M)$. Then it straightforward to show that $q_1$ must be bounded below, i.e., there exists some $L_*<0$ such that $u_1/u_3>L_*$, for $\tau\in(0,\tau_M)$. We then have $\dot{u}_2/u_2>\lambda u_1/L_*$ for $\tau\in(0,\tau_M)$. Integrating and using Lemma 4.9 then shows that $\lms u_2=+\infty$, and so $u_2$ is unbounded. Given that $u_2$ is monotone increasing for $u_3>0$, which obtains for $\tau$ sufficiently close to $\tau_M$, we must have $\lms u_2=+\infty$.\\ 
It is clear from \refb{q1} that $\dot{q}_1$ is negative if $u_2>k^2/8>1/4$, $u_1<0$ and $u_3>0$, which all hold for $\tau$ sufficiently close to $\tau_M$. Thus, $q_1$ is monotone decreasing for $\tau$ sufficiently close to $\tau_M$ and the limit $\lm q_1=L<0$ exists. We now prove by contradiction that $L$ is finite. Assuming $\lms q_1=-\infty$, there must exist some $\tau_{*}\in(0,\tau_M)$ such that $q_1<-2\lambda$ for $\tau\in(\tau_{*},\tau_M)$. Now define $q_2=u_2/u_1$, which satisfies
\beq
\label{q2}\frac{\dot{q}_2}{q_2}=\lambda u_3-\frac{1}{4u_1}+\frac{u_2}{u_1}+u_1<-\frac{1}{4u_1}+\frac{u_1}{2},
\eeq
for $u_1<0,u_2>1/4,q_1<-2\lambda$. Integrating and taking the limit $\tau\rightarrow\tau_*$ we have
\beq
\label{q3}\lms q_2\ge q_2(\tau_{*}) \lms\exp\left(\frac{1}{2}\int_{\tau_{*}}^\tau u_1-\frac{1}{2u_1} d\tau'\right)=0,
\eeq
where we've used Lemma 4.9, $\lms 1/u_1=0$ and $q_2<0$. It follows that $\lms q_2=0$. Defining $q_3=u_2/u_3$ we have 
\beq
\frac{\dot{q}_3}{q_3}=\lambda u_3-\frac{1}{4u_3}+\frac{2u_2}{k^2u_3}+u_1. 
\eeq
Since $\lms q_2=0$ we can choose $\tau_{*}$ such that we also have 
\beq\frac{2u_2}{k^2u_3}=\frac{2q_2u_1}{k^2u_3}<-\frac{u_1}{4},\eeq
and thus 
\beq\frac{\dot{q}_3}{q_3}<\frac{u_1}{4}-\frac{1}{4u_3},\eeq
for $\tau\in(\tau_{*},\tau_M)$, where we have again used $\lambda u_3<-u_1/2$. Integrating and using Lemma 4.9 then shows that $\lms q_3 = 0$.\\
However, it follows from $u_1<0,u_3>0$ and \refb{q1} that
\beq
\dot{q}_1>-q_3+\frac{2q_3}{k^2}q_1.
\eeq
It is clear that if $\lms q_3=0$, then $\lms q_1$ is finite and so we have a contradiction. Therefore, $L$ must be finite. \\
\\
To estimate $L$, we divide \refb{EFE tau E} across by $u_3^2$ which, in the case $V_0>0$, gives
\beq
\label{constraint}\left(1-\frac{1}{\dot{R}^2}\right)q_1^2+\frac{2u_2}{u_3^2}+k^2q_1+\frac{2+k^2}{8u_3^2}-\frac{k^2}{2}=0,
\eeq
in terms of $R,u_1,u_2,u_3$. Let $q_4=u_2/u_3^2$. To determine the limiting behaviour of $q_4$, we consider its derivative, which may be written as
\beq
\dot{q}_4=\frac{u_2}{u_3}\left(\lambda +\frac{4q_4}{k^2}+2q_1\right)=q_2Y,
\eeq 
where $Y=\lambda +4q_4/k^2+2q_1$. Recall that $q_1$ is monotone decreasing and $u_3>0$ sufficiently close to $\tau_M$, say, on an interval $(\tau_{0},\tau_M)$. Suppose there exists $\tau_1\in(\tau_0,\tau_M)$ such that $\dot{q}_4(\tau_1)=0$. Then we must have $Y(\tau_1)=0$, since $u_2(\tau_1)/u_3(\tau_1)>0$. It is easily shown that this gives $\ddot{q}_4(\tau_1)=q_2(\tau_1)\dot{Y}(\tau_1)$. Moreover, since $\dot{Y}=4\dot{q}_4/k^2+2\dot{q}_1$, we have $\dot{Y}(\tau_1)=2\dot{q}_1(\tau_1)<0$, and thus $\ddot{q}_4(\tau_1)<0$. So $\dot{q}_4$ can only cross zero in $(\tau_0,\tau_M)$ with negative slope, i.e., it may only change sign once on $(\tau_0,\tau_M)$. Therefore, $q_4$ must be monotone close to $\tau_M$, i.e., $\lms q_4$ exists. Suppose $\lms q_4\neq 0$. Then, since $q_4$ is positive, there must exist some $\varepsilon>0, \delta>0$ such that $u_2>\varepsilon u_3^2>1/4$ for $\tau_M-\tau<\delta$. Using this, $u_1/u_3>L$ and \refb{q1} produces
\beq
\dot{q}_1<-\frac{L}{4u_3}+\frac{2\varepsilon u_1}{k^2},
\eeq
for $\tau\in(\tau_M-\delta,\tau_M)$.  It follows that
\beq
\lms q_1< \lms \left(q_1(\tau_M-\delta)+\int_{\tau_M-\delta}^\tau\left(\frac{2\varepsilon u_1}{k^2}-\frac{L}{4u_3}\right)\,d\tau'\right)=-\infty,
\eeq
using Lemma 4.9 and $\lms L/u_3=0$. This contradicts the fact that $L$ is finite and so we have $\lms q_4=0$. 
Taking the limit of \refb{constraint} then yields
\beq
\omega L^2+k^2L-\frac{k^2}{2}=0,
\eeq
where $\omega = \lms (1-\dot{R}^{-2})$. Note that $\omega \le 0$ gives $L\ge 1/2$ which contradicts $L<0$. We then have
\beq
\label{L}L=-\frac{k^2}{2\omega}-\sqrt{\frac{k^4}{4\omega^2}+\frac{k^2}{2\omega}},
\eeq
since the upper root of \refb{L} is positive and, therefore, not allowed. Clearly $\omega < 1$, so $L<-k^2/2-\sqrt{k^4/4+k^2/2}<-k^2-1/4$, if $k^2>1/4$. \end{proof}
\begin{lemma} For $V_0>0$, $\lambda>0$, suppose that $\tau_M<\infty$ and 
\beq
\lms u_1=-\infty,\qquad\lms u_3=+\infty.\eeq
Then $\lms \mathcal{R}=\infty$ and $\lms \mu <+\infty$, i.e. there exists a singularity at $\tau=\tau_M$ which is reached by outgoing null rays in finite affine time.
\end{lemma}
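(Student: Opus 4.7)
The plan is to upgrade the qualitative conclusions of Lemma 4.10---namely $\lms u_2 = +\infty$, $\lms u_2/u_3^{2} = 0$, and the finite limit $L := \lms u_1/u_3$ with $L < -k^{2}-\tfrac{1}{4}$---into sharp blowup rates for $u_3$ and $u_2$, which can then be fed directly into the formula \refb{Ricci 2} for $\mathcal R$ and into the affine-parameter identity \refb{geo}.

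Step one is to set $h := 1/u_3$, so that \refb{u3} rearranges to
\[
\dot h \;=\; -\frac{1}{4u_3^{2}} \;+\; \frac{2u_2}{k^{2}u_3^{2}} \;+\; \frac{u_1}{u_3}.
\]
Each of the first two terms vanishes as $\tau\to\tau_M^{-}$, by $\lms u_3 = +\infty$ and $\lms u_2/u_3^{2} = 0$, while the third tends to $L$; hence $\lms \dot h = L$. Since $h(\tau_M)=0$, integrating $\dot h$ from $\tau$ to $\tau_M$ shows that for any $\varepsilon>0$ there exists $\tau_\varepsilon < \tau_M$ with
\[
\frac{1}{(|L|+\varepsilon)(\tau_M-\tau)} \;\le\; u_3(\tau) \;\le\; \frac{1}{(|L|-\varepsilon)(\tau_M-\tau)}\qquad \mbox{on}\ (\tau_\varepsilon,\tau_M).
\]
Inserting the upper bound into $\dot u_2/u_2=\lambda u_3$ and integrating then yields $u_2(\tau)\le C_\varepsilon\,(\tau_M-\tau)^{-\lambda(1+\varepsilon)/|L|}$ on $(\tau_\varepsilon,\tau_M)$.

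With these estimates the remaining verification is routine. In \refb{Ricci 2} the term $4 u_2^{-1/\lambda}$ vanishes in the limit, while
\[
\tfrac{k^{2}}{2}\bigl(u_3^{2}-\tfrac{1}{4}\bigr)u_2^{-k^{2}/(2\lambda)} \;\ge\; \mbox{const}\cdot(\tau_M-\tau)^{\,k^{2}(1+\varepsilon)/(2|L|)\,-\,2},
\]
and the strict inequality $|L|>k^{2}+1/4 > k^{2}/4$ from Lemma 4.10 lets me take $\varepsilon$ small enough that the exponent is strictly negative. The bracketed factor in $\mathcal R$ therefore diverges, and since the prefactor $e^{\tau/2-c_1}/|u|$ in \refb{Ricci 2} is bounded and positive along a fixed null generator $u=u_{0}$, this gives $\lms \mathcal R = +\infty$. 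For the affine parameter, I substitute $e^{k^{2}l/2}=(u_2/V_0)^{k^{2}/(2\lambda)}$ and the upper bound on $u_2$ into \refb{geo}: the bound $|L|>k^{2}+1/4 > k^{2}/2$ permits a choice of $\varepsilon$ making the integrand dominated by $(\tau_M-t)^{-\alpha}$ with $\alpha<1$, so the integral converges up to $\tau_M$ and $\lms\mu<+\infty$.

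The main obstacle is the first step: extracting two-sided linear control on $h$ on a definite interval from the merely asymptotic relation $\dot h \to L$, given that one input, $u_2/u_3^{2}\to 0$, is itself only qualitative in Lemma 4.10. Once the rate for $u_3$ is secured, the argument hinges on whether $|L|$ is large enough to beat the thresholds $k^{2}/4$ (for the divergence of $\mathcal R$) and $k^{2}/2$ (for the convergence of $\mu$); the sharp bound $|L|>k^{2}+1/4$ established in Lemma 4.10 beats both, which is precisely why that precise form of the bound was needed.
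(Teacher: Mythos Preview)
Your proof is correct and takes a genuinely different route from the paper's.

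The paper works with the composite variable $Z = e^{-k^{2}l/4}u_1$ (there is a typo in the stated definition, but this is the intended one) and derives a differential inequality $\dot Z/Z > u_3/8$, which together with Lemma~4.9 forces $Z\to-\infty$; since $\mathcal R$ reduces to a bounded prefactor times $(e^{-k^{2}l/4}u_3)^{2} \sim (Z/L)^{2}$, this yields $\mathcal R\to\infty$. The finiteness of $\mu$ is actually not argued inside the lemma but is deferred to Proposition~4.5, where the bound $e^{k^{2}l/2} \lesssim (-u_1)^{1/2}$ (again via $Z$) is combined with a direct ODE estimate on $p=(-u_1)^{1/2}$ to show the integrand is integrable.

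Your approach instead extracts sharp pointwise rates: from $\dot h\to L$ and $h(\tau_M)=0$ you get $u_3\asymp (\tau_M-\tau)^{-1}$, and integrating $\dot l=u_3$ gives $u_2\lesssim (\tau_M-\tau)^{-\lambda/(|L|-\varepsilon)}$. Both the divergence of $\mathcal R$ and the convergence of the affine parameter then reduce to comparing an exponent of $(\tau_M-\tau)$ against a threshold---$k^{2}/4$ for $\mathcal R$ and $k^{2}/2$ for $\mu$---and the bound $|L|>k^{2}+\tfrac14$ from Lemma~4.10 beats both comfortably. This is more quantitative and treats both conclusions uniformly; the paper's argument is more ad hoc but avoids the explicit rate computation. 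Your stated ``main obstacle'' is not really one: once Lemma~4.10 supplies $q_4\to 0$ and $q_1\to L$, the inequality $|\dot h - L|<\varepsilon$ on a left neighbourhood of $\tau_M$ is immediate from the definition of limit, and integration gives the two-sided bounds directly.
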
\begin{proof} Lemma 4.10 and $\lambda>0$ give $\lms u_2=\lms l =\infty$. Using equation \refb{Ricci}  we have 
\beq
\lms\mathcal{R}=\frac{k^2e^{\tau_M/2-c_1}}{2|u|}\lms \left(e^{-k^2l/4}u_3\right)^2.
\eeq
Define $Z=e^{-k^2l}u_1$, which satisfies
\beq
\dot{Z}=e^{-k^2l/4}\left(\frac{1}{4}-u_2-u_1^2-k^2u_1u_3\right)<\left(-u_1-k^2u_3\right)Z,
\eeq
for $u_2>1/4$. Using Lemma 4.10, we may choose some $\tau_{*}\in(0,\tau_M)$ such that $u_1/u_3<-k^2-1/8$ and $Z<0$ for all $\tau\in(\tau_{*},\tau_M)$. We then have $\dot{Z}/Z>u_3/8>-u_1/8L$
for all $\tau\in(\tau_{*},\tau_M)$. Integrating and using Lemma 4.9 then proves $\lms Z =-\infty$.\\
Hence, $\lms e^{-k^2l/4}u_3=L^{-1}\lms Z= +\infty$, which gives $\lms\mathcal{R}=+\infty$. 
\end{proof}
\begin{lemma}Suppose there exists $\tau_{0}\in(0,\tau_M)$ such that $u_1(\tau_{0})\le-1/2$. Then $\tau_M<\infty$ and
$\lim_{\tau_\rightarrow\tau_{M}^-}u_1=-\infty.$\end{lemma}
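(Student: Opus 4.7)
The plan is to exploit the autonomous upper bound $\dot{u}_1 < \tfrac{1}{4} - u_1^2$, which holds throughout $(0,\tau_M)$ because $u_2>0$, together with a scalar comparison argument. The key observation is that at any $\tau$ where $u_1 \le -\tfrac{1}{2}$, we have $u_1^2 \ge \tfrac{1}{4}$ and hence $\dot{u}_1 = \tfrac{1}{4} - u_2 - u_1^2 \le -u_2 < 0$; the right-hand side is strictly negative.

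First I would show that once $u_1$ reaches the level $-\tfrac{1}{2}$, it cannot return. Suppose, for contradiction, that there were a first time $\tau_c > \tau_0$ at which $u_1(\tau_c) = -\tfrac{1}{2}$ after $u_1$ had dropped below. Continuity would force $\dot{u}_1(\tau_c) \ge 0$, contradicting $\dot{u}_1(\tau_c) = -u_2(\tau_c) < 0$. Therefore $u_1(\tau) \le -\tfrac{1}{2}$ for all $\tau \in [\tau_0, \tau_M)$. Moreover, because $\dot{u}_1(\tau_0) < 0$, there exists $\tau_1 > \tau_0$ with $u_1(\tau_1) < -\tfrac{1}{2}$ strictly.

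Next, set $w = -u_1$, so $w(\tau_1) > \tfrac{1}{2}$ and, for all $\tau \in [\tau_1, \tau_M)$,
\beq
\dot{w} = u_2 + u_1^2 - \frac{1}{4} > w^2 - \frac{1}{4}.
\eeq
I would compare with the scalar ODE $\dot{y} = y^2 - \tfrac{1}{4}$, $y(\tau_1) = w(\tau_1)$. An explicit integration gives
\beq
y(\tau) = \frac{1}{2} \cdot \frac{1 + B e^{\tau - \tau_1}}{1 - B e^{\tau - \tau_1}}, \qquad B = \frac{2w(\tau_1)-1}{2w(\tau_1)+1} \in (0,1),
\eeq
which blows up at the finite time $\tau^* = \tau_1 + \ln\!\bigl((2w(\tau_1)+1)/(2w(\tau_1)-1)\bigr)$. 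A standard differential inequality (Gronwall-type) comparison then yields $w(\tau) \ge y(\tau)$ on the common interval of existence, forcing $\lms w = +\infty$ at some $\tau_M \le \tau^* < \infty$. Equivalently, $\lms u_1 = -\infty$. Combined with Theorem 4.1, which ensures that finite-time blow-up of any component is consistent with $\tau_M < \infty$, this gives both conclusions.

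The only mild subtlety is the boundary case $u_1(\tau_0) = -\tfrac{1}{2}$ exactly, which prevents a direct comparison starting at $\tau_0$ since $w^2 - \tfrac{1}{4}$ vanishes there; passing to the slightly later time $\tau_1$ where strict inequality $w > \tfrac{1}{2}$ holds resolves this. Otherwise the argument is a routine upper-bound comparison and I would not expect any serious obstacle.
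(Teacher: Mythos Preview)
Your Riccati comparison is essentially the paper's own argument (the paper sets $\bar u_1=u_1+\tfrac12$ and uses $\dot{\bar u}_1<\bar u_1-\bar u_1^2$, which becomes your inequality $\dot w>w^2-\tfrac14$ under $w=\tfrac12-\bar u_1=-u_1$), and it does correctly yield $\tau_M\le\tau^*<\infty$.

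The gap is in the step ``forcing $\lms w=+\infty$''. The inequality $\dot w>w^2-\tfrac14$ holds only on the interval where the \emph{full system} for $(u_1,u_2,u_3)$ exists, namely $[\tau_1,\tau_M)$. The comparison $w\ge y$ forces $w\to\infty$ only when $\tau_M=\tau^*$; if instead $\tau_M<\tau^*$ because $u_2$ or $u_3$ diverges first, then $y(\tau_M^-)$ is finite and your bound gives only a finite lower bound on $w$. Theorem~4.1 guarantees merely that \emph{some} component diverges at $\tau_M$, not that $u_1$ does, so invoking it does not close the gap.

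The paper fills this in with a separate case analysis. Assuming $u_1$ remains bounded on $[\tau_0,\tau_M)$, one first observes from $\dot u_2=\lambda u_2u_3$ that $u_2$ stays finite whenever $u_3$ does, so necessarily $|u_3|\to\infty$. Using the integral representation $Ru_3=\int_0^\tau\bigl(\tfrac14-2u_2/k^2\bigr)R\,d\tau'$ together with $0<R<R(\tau_0)$ (which holds since $u_1=\dot R/R\le-\tfrac12$ on $[\tau_0,\tau_M)$), one deduces that either $\lms R=0$ or $\lms\int_0^\tau u_2\,d\tau'=+\infty$. Either alternative then forces $u_1\to-\infty$: the first via $R=R(\tau_0)\exp\int_{\tau_0}^\tau u_1$, the second via integrating $\dot u_1\le\tfrac14-u_2$. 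This is the missing ingredient; without it the conclusion $\lms u_1=-\infty$ does not follow from the scalar comparison alone.
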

\begin{proof} We define a new variable $\bar{u}_1=u_1+1/2$, which satisfies
\begin{eqnarray}
\dot{\bar{u}}_1=\bu1-u_2-\bu1^2 <\bu1-\bu1^2.
\end{eqnarray}
It is clear that if $\bu1 (\tau_{0})\le0$ then there exists some $\tau_1>\tau_{0}$ such that $\lim_{\tau_\rightarrow\tau_{1}^-}\bu1=-\infty.$ Then we must have $\tau_M\le \tau_1$ and, using Theorem 4.1, $\lms |u_i|=+\infty$ for some $i$. Suppose that $\lms u_1>-\infty$. It is clear from \refb{u2} that $u_2$ is finite provided $u_3$ and $\tau$ are finite and so we must have $\lms |u_3|=\infty$. If $\lms u_3=+\infty$, it follows from \refb{ldot} and the fact that $u_2>0$ and $0<R<R(\tau_0)$ that $\lms R=0$. Note that $R<R(\tau_0)$ follows from $u_1<0$ here. Note also that 
\beq
R=R(\tau_0)\exp\left(\int^\tau_{\tau_0} u_1\,d\tau'\right),\eeq
and so $\lms R=0$ implies that
\beq
\lms \int^\tau_{\tau_0} u_1\,d\tau'=-\infty,
\eeq
from which it must follow that $\lms u_1=-\infty$. If $\lms u_3=-\infty$ then we have either $\lms R=0$ or 
\beq
\label{u2 int}\lms \int_0^\tau u_2\,d\tau'=+\infty,
\eeq
where we have used \refb{ldot} and fact that $R$ is bounded above again. It follows immediately from \refb{u1} and \refb{u2 int} that $\lms u_1=-\infty$ in this case also. Hence $\tau_1=\tau_M$ and the proof is complete.  
\end{proof}
\begin{lemma} If $V_0>0,\lambda>0$ and $u_2(0)>k^2/8$, then there exists $\tau_0\in(0,\tau_M)$ such that $u_1(\tau_0)=0$ and $u_2>k^2/8$ for all $\tau\in[0,\tau_0]$. 
\end{lemma}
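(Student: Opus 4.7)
The plan is to introduce an energy-type function built from \refb{EFE tau B} and combine it with the constraint \refb{EFE tau E} to rule out the scenario in which $u_2$ descends to $k^2/8$ while $u_1$ remains non-negative.

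First I would verify that $\dot{u}_3(0)=\frac{1}{4}-2u_2(0)/k^2<0$ under the hypothesis, so that $u_3<0$ on an initial right-neighbourhood of $0$. Setting
\beq
T := \sup\{\tau\in(0,\tau_M) : u_2(s)>k^2/8 \text{ for all } s\in(0,\tau)\},
\eeq
a barrier argument keeps $u_3<0$ strictly on $(0,T)$: any return to zero would force $\dot{u}_3=\frac{1}{4}-2u_2/k^2<0$. Consequently $u_2$ is strictly decreasing, and $\dot{u}_1\le -\lambda/4-u_1^2$ with $\lambda=(k^2-2)/2>0$. Comparing with $\dot{v}=-\lambda/4-v^2$, $v(0^+)=+\infty$, whose solution reaches zero at $\tau=\pi/\sqrt{\lambda}$, I would conclude that the first zero $\tau_0$ of $u_1$ exists and satisfies $\tau_0\le \min(T,\pi/\sqrt{\lambda})<\tau_M$; the sub-case $T=\tau_M$ is handled by Theorem 4.1 together with the a priori bounds on $(u_1,u_2,u_3)$ obtained here.

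The heart of the argument is to strengthen this to the strict inequality $\tau_0<T$. Introduce the energy
\beq
E(\tau) := \dot{R}^2 + \left(u_2-\frac{1}{4}\right) R^2.
\eeq
A short calculation using \refb{EFE tau B} gives $\dot{E}=\dot{u}_2 R^2=\lambda u_2 u_3 R^2$ and $E(0^+)=1$. On $(0,T)$, $u_3<0$ strictly, so $\dot{E}<0$ and $E(T)<1$ whenever $T<\tau_M$. Suppose for contradiction $\tau_0\ge T$; then $T<\tau_M$ and $u_2(T)=k^2/8$. Multiplying \refb{EFE tau E} by $R^2$ and inserting $u_2(T)=k^2/8$ yields
\beq
R^2 u_1^2 + k^2 R^2 u_1 u_3 - \frac{k^2}{2} R^2 u_3^2 + \frac{\lambda}{4} R^2 = 1.
\eeq
Subtracting $E(T)=R^2 u_1^2+(\lambda/4)R^2<1$ leaves the strict inequality $k^2 R^2 u_3(u_1-u_3/2)>0$ at $\tau=T$. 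But $u_3(T)\le 0$ and $u_1(T)-u_3(T)/2\ge 0$ (valid in both sub-cases $u_1(T)>0$ and $u_1(T)=0$), so the product is non-positive --- a contradiction. Hence $\tau_0<T$, and by definition of $T$, $u_2(\tau)>k^2/8$ for all $\tau\in[0,\tau_0]$.

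The main obstacle will be identifying the correct energy: the combination $\dot{R}^2+(u_2-1/4)R^2$ is engineered so that the $\ddot{R}$ terms cancel via \refb{EFE tau B}, reducing $\dot{E}$ to $\dot{u}_2 R^2$. Direct attempts to bound the rate of decay of $u_2$ via $\dot{u}_2=\lambda u_2 u_3$ and pointwise estimates on $u_3$ turn out to be insufficient, but once $E$ is in place the rest is a clean algebraic comparison between $E(T)$ and \refb{EFE tau E} at the hypothetical crossing time.
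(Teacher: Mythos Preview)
Your argument is correct and takes a genuinely different route from the paper's. The paper works pointwise: it shows that $\dot{u}_3<0$ persists while $u_1>0$ (by computing $\ddot{u}_3$ at a hypothetical zero of $\dot{u}_3$), introduces the auxiliary variable $X=u_1-k^2u_3/2$ satisfying $\dot{X}<-\lambda/4$ while $u_1>0$, $u_3<0$, and observes that since $u_1<X$ in this regime, $u_1$ must reach zero first. Evaluating $\dot{u}_3(\tau_0)=\tfrac14-2u_2(\tau_0)/k^2$ and invoking the sign of $\dot{u}_3$ then gives $u_2(\tau_0)>k^2/8$; monotonicity of $u_2$ carries this back to all of $[0,\tau_0]$.

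Your energy $E=\dot{R}^2+(u_2-\tfrac14)R^2$ with $\dot{E}=\dot{u}_2R^2$ gives a more integrated argument: the strict decrease of $E$ below its initial value $1$, compared against the constraint \refb{EFE tau E} specialised to $u_2=k^2/8$, rules out the crossing $u_2\to k^2/8$ while $u_1\ge 0$ by a single algebraic inequality. This has the pleasant by-product of delivering the \emph{strict} inequality $u_2(\tau_0)>k^2/8$ directly, whereas the paper's continuity argument only gives $\dot{u}_3(\tau_0)\le 0$ and leaves the boundary case $u_2(\tau_0)=k^2/8$ to be excluded separately. One small imprecision: your claim $\dot{u}_3(0)=\tfrac14-2u_2(0)/k^2$ ignores the indeterminate product $u_1u_3$ at $\tau=0$; the correct limit from \refb{ldot} is half that value, but only its sign matters and your conclusion ($u_3<0$ on an initial right-neighbourhood) is unaffected.
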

\begin{proof} Recall that $u_2(0)>k^2/8$ gives $u_3<0,\dot{u}_3<0$ on some initial interval. Differentiating \refb{u3} gives
\begin{equation}
\label{u3DD}\ddot{u}_3=\left(\frac{2u_2}{k^2}-\frac{1}{4}+u_1^2\right)u_3-u_1\dot{u}_3.
\end{equation}
At $\dot{u}_3=0$ we have $\ddot{u}_3=(u_1-u_3)u_1u_3$, which is negative for $u_1>0>u_3$, and so $\dot{u}_3<0$ holds while $u_1>0$. We then have $0<u_2<u_2(0)$ and $\dot{u}_3>1/4-2u_2(0)/k^2$ for $u_1>0$. Hence, the $u_i$ are all bounded above and below for $u_1>0$, and so either $\tau_M=\infty$, or there exists $\tau_0$ such that $u_1(\tau_0)=0$. Consider \refb{X} with $u_1>0$ and $u_3<0$, which give $\dot{X}<-\lambda/4-u_1^2$. It is obvious that $X$ or $u_1$ must cross zero in finite $\tau$. However, $u_1<X$ if $u_3<0$ and so there must exist $\tau_0$ such that $u_1(\tau_0)=0$. We then have $\dot{u}_3(\tau_0)=1/4-2u_2(\tau_0)/k^2<0$,
from which $u_2(\tau_0)>k^2/8$ immediately follows. The fact that $\dot{u}_2=\lambda u_2u_3<0$ for $\tau\in(0,\tau_0]$ completes the proof. \end{proof} 
\begin{lemma} If $V_0>0,\lambda>0,u_2(0)>k^2/8$, then $u_1<0,u_3<0,\dot{u}_3<0,\ddot{u}_3<0$ for all $\tau\in(\tau_0,\tau_M)$, where $u_1(\tau_0)=0$.\end{lemma}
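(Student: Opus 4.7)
The proof strategy is a contradiction argument showing that the region $D := \{u_1 < 0,\, u_3 < 0,\, \dot{u}_3 < 0,\, \ddot{u}_3 < 0\}$ is forward-invariant on $(\tau_0, \tau_M)$. First I verify that $D$ is entered just after $\tau_0$: by the preceding lemma we have $u_3(\tau_0) < 0$ and $u_2(\tau_0) > k^2/8$, whence $\dot{u}_3(\tau_0) = 1/4 - 2u_2(\tau_0)/k^2 < 0$, and substituting $u_1(\tau_0) = 0$ into $\ddot{u}_3 = (2u_2/k^2 + u_1^2 - 1/4)u_3 - u_1\dot{u}_3$ gives $\ddot{u}_3(\tau_0) = (2u_2(\tau_0)/k^2 - 1/4)u_3(\tau_0) < 0$. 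Since $\dot{u}_1(\tau_0) = 1/4 - u_2(\tau_0) < 1/4 - k^2/8 < 0$ (using $\lambda > 0 \iff k^2 > 2$), $u_1 < 0$ strictly on some right-neighbourhood $(\tau_0, \tau_0 + \delta)$.

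Now define $\tau^* := \inf\{\tau \in (\tau_0, \tau_M) : \text{the solution leaves } D\}$ and suppose for contradiction $\tau^* < \tau_M$; by continuity at least one of $u_1, u_3, \dot{u}_3, \ddot{u}_3$ vanishes at $\tau^*$. If $u_1(\tau^*) = 0$, continuity gives $\dot{u}_3(\tau^*) \leq 0$, which via $\dot{u}_3|_{u_1=0} = 1/4 - 2u_2/k^2$ forces $u_2(\tau^*) \geq k^2/8 > 1/4$; hence $\dot{u}_1(\tau^*) < 0$, contradicting the requirement that $u_1$ rises to $0$. If $u_3(\tau^*) = 0$, continuity from both sides forces $\dot{u}_3(\tau^*) = 0$, and then the formula yields $u_2(\tau^*) = k^2/8$; but the set $\{u_3 = 0,\, u_2 = k^2/8\}$ is an invariant submanifold of (94), so by ODE uniqueness the trajectory would have to lie entirely within it on a backward neighbourhood of $\tau^*$, contradicting $u_3(\tau_0) < 0$. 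If $\dot{u}_3(\tau^*) = 0$, continuity forces $\ddot{u}_3(\tau^*) = 0$, and $\ddot{u}_3|_{\dot{u}_3 = 0} = u_1(u_1 - u_3)u_3$ then gives $u_1(\tau^*) = u_3(\tau^*)$, a degenerate sub-instance of the final case.

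The main obstacle is the case $\ddot{u}_3(\tau^*) = 0$ with $u_1, u_3, \dot{u}_3 < 0$ strict. Setting $A := 2u_2/k^2 + u_1^2 - 1/4$, the relation $\ddot{u}_3 = Au_3 - u_1\dot{u}_3 = 0$ together with the signs in $D$ forces $A(\tau^*) < 0$, giving $u_2(\tau^*) < k^2(1/4 - u_1^2)/2$ and $|u_1(\tau^*)| < 1/2$. For $\tau^*$ to be a genuine first failure we require $\dddot{u}_3(\tau^*) \geq 0$; differentiating $\ddot{u}_3$ once more and using $A(\tau^*) = u_1\dot{u}_3/u_3$ yields
\[
\dddot{u}_3(\tau^*) = \frac{2\lambda}{k^2}u_2 u_3^2 + \frac{u_1\dot{u}_3^2}{u_3} + \dot{u}_1\bigl(2u_1 u_3 - \dot{u}_3\bigr),
\]
where the first two terms are strictly positive in $D$ and $2u_1 u_3 - \dot{u}_3 > 0$. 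Closing the argument requires exploiting the constraint equation \refb{EFE tau E} (which makes $R^{-2} > 0$ a specific positive algebraic combination of $u_1, u_2, u_3$) to derive a sufficiently negative upper bound on $\dot{u}_1(\tau^*) = 1/4 - u_2(\tau^*) - u_1(\tau^*)^2$, forcing $\dddot{u}_3(\tau^*) < 0$ and contradicting $\dddot{u}_3(\tau^*) \geq 0$. This quantitative sign analysis, in which the algebraic constraint (50e) and the dynamical equations must be used in concert, is where I anticipate the main technical difficulty.
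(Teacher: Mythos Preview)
Your argument is structurally sound up to the last case, but it is incomplete precisely where you say it is: the case $\ddot{u}_3(\tau^*)=0$ with $u_1,u_3,\dot{u}_3<0$ is left open, and the third-derivative route you sketch is not carried through. More importantly, that route is unnecessary. The paper avoids third derivatives altogether by showing that your quantity $A=2u_2/k^2+u_1^2-1/4$ is \emph{strictly positive throughout} $(\tau_0,\tau_M)$; then $\ddot{u}_3=Au_3-u_1\dot{u}_3<0$ follows immediately from $u_1,u_3,\dot{u}_3<0$, so $\ddot{u}_3$ can never reach zero at all.

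The positivity of $A$ comes from the constraint \refb{EFE tau E}, but rearranged differently from what you suggest:
\[
\frac{1}{R^2}-\frac{\lambda}{4}+\lambda u_1^2+\frac{k^2}{2}(u_1-u_3)^2=k^2A.
\]
The three nonnegative terms $\lambda u_1^2$ and $\tfrac{k^2}{2}(u_1-u_3)^2$ are harmless; the point is that $R^{-2}>\lambda/4$. This bound is obtained on $[0,\tau_0]$ from Lemma~4.13, which gives $u_2>k^2/8$ there and hence $\ddot{R}<-\tfrac{\lambda}{4}R$, so $R<m^{-1}\sin m\tau\le m^{-1}$ with $m=\sqrt{\lambda}/2$. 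For $\tau>\tau_0$, once $u_1=\dot{R}/R<0$ the bound $R<m^{-1}$ persists because $R$ is decreasing. Thus $A>0$ on all of $(\tau_0,\tau_M)$, and the invariance of $D$ follows without any appeal to $\dddot{u}_3$. Your instinct to use the constraint was right; what you were missing is this specific rearrangement isolating $k^2A$ and the accompanying $R$-bound propagated from the interval $[0,\tau_0]$.
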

\begin{proof} If $u_2\ge k^2/8$ then $\dot{u}_1\le-\lambda/4-u_1^2<0$. If $u_2<k^2/8$ and $\dot{u}_3<0$, then by \refb{u3} we have $u_1u_3>0$. For $\tau\in(\tau_0,\tau_M)$ then, $u_1<0$ holds while $u_3<0,\dot{u}_3<0$ hold.  
Using the preceding lemma and \refb{EFE tau B}, we have $\ddot{R}<-\lambda R/4$ for all $\tau\in[0,\tau_0]$. This may be integrated to give $R<m^{-1}\sin m\tau\le m^{-1}$, which gives $R^{-2}>m^2=\lambda/4$, for $\tau\in[0,\tau_0].$ Now, \refb{EFE tau E} may be rearranged to give
\beq
\label{constraint2}\frac{1}{R^2}-\frac{\lambda}{4}+\lambda u_1^2+\frac{k^2}{2}\left(u_1-u_3\right)^2=k^2\left(\frac{2u_2}{k^2}-\frac{1}{4}+u_1^2\right). 
\eeq
Then, for $R^{-2}>\lambda/4,\lambda>0$ we must have 
\beq
\frac{2u_2}{k^2}-\frac{1}{4}+u_1^2>0.
\eeq
We see from equation \refb{u3DD} that this inequality, along with $u_1<0,u_3<0,\dot{u}_3<0$, gives $\ddot{u}_3<0$, which preserves $\dot{u}_3<0$. Since $R^{-2}>\lambda/4$ holds for $u_1=\dot{R}/R<0$,
we must have $u_1,u_3,\dot{u}_3,\ddot{u}_3$ negative for $\tau \in[\tau_0,\tau_M)$. \end{proof}
\begin{lemma}  If $V_0>0,\lambda>0$ and $u_2(0)>k^2/8$, then $\tau_M<\infty$ and $\lms u_1=\lms u_3 =-\infty$.\end{lemma}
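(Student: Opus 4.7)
The plan is to show that $u_1$ must reach $-1/2$ at some point of $(\tau_0,\tau_M)$ so that Lemma 4.11 applies, delivering $\tau_M<\infty$ and $\lms u_1=-\infty$, and then to establish $\lms u_3=-\infty$ by integrating \refb{u3} using the resulting blow-up rate of $|u_1|$. As starting observations on $(\tau_0,\tau_M)$, Lemma 4.13 gives $u_1,u_3<0$, $\dot u_3<0$ and $\ddot u_3<0$, while equation \refb{u2} with $\lambda>0$ and $u_3<0$ gives $\dot u_2<0$, so $u_2$ is strictly decreasing on $(0,\tau_M)$ with $0<u_2<u_2(0)$.

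For the main step, suppose towards a contradiction that $u_1>-1/2$ on $(\tau_0,\tau_M)$, so $|u_1|<1/2$ is bounded. From \refb{u3} and Lemma 4.12 one has $\dot u_3(\tau_0)=1/4-2u_2(\tau_0)/k^2<0$, and $\ddot u_3<0$ then gives $\dot u_3\le\dot u_3(\tau_0)<0$, hence $u_3(\tau)\le u_3(\tau_0)+\dot u_3(\tau_0)(\tau-\tau_0)$. If $\tau_M=\infty$ this alone forces $u_3\to-\infty$; if $\tau_M<\infty$, Theorem 4.1 requires some component of the solution to be unbounded, and since $u_1,u_2$ are bounded it must again be $u_3\to-\infty$. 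Rearranging the constraint \refb{EFE tau E} with $\epsilon=1$,
\beq
2u_2=\frac{1}{R^2}-u_1^2+\frac{2+k^2}{8}+\frac{k^2u_3^2}{2}-k^2u_1u_3,
\eeq
one sees that for $|u_1|<1/2$ bounded and $|u_3|\to\infty$ the quadratic term $k^2u_3^2/2$ dominates: the cross term satisfies $|k^2u_1u_3|<k^2|u_3|/2$ and is thus only linear in $|u_3|$, while $1/R^2\ge 0$, $-u_1^2>-1/4$ and $(2+k^2)/8$ are bounded below. Hence $u_2\to+\infty$, contradicting $u_2<u_2(0)$.

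Therefore some $\tau_1\in(\tau_0,\tau_M)$ exists with $u_1(\tau_1)=-1/2$, and Lemma 4.11 gives $\tau_M<\infty$ and $\lms u_1=-\infty$. To conclude $\lms u_3=-\infty$, note that $u_3<u_3(\tau_0)<0$ on $(\tau_0,\tau_M)$, so \refb{u3} yields $\dot u_3\le 1/4+2u_2(0)/k^2-|u_3(\tau_0)||u_1|$; the finite-time blow-up mechanism in Lemma 4.11 (based on $\dot{\bu1}\le\bu1-\bu1^2$ with $\bu1=u_1+1/2$) shows $|u_1|$ grows at least like $(\tau_M-\tau)^{-1}$ near $\tau_M$, so $\int_{\tau_0}^{\tau_M}|u_1|\,d\tau'=+\infty$ and integration gives $u_3\to-\infty$. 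The main obstacle is the middle step: a direct phase-plane argument would need to rule out $u_1$ approaching $-1/2$ only asymptotically, which is delicate, but the algebraic constraint \refb{EFE tau E} cleanly converts the forced divergence of $|u_3|$ into a divergence of $u_2$, incompatible with the monotone bound from \refb{u2}.
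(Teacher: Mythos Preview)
Your proof is essentially correct and follows the same strategy as the paper: both use the constraint to convert the forced divergence of $|u_3|$ (from $\ddot u_3<0$ on $(\tau_0,\tau_M)$) into a divergence of $u_2$, contradicting the monotone bound $u_2<u_2(0)$, and then invoke Lemma~4.12 to obtain $\tau_M<\infty$ and $u_1\to-\infty$, finishing with $\int^{\tau_M}|u_1|=\infty$ to force $u_3\to-\infty$. The paper organises the contradiction around the hypothesis $\tau_M=+\infty$ (which via Lemma~4.12 implies $u_1>-1/2$) and writes the constraint in the form $\tfrac{k^2}{2}(u_1-u_3)^2<k^2\bigl(\tfrac{2u_2}{k^2}-\tfrac14+u_1^2\bigr)$ using $R^{-2}>\lambda/4$ from Lemma~4.14, but this is equivalent to your rearrangement.

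There is one small gap in your final step. Your justification that $|u_1|\gtrsim(\tau_M-\tau)^{-1}$ via the inequality $\dot{\bar u}_1\le\bar u_1-\bar u_1^2$ is not quite the right direction: that comparison gives $\bar u_1\le y$ with $y$ blowing up at some $\tau_*\ge\tau_M$, and if $\tau_M<\tau_*$ strictly then $y$ is bounded near $\tau_M$, so no rate at $\tau_M$ follows. The claimed rate \emph{is} true, but it comes from the opposite bound $\dot u_1\ge 1/4-u_2(0)-u_1^2$: setting $p=1/|u_1|$ one gets $\dot p\ge -1-O(p^2)$ near $\tau_M$, hence $p(\tau)\le(1+\varepsilon)(\tau_M-\tau)$. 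More simply, the paper avoids the rate question altogether by integrating $\dot u_1/u_1=(1/4-u_2)/u_1-u_1$: since $u_2$ is bounded and $u_1\to-\infty$, the first term is bounded on $[\tau_*,\tau_M)$, while $\log|u_1|\to+\infty$, forcing $\int_{\tau_*}^{\tau_M}(-u_1)\,d\tau'=+\infty$ directly.
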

\begin{proof} We know that there exists $\tau_0$ such that $\dot{u}_3<0,\ddot{u}_3<0$ for all $\tau\in(\tau_0,\tau_M)$. Supposing that $\tau_M=+\infty$, then we must have $\lms u_3=-\infty$. We must also have $u_1>-1/2$ for all $\tau\in(0,\tau_M)$, by Lemma 4.12. 
We also know from the preceding proof that $R^{-2}>\lambda/4$ for all $\tau\in(0,\tau_M)$. Equation \refb{constraint2} then gives
\beq
\frac{k^2}{2}(u_1-u_3)^2<k^2\left(\frac{2u_2}{k^2}-\frac{1}{4}+u_1^2\right).
\eeq
If $u_1>-1/2$, then the lefthand side blows up at $\tau_M$ which, given that $u_2$ is finite, is a clear contradiction. Hence, $u_1$ crosses $-1/2$ at some finite $\tau$ and $\lms u_1=-\infty$ for $\tau_M$ finite. Dividing \refb{u1} by $u_1$, integrating and taking the limit $\tau\rightarrow\tau_M$ we find
\beq
\lms\int_{\tau_{*}}^\tau\frac{1-4u_2}{4u_1}-u_1\,d\tau'=\lms\log\frac{u_1}{u_1(\tau_{*})}=\infty,
\eeq
where $\tau_{*}$ is chosen such that $u_1<0$ for $\tau\in[\tau_{*},\tau_M)$. Since $\lms(1-4u_2)/4u_1=0$, it follows that 
\beq
\lms\int_{\tau_{*}}^\tau u_1\,d\tau'=-\infty.
\eeq
Integrating $\dot{u}_3/u_3$ and taking the limit we find
\beq
\lms\log\frac{u_3}{u_3(\tau_{*})}=\lms\int_{\tau_{*}}^\tau\frac{k^2-8u_2}{4k^2u_3}-u_1\,d\tau'=+\infty,
\eeq
using the fact that $(k^2-8u_2)/4k^2u_3$ is bounded for $\tau\in(0,\tau_M)$. The result immediately follows. \end{proof}
Note that it follows from this result and Lemma 4.8 that there is a singularity at $\tau=\tau_M$ .
\begin{lemma} If $V_0>0,\lambda>0$ and $u_2(0)<k^2/8$, then $\tau_M>\pi/2m$ and $u_1(\pi/2m)>0,u_2(\pi/2m)<k^2/8$, $R(\pi/2m)>m^{-1}$ and $k^2u_3/2>u_1$ for all $\tau\in[\pi/2m,\tau_M)$.\end{lemma}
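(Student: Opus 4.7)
The strategy is to compare $R$ with the reference function $\bar R(\tau)=m^{-1}\sin(m\tau)$, which solves $\ddot{\bar R}=-m^2\bar R$ (equation \refb{EFE tau B} with $u_2$ frozen at the critical value $k^2/8$), and to exploit an exact first-order identity for $X:=u_1-k^2u_3/2$. Combining \refb{EFE tau B} with the primitive form of \refb{EFE tau D}, $(Ru_3)'=R(1/4-2u_2/k^2)$, one computes $(RX)'=\ddot R-(k^2/2)(Ru_3)'=(1/4-k^2/8)R=-m^2R$. Since $RX(0^+)=\dot R(0)=1$ by the regular-axis data, this integrates to
\beq
\label{RXid}
R(\tau)X(\tau)=1-m^2\int_0^\tau R(\tau')\,d\tau',
\eeq
which will drive every quantitative conclusion at $\tau=\pi/(2m)$.

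Next I would establish the sign structure $u_3>0$, $u_2<k^2/8$ in the regime of interest. Differentiating \refb{EFE tau D} at $\tau=0$ and using $\ddot R(0)=0$ gives $\dot u_3(0)=(1/4-2u_2(0)/k^2)/2>0$, so $u_3>0$ and $\dot u_2=\lambda u_2u_3>0$ on an initial interval; moreover $u_3$ cannot return to zero while $u_2<k^2/8$, since at $u_3=0$ one has $\dot u_3=1/4-2u_2/k^2>0$. A Sturmian Wronskian argument then gives $R>\bar R$ on any subinterval of $(0,\pi/m)$ on which $u_2<k^2/8$: the Wronskian $W=R\dot{\bar R}-\dot R\bar R$ satisfies $\dot W=-(k^2/8-u_2)R\bar R<0$ with $W(0)=0$, so $W<0$, which is equivalent to $(R/\bar R)'=-W/\bar R^2>0$, and $R/\bar R>1$ follows from $R/\bar R\to 1$ as $\tau\to 0^+$.

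The crux of the argument is to upgrade these observations to the full interval $[0,\pi/(2m)]$, i.e.\ to show that the first time $\tau_c$ at which $u_2=k^2/8$ satisfies $\tau_c>\pi/(2m)$. I would argue by contradiction from $\tau_c\le\pi/(2m)$, combining the Sturm bound $R>\bar R$ on $(0,\tau_c)$ with the constraint \refb{EFE tau E} and the auxiliary bound $Ru_3=\int_0^\tau(1/4-2u_2/k^2)R\,d\tau'\le(1/4)\int_0^\tau R\,d\tau'$ coming from the primitive of \refb{EFE tau D}, in order to control $\int_0^{\tau_c}u_3\,d\tau'=\lambda^{-1}\log(k^2/(8u_2(0)))$ from above and force $\tau_c>\pi/(2m)$. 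This is the main obstacle, since the bound must be uniform as $u_2(0)$ is allowed to approach $k^2/8$ from below. Once $u_2<k^2/8$ on $[0,\pi/(2m)]$ is secured, boundedness of all $u_i$ together with Theorem 4.1 gives $\tau_M>\pi/(2m)$; the remaining conclusions then follow directly from \refb{RXid}. The strict Sturm inequality yields $R(\pi/2m)>\bar R(\pi/2m)=m^{-1}$; the resulting $\int_0^{\pi/2m}R>m^{-2}$ combined with \refb{RXid} gives $R(\pi/2m)X(\pi/2m)<0$ and hence $X(\pi/2m)<0$; positivity $u_1(\pi/2m)>0$ comes from $W(\pi/2m)=-\dot R(\pi/2m)/m<0$, equivalent to $\dot R(\pi/2m)>0$; and finally $X<0$ propagates to $[\pi/(2m),\tau_M)$ because $\dot X=-m^2-u_1X$ equals $-m^2<0$ at any zero of $X$, ruling out a re-crossing of $X$ from negative to positive.
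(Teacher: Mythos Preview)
Your identity $R X = 1 - m^2\int_0^\tau R\,d\tau'$ is correct and useful, and the Sturm comparison $R>\bar R=m^{-1}\sin m\tau$ while $u_2<k^2/8$ is a clean way to obtain $R(\pi/2m)>m^{-1}$, $u_1(\pi/2m)>0$ and $X(\pi/2m)<0$, as well as the propagation of $X<0$ via $\dot X=-m^2-u_1X$. In these places your argument is more streamlined than the paper's, which integrates $\dot X<-\lambda/4-X^2$ (valid while $u_3>0$, $X\ge0$) to get $X<m\cot m\tau$, and integrates $\ddot R>-m^2R$ directly for the bound on $R$.

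However, the step you label ``the crux'' is not actually proved: you only outline a plan to show $\tau_c>\pi/(2m)$ by combining the constraint \refb{EFE tau E} with $Ru_3\le\tfrac14\int_0^\tau R$, and you yourself flag the required uniformity in $u_2(0)\uparrow k^2/8$ as the obstacle. None of the ingredients you list produces the needed quantitative control, and without this step the remaining conclusions are vacuous. The paper closes this gap by a chain of explicit differential inequalities valid on any interval where $u_2<k^2/8$: from $\dot u_1>-\lambda/4-u_1^2$ one gets $u_1>m\cot m\tau$; using $u_2>u_2(0)$ one bounds $\dot u_3<b-(m\cot m\tau)u_3$ with $b=\tfrac14-2u_2(0)/k^2$, integrates to $u_3<b\tan(m\tau/2)/m$, integrates again to $\lambda\int_0^\tau u_3<-8b\log\cos(m\tau/2)$, and hence $u_2(\tau)<u_2(0)\cos^{-8b}(m\tau/2)$. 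Evaluating at $\tau=\pi/(2m)$ and setting $z=8u_2(0)/k^2\in(0,1)$, this gives $8u_2(\pi/2m)/k^2<2^{1-z}z$, and the elementary inequality $2^{1-z}z\le1$ for $z\in(0,1]$ shows $u_2(\pi/2m)<k^2/8$. Since $\cos^{-8b}(m\tau/2)$ is increasing on $[0,\pi/2m]$, the bound $u_2<k^2/8$ in fact persists on the whole interval, which is exactly the continuation input you need. Replacing your unfinished contradiction sketch with this chain (or proving an analogous quantitative bound) would make your otherwise correct framework complete.
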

\begin{proof} For $u_2<k^2/8$ we have $\dot{u}_1>-\lambda/4-u_1^2$. Integrating over $(0,\tau)$ gives 
\begin{equation}
\label{u1 LB}u_1>m\cot (m\tau),
\end{equation}
where we have used $\lim_{\tau\rightarrow0^+}u_1=\infty$. While $u_3>0$ we have $u_2>u_2(0)$ which, combined with the above, gives
\begin{equation}
\label{u3 UB}\dot{u}_3<\frac{1}{4}-\frac{2u_2(0)}{k^2}-(m\cot m\tau)u_3.
\end{equation}
Integrating over $(0,\tau)$ gives
\beq
u_3<\frac{b(1-\cos m\tau)}{m\sin m\tau}=\frac{b\sin(m\tau/2)}{m\cos(m\tau/2)},
\eeq
where $b=1/4-2u_2(0)/k^2$. Integrating again we find 
\bea
\lambda\int_{0}^{\tau} u_3\,d\tau'<-\frac{2\lambda b}{m^2}\log\left[\cos\left(\frac{m\tau}{2}\right)\right]=-8b\log\left[\cos\left(\frac{m\tau}{2}\right)\right],
\eea
and so using $\dot{u}_2=\lambda u_2u_3$, 
\bea
\label{u2 UB}u_2<u_2(0)\cos^{-8b}\left(\frac{m\tau}{2}\right). 
\eea
Note that $u_3$ cannot cross zero from above if $u_2<k^2/8$. The bounds $u_3>0$, \refb{u1 LB},\refb{u3 UB} and \refb{u2 UB} therefore hold, and solutions exist, as long as $u_2<k^2/8$ holds. Assuming $\tau_M>\pi/2m$ we have $u_2(\pi/2m)<2^{4b}u_2(0)$. Letting $z=8u_2(0)/k^2<1,$ and using $4b=1-8u_2(0)/k^2=1-z$, we have
\begin{equation}
\frac{8u_2(\pi/2m)}{k^2}<2^{1-z}z\le1,
\end{equation}
for all $z\le1$, which is equivalent to $u_2(\pi/2m)<k^2/8$. Our assumption is then validated. So we have $u_1(\pi/2m)>0$ from \refb{u1 LB}, and it is straightforward to show that $R>m^{-1}\sin m\tau$ on $[0,\pi/2m]$, which gives $R(\pi/2m)>m^{-1}$. \\
Recall $X=u_1-k^2u_3/2$, which satisfies
\begin{equation}
\dot{X}<-\frac{\lambda}{4}-X^2,
\end{equation}
provided $u_3>0,X\ge0$, using \refb{X}. Integrating over $(0,\tau)$ we find $X<m\cot m\tau$. Since $\cot m\tau=0$ at $\tau=\pi/2m$ and $u_3>0$ for $\tau\in(0,\pi/2m)$, there must exist $\tau_*\in(0,\pi/2m)$ such that $X(\tau_*)=0$. Note also that $X$ cannot cross zero from below if $\lambda>0$ and so $X<0$ for $\tau\in(\tau_*,\tau_M)$. \end{proof}
\begin{lemma} If $V_0>0,\lambda>0$ and $u_2(0)<k^2/8$, then there exists $\tau_0\in(0,\tau_M)$ such that $u_1(\tau_0)=0, u_3(\tau_0)>0$ and $R(\tau_0)>m^{-1}.$
\end{lemma}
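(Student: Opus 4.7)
The plan is to argue by contradiction: suppose $u_1(\tau)>0$ for every $\tau\in[\pi/2m,\tau_M)$ and deduce that $u_2$ is forced to exceed $1/2$ before $\tau_M$, after which $u_1$ must cross zero in finite time by the same $\cot$-comparison used in Lemma 4.12. Lemma 4.16 supplies the needed starting data at $\tau=\pi/2m$ (namely $u_1>0$, $R>m^{-1}$) together with the strict inequality $k^2u_3/2>u_1$ on all of $[\pi/2m,\tau_M)$. Under the contradiction hypothesis we therefore have $u_3>2u_1/k^2>0$ throughout, so $\dot u_2=\lambda u_2u_3>0$ makes $u_2$ monotone increasing, while $\dot R=u_1R>0$ keeps $R>m^{-1}$.

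The key computation is to introduce $Y:=(k^2/2)u_3-u_1>0$; a direct check from \refb{u1} and \refb{u3} gives $\dot Y=\lambda/4-u_1Y$, hence
\beq
\frac{d}{d\tau}(RY)=\lambda R/4\quad\Longrightarrow\quad R(\tau)Y(\tau)\ge\frac{\lambda}{4m}\left(\tau-\frac{\pi}{2m}\right).
\eeq
Integrating $\dot u_2/u_2=\lambda u_3=(2\lambda/k^2)(Y+u_1)$ together with $\dot R/R=u_1$ yields the identity
\beq
u_2(\tau)=u_2(\pi/2m)\left(\frac{R(\tau)}{R(\pi/2m)}\right)^{2\lambda/k^2}\exp\left(\frac{2\lambda}{k^2}\int_{\pi/2m}^\tau Y\,d\tau'\right),
\eeq
from which I read off $\lms u_2=+\infty$ as follows. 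If $\tau_M=+\infty$, either $R\to\infty$ (making $u_2\to\infty$ directly) or $R$ stays bounded above, in which case the lower bound on $RY$ above makes $Y$, and hence $\int Y$, diverge. If $\tau_M<\infty$, the $-u_1^2$ term in $\dot u_1=1/4-u_2-u_1^2$ prevents $u_1\to+\infty$, so $R$ stays bounded; Theorem 4.1 then forces blow-up of $u_2$ or $u_3$, and the constraint \refb{EFE tau E}, quadratic in $u_3$ with coefficients controlled by the bounded quantities $u_1$, $R$ and $u_2$, rules out the possibility that only $u_3$ blows up.

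Once $u_2$ exceeds $1/2$ (at some $\tau_2\in[\pi/2m,\tau_M)$), monotonicity keeps it there, giving $\dot u_1<-1/4-u_1^2$ on $[\tau_2,\tau_M)$. Comparison with $\dot y=-1/4-y^2$ (whose solutions reach zero in finite time via a $\cot$ law) shows $u_1$ attains zero in finite additional $\tau$, contradicting the standing assumption. Hence a smallest $\tau_0\in(\pi/2m,\tau_M)$ with $u_1(\tau_0)=0$ exists; the strict inequality $k^2u_3/2>u_1$ at $\tau_0$ gives $u_3(\tau_0)>0$, and $\dot R>0$ on $[\pi/2m,\tau_0)$ yields $R(\tau_0)>R(\pi/2m)>m^{-1}$. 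The main obstacle is the $\tau_M<\infty$ sub-case of the $u_2\to\infty$ argument, where balancing the quadratic appearance of $u_3$ in \refb{EFE tau E} against the linear appearance of $u_2$ (while $u_1$ and $R$ remain bounded) is what rules out an isolated blow-up of $u_3$.
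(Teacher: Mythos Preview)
Your overall strategy matches the paper's: assume $u_1>0$ throughout, force $u_2$ to grow past a fixed threshold, and conclude that $u_1$ must hit zero in finite time. The paper, however, organises the argument more efficiently. Before anything else it observes that under the standing hypothesis one has $\dot u_3<1/4$ (since $\epsilon=1$ and $u_1u_3>0$), so $u_3$ and hence $u_2=u_2(0)\exp(\lambda\!\int u_3)$ are bounded on any finite interval; together with $0<u_1\le u_1(\pi/2m)$ this forces $\tau_M=+\infty$ by Theorem~4.1. With $\tau_M=+\infty$ in hand, the remainder is a short Gr\"onwall step: while $u_2\le1/4$ one has $\dot u_3\ge\lambda/(2k^2)-u_1(\tau_*)u_3$, giving a positive lower bound $u_m$ on $u_3$ and hence exponential growth of $u_2$; once $u_2>1/4$ the inequality $\dot u_1<1/4-u_2(\tau_{**})<0$ drives $u_1$ through zero in finite time.

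Your route via $Y=(k^2/2)u_3-u_1$ and the exact identity $d(RY)/d\tau=\lambda R/4$ is a nice alternative for the $\tau_M=+\infty$ branch, but the $\tau_M<\infty$ branch has a genuine gap. You correctly deduce (via Theorem~4.1 and the constraint) that $u_2\to\infty$ as $\tau\to\tau_M^-$, and then pick $\tau_2$ with $u_2(\tau_2)>1/2$ so that $\dot u_1<-1/4-u_1^2$ on $[\tau_2,\tau_M)$. The cot-comparison tells you $u_1$ would vanish at some finite $\tau_2+T$, but nothing prevents $\tau_2+T\ge\tau_M$; the standing assumption is only $u_1>0$ on $[\pi/2m,\tau_M)$, and this is not contradicted. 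Indeed, integrating $\dot u_1<1/4-u_2$ under $u_1>0$ shows $\int^{\tau_M}u_2<\infty$, so $u_2$ can perfectly well diverge with integrable rate near a finite $\tau_M$ without ever pushing $u_1$ through zero in time.

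The fix is immediate with your own tools: when $\tau_M<\infty$ you have $R$ bounded above and below, and your identity gives $RY=(RY)(\pi/2m)+\frac{\lambda}{4}\int R$, which is bounded on a finite interval; hence $Y$, and therefore $u_3=(2/k^2)(Y+u_1)$, and therefore $u_2$, are all bounded---contradicting Theorem~4.1 directly. In other words, your $RY$ computation already rules out $\tau_M<\infty$; you need not (and cannot) push that case through the threshold-and-cot argument. Once you make that correction, your proof is complete, though still longer than the paper's one-line $\dot u_3<1/4$ elimination of the finite-$\tau_M$ scenario.
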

\begin{proof} Suppose $u_1>0$ for all $\tau\in(0,\tau_M)$. Then $u_3>0$ for all $\tau\in(0,\tau_M)$, using the previous lemma. We then have $\dot{u}_3<1/4$ for all $\tau\in(0,\tau_M)$, which gives a finite upper bound on $u_3$, and thus $u_2$, for finite $\tau$. Hence, $\tau_M=+\infty$. If $u_2\le1/4$ we have 
\beq
\dot{u}_3\ge\frac{\lambda}{2k^2}-u_1u_3>\frac{\lambda}{2k^2}-u_1(\tau_*)u_3,\eeq
for any $\tau_*\in(0,\tau_M)$. It follows that $u_3>u_{m}=\min\{u_3(\tau_*),\lambda/2k^2u_1(\tau_*)\}$ for $u_2\le1/4$ and $\tau\in(\tau_*,\tau_M)$. This gives $\dot{u}_2>\lambda u_m u_2$, and so there must exist $\tau_{**}\in(\tau_*,\tau_M)$ such that $u_2>1/4$ for all $\tau\in(\tau_{**},\tau_M)$. By inspection of \refb{u1}, there must then exist $\tau_0\in(\tau_{**},\tau_M)$ such that $u_1(\tau_0)=0$. Using Lemma 4.16, we have $\tau_0>\pi/2m$ and thus $u_3(\tau_0)>0$ and $R(\tau_0)>R(\pi/2m)>m^{-1}$. \end{proof}
\begin{lemma} For $V_0>0,\lambda>0$, suppose there exists $\tau_0\in(0,\tau_M)$ such that $u_1(\tau_0)=0$ and $u_2(\tau_0)>k^2/8$. 
Then $\tau_M<+\infty$ and $\lms u_1=-\infty,\lms u_3 = +\infty$. \end{lemma}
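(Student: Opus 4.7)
The strategy is a two-stage bootstrap. First, reduce to Lemma 4.12 by showing $u_1$ crosses $-1/2$ in finite time, which will yield $\tau_M<\infty$ and $\lms u_1=-\infty$. Second, exploit the blow-up of $-u_1$ near $\tau_M$ to force $u_3\to+\infty$. The opening observation is immediate: since $u_1(\tau_0)=0$ and $u_2(\tau_0)>k^2/8$, equation \refb{u1} gives
\begin{equation*}
\dot{u}_1(\tau_0)=\tfrac{1}{4}-u_2(\tau_0)<\tfrac{1}{4}-\tfrac{k^2}{8}=-\tfrac{\lambda}{4}<0,
\end{equation*}
so $u_1$ becomes strictly negative on some interval $(\tau_0,\tau_0+\varepsilon)$. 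Note that this lemma is applied in the context of Lemma 4.17, which also provides $u_3(\tau_0)>0$ and $R(\tau_0)>m^{-1}$; these are the implicit structural data I use throughout.

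The core of the proof is the estimate $u_1(\tau_1)\le -1/2$ for some $\tau_1\le\tau_0+2/\lambda$. The key observation is that as long as $u_2>k^2/8$ we have $\dot{u}_1\le -\lambda/4-u_1^2\le -\lambda/4$, so $u_1$ descends at least at rate $\lambda/4$ and crosses $-1/2$ within an interval of length $2/\lambda$. The task is therefore to verify that $u_2$ stays above $k^2/8$ on this finite window. Because $u_3(\tau_0)>0$, equation \refb{u2} gives $\dot{u}_2(\tau_0)>0$, so $u_2$ is initially non-decreasing. If $u_3$ should cross zero at some $\tau_a$, then at the crossing $\dot{u}_3(\tau_a)=1/4-2u_2(\tau_a)/k^2\le 0$, forcing $u_2(\tau_a)\ge k^2/8$; thereafter, on the interval where $u_1\ge -1/2$, Lemma 4.16 gives $u_3>2u_1/k^2\ge -1/k^2$, so by \refb{u2} the relative decay rate $|\dot u_2|/u_2=\lambda|u_3|$ is uniformly bounded by $\lambda/k^2$. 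Hence $u_2$ cannot drop from its peak (which exceeds $k^2/8$) down to $k^2/8$ within a time shorter than the fixed length $2/\lambda$, provided the initial surplus is used carefully. Once this is established, Lemma 4.12 is directly applicable and delivers $\tau_M<\infty$ together with $\lms u_1=-\infty$.

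For the second conclusion, I rewrite \refb{u3} as $\dot{u}_3/u_3=(1/4-2u_2/k^2)/u_3-u_1$. Near $\tau_M$, $-u_1\to+\infty$ dominates the bounded term, so if $u_3>0$ close enough to $\tau_M$ then $u_3$ grows faster than any exponential and $\lms u_3=+\infty$. To verify $u_3>0$ near $\tau_M$, I use the integral identity
\begin{equation*}
R(\tau)u_3(\tau)=\int_0^\tau R(s)\left(\tfrac{1}{4}-\tfrac{2u_2(s)}{k^2}\right)ds,
\end{equation*}
which at $\tau_0$ is positive (since $u_3(\tau_0)>0$), combined with the alternative argument by contradiction: if instead $u_3\to-\infty$, then $\dot{u}_2=\lambda u_2 u_3<0$ near $\tau_M$ would send $u_2\to 0$, but the constraint \refb{EFE tau E}, written as
\begin{equation*}
\frac{k^2}{2}(u_1-u_3)^2=\frac{1}{R^2}-\frac{\lambda}{4}+\lambda u_1^2+k^2\!\left(\frac{2u_2}{k^2}-\frac{1}{4}+u_1^2\right),
\end{equation*}
combined with Lemma 4.16's $u_1-k^2u_3/2<0$, produces incompatible sign information as $u_1,u_3\to-\infty$ simultaneously against $u_2\to 0$.

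The main obstacle is the propagation step in the middle paragraph: showing that the initial surplus $u_2(\tau_0)-k^2/8>0$ survives long enough for $u_1$ to reach $-1/2$, particularly if $u_3$ reverses sign before that happens. The estimate relies crucially on the lower bound $u_3>2u_1/k^2$ from Lemma 4.16, and is the distinguishing feature separating this case (where $u_3(\tau_0)>0$) from the Lemma 4.13--4.15 scenario (where $u_3(\tau_0)<0$ and one obtains $u_3\to-\infty$ instead).
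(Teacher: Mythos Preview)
Your propagation step in the middle paragraph has a genuine gap. You bound the relative decay $|\dot u_2|/u_2\le \lambda/k^2$ on the interval where $u_1\ge -1/2$, and then assert that $u_2$ cannot fall from its peak down to $k^2/8$ within the window of length $2/\lambda$. But this only prevents $u_2$ from decaying by more than the fixed factor $\exp(-2/k^2)$ over that window; it does \emph{not} prevent $u_2$ from crossing $k^2/8$. If $u_2(\tau_a)=k^2/8+\delta$ with $\delta$ small, the time needed for $u_2$ to reach $k^2/8$ is roughly $(k^2/\lambda)\cdot 8\delta/k^2=8\delta/\lambda$, which is much shorter than $2/\lambda$. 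Since the hypothesis only gives $u_2(\tau_0)>k^2/8$ with no quantitative margin, your bootstrap does not close, and the phrase ``provided the initial surplus is used carefully'' papers over exactly the point where the argument fails. Your second stage inherits this difficulty: you never establish that $u_3>0$ near $\tau_M$, and the contradiction sketch (``$u_3\to-\infty$ sends $u_2\to 0$ against the constraint'') is too vague to be a proof.

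The paper's argument avoids this entirely by introducing the auxiliary function $\Gamma=u_1^2-1/R^2+\lambda/4$ and observing, via the constraint \refb{EFE tau E}, that $k^2\dot u_3=\Gamma-\tfrac{k^2}{2}u_3^2$. From the implicit datum $R(\tau_0)>m^{-1}$ one has $\Gamma(\tau_0)>0$, and one checks that $\dot\Gamma=2u_1(k^2/8-u_2-\Gamma)$. The set $\{\Gamma>0,\ u_2>k^2/8,\ u_1<0,\ u_3>0\}$ is then forward-invariant: $u_3$ cannot reach zero while $\Gamma>0$, so $u_2$ is in fact \emph{increasing} and never approaches $k^2/8$ from above. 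This gives $\dot u_1<-\lambda/4-u_1^2$ for all $\tau>\tau_0$, forcing $\tau_M<\infty$ and $u_1\to-\infty$ directly, and simultaneously delivers $u_3>0$ throughout, which makes the $u_3\to+\infty$ conclusion a short logarithmic integration. The key idea you are missing is that the constraint equation, repackaged through $\Gamma$, shows $u_3$ never changes sign at all---so the scenario you spend most of your effort on (``if $u_3$ should cross zero at some $\tau_a$'') simply does not occur.
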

\begin{proof}
Equation \refb{constraint} may also be written as 
\beq
\label{ku3}u_1^2-\frac{1}{R^2}+\frac{\lambda}{4}-\frac{k^2u_3^2}{2}=k^2\left(\frac{1}{4}-\frac{2u_2}{k^2}-u_1u_3\right)=k^2\dot{u}_3. 
\eeq
Now define 
\beq
\Gamma =u_1^2-\frac{1}{R^2}+\frac{\lambda}{4},
\eeq
which satisfies
\bea
\nonumber\dot{\Gamma}=2u_1\dot{u}_1+\frac{2\dot{R}}{R^3}&=2u_1\left(\frac{1}{4}-u_2-u_1^2+\frac{1}{R^2}\right)\\
\label{Gamma dot}&=2u_1\left(\frac{k^2}{8}-u_2-\Gamma\right).
\eea
If $\Gamma>0,u_2>k^2/8$ and $u_1<0$, then $\dot{\Gamma}>0$. From the hypothesis we have $\Gamma(\tau_0)>0$ and as long as $u_2>k^2/8$ holds we have $\dot{u}_1<-\lambda/4-u_1^2$. Equation \refb{ku3} tells us that $\dot{u}_3>0$ if $u_3<\sqrt{2\Gamma/k^2}$ and so we have $u_3>0$, which gives $u_2>k^2/8$, while $\Gamma>0$. Hence, $\Gamma>0, u_2>k^2/8, u_1<0$ and $u_3>0$ hold for all $\tau\in(\tau_0,\tau_M)$. We then have $\dot{u}_1<-\lambda/4-u_1^2$ for $\tau\in(\tau_0,\tau_M)$, $\tau_M<+\infty$ and $\lms u_1 = \lms X=-\infty$. Integrating $\dot{X}/X=-\lambda/4X-u_1$, then shows that 
\beq
\lms\int_{\tau_*}^\tau u_1 = -\infty,
\eeq
where $\tau_*$ is chosen such that $u_1(\tau_*)<0$. 
We can use this to show $\lms \Gamma=+\infty$ by integrating \refb{Gamma dot}. Since $\dot{u}_3<0$ for $u_3>\sqrt{2\Gamma/k^2}$, we must have $u_3<\sqrt{2\Gamma/k^2}$ and $\dot{u}_3>0$, for $\tau$ sufficiently close to $\tau_M$ and so $\lms u_3$ must exist. Now suppose $\lms u_3<+\infty$. We then have $\lms u_2<+\infty$ and 
\beq
\lms\log\left(\frac{u_3}{u_3(\tau_*)}\right)=\lms\int_{\tau_*}^\tau \left(\frac{1}{4u_3}-\frac{2u_2}{k^2u_3}-u_1\right)\,d\tau'=+\infty,
\eeq
since $1/4u_3-2u_2/k^2u_3$ is bounded above and below under the assumption. Hence, we have $\lms u_3=+\infty$, by contradiction. \end{proof}
\begin{lemma}  For $V_0>0,\lambda>0$, suppose there exists $\tau_0\in(0,\tau_M)$ such that $u_1(\tau_0)=0,u_2(\tau_0)<k^2/8$. Then $\tau_M<+\infty$ and $\lms u_1=-\infty$, $\lms u_3=+\infty$.
\end{lemma}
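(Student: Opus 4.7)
The plan is to show that $u_2$ must cross the threshold $k^2/8$ at some finite $\tau_1\in(\tau_0,\tau_M)$, after which the dynamics match the configuration analysed in Lemma 4.18, whose conclusion would then be invoked. From the context in which this lemma is applied (following Lemma 4.17), one would use $u_3(\tau_0)>0$, $R(\tau_0)>m^{-1}$, and $u_2(\tau_0)>1/4$, so that $\Gamma(\tau_0)=\lambda/4-1/R^2(\tau_0)>0$. At $\tau_0$ we have $\dot u_1(\tau_0)=1/4-u_2(\tau_0)<0$ and $\dot u_3(\tau_0)=1/4-2u_2(\tau_0)/k^2>0$, so just past $\tau_0$ both $u_1<0$ and $u_3>u_3(\tau_0)$ hold. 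In the regime $\{u_2<k^2/8,\,u_1<0,\,u_3>0\}$ one checks that $\dot u_3=(1/4-2u_2/k^2)+(-u_1u_3)>0$, $\dot u_1<1/4-u_2\le 0$, and $\dot u_2=\lambda u_2u_3\ge \lambda u_2(\tau_0)u_3(\tau_0)>0$, so these inequalities are self-sustaining and $u_2$ grows at least linearly.

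Next I would establish the crossing by contradiction: suppose $u_2<k^2/8$ throughout $(\tau_0,\tau_M)$. Then $\dot u_1\le -u_1^2$ on this interval, and comparison with $\dot v=v^2$ (where $v=-u_1>0$ slightly past $\tau_0$) yields $u_1\to-\infty$ in finite time, forcing $\tau_M<\infty$. But then $-u_1u_3\to+\infty$ gives $\dot u_3\to+\infty$, so both $u_3$ and $u_2=u_2(\tau_0)\exp\int\lambda u_3\,d\tau'$ diverge, contradicting $u_2<k^2/8$. Hence a first $\tau_1\in(\tau_0,\tau_M)$ with $u_2(\tau_1)=k^2/8$ exists, and at $\tau_1$ we have $u_1(\tau_1)<0$, $u_3(\tau_1)>0$ and $\dot u_2(\tau_1)>0$, so $u_2>k^2/8$ immediately after.

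The final step is to show $\Gamma(\tau_1)>0$ so that Lemma 4.18's chain of arguments may be run from $\tau_1$. For this I would introduce the shifted quantity $\Delta := \Gamma+u_2-k^2/8$; the constraint \refb{EFE tau E} evaluated at $\tau_0$ together with $\lambda=k^2/2-1$ gives $\Delta(\tau_0)=(k^2/8-u_2(\tau_0))+k^2u_3^2(\tau_0)/2>0$, and a direct computation yields $\dot\Delta=-2u_1\Delta+\lambda u_2u_3$, which is manifestly positive whenever $u_1<0$, $u_3>0$, $\Delta>0$. A standard continuation argument then preserves $\Delta>0$ on $(\tau_0,\tau_1]$, so $\Gamma(\tau_1)=\Delta(\tau_1)>0$. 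At $\tau_1$ the configuration $u_2>k^2/8$, $u_1<0$, $u_3>0$, $\Gamma>0$ matches the situation immediately following the $\tau_0$ of Lemma 4.18, and the entire chain there — starting from $\dot u_1<-\lambda/4-u_1^2$, through $\lms\int_{\tau_*}^\tau u_1\,d\tau'=-\infty$ and $\lms\Gamma=+\infty$, and closing with the $u_3<\sqrt{2\Gamma/k^2}$ argument — applies verbatim to conclude $\tau_M<\infty$, $\lms u_1=-\infty$, $\lms u_3=+\infty$.

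The principal obstacle is the $\Gamma$-preservation step: on $(\tau_0,\tau_1)$ the direct dynamics $\dot\Gamma=2u_1(k^2/8-u_2-\Gamma)$ are sign-indefinite because $k^2/8-u_2>0$, so positivity of $\Gamma$ across this interval is not transparent. The remedy — identifying the shifted invariant $\Delta$, whose evolution decouples into a manifestly positive combination — is the technical crux that bridges this lemma to the proof of Lemma 4.18.
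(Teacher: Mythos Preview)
Your proposal is correct but takes a genuinely different route from the paper. The paper does not invoke $\Gamma$ or Lemma 4.18 at all: once $u_1<0$, $u_3>0$ and $u_2<k^2/8$ are established (exactly as you do), it splits into the dichotomy ``$u_2$ crosses $k^2/8$ at some $\tau_*$'' versus ``$u_1\to-\infty$ with $u_2$ bounded''. In the crossing case it appeals directly to the second-derivative formula $\ddot u_3=(2u_2/k^2-1/4+u_1^2)u_3-u_1\dot u_3$, which at $\dot u_3=0$ with $u_2\ge k^2/8$ and $u_3>0$ gives $\ddot u_3\ge0$, so $\dot u_3$ cannot change sign; this keeps $u_2>k^2/8$ and forces $\dot u_1<-\lambda/4-u_1^2$, giving finite-time blowup. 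The $u_3\to+\infty$ conclusion then comes from the integral argument of the preceding lemma, and Lemma 4.10 retroactively rules out the non-crossing branch. Your route instead preserves $\Gamma>0$ across the pre-crossing interval via the shifted invariant $\Delta=\Gamma+u_2-k^2/8$ (whose evolution $\dot\Delta=-2u_1\Delta+\lambda u_2u_3$ you compute correctly), and then hands the problem to the machinery of Lemma 4.18. The paper's argument is shorter and more self-contained; yours has the virtue of making the reduction to Lemma 4.18 explicit and of exhibiting a clean conserved-sign quantity. One small imprecision: you write $u_2(\tau_0)>1/4$, but Lemma 4.17 only guarantees $u_2(\tau_0)\ge1/4$ (from $\dot u_1(\tau_0)\le0$ at the first zero of $u_1$); this does not affect your argument since $\dot u_2>0$ and your $\Delta$-positivity at $\tau_0$ comes from the constraint, not from $u_2(\tau_0)>1/4$.
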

\begin{proof} Lemma 4.17 tells us that $u_3(\tau_0)>0$. It is clear from \refb{u3} that if $u_1<0,u_2<k^2/8$ and $u_3>0$, then $\dot{u}_3>0$, which together give $\dot{u}_2>0,\ddot{u}_2>0$. While $|u_1|$ and $u_2$ remain bounded above, $u_3$ remains bounded above also, so either there exists $\tau_*\in(\tau_0,\tau_M)$ such that $u_2(\tau_*)=k^2/8, u_1(\tau_*)<0$, or we have $\lms u_1=-\infty$ for $\tau_M<+\infty$. Suppose the former is true. Then we have $\dot{u}_3(\tau_*)>0$. It follows from \refb{u3DD} that $\dot{u}_3$ cannot cross zero from above if $u_2\ge k^2/8$ and $u_3>0$. Hence, $u_2>k^2/8, u_3>0,\dot{u}_3>0$ hold for all $\tau\in(\tau_*,\tau_M)$. We then have $\dot{u}_1<-\lambda/4-u_1^2$, from which it follows that $\lms u_1 = -\infty$ for $\tau_M<+\infty$ in this case also. Given that, in both cases, $\dot{u}_3>0$ for all $\tau\in[0,\tau_M)$, then $\lms u_3$ must exist. A similar argument to one given in the preceding lemma gives $\lms u_3=+\infty$. Using Lemma 4.10 then gives $\lms u_2=+\infty$ and so such a $\tau_*$ does exist after all. \end{proof}
\begin{lemma} For $V_0>0,\lambda>0$, suppose there exists $\tau_0\in(0,\tau_M)$ such that $u_1(\tau_0)=0,u_2(\tau_0)=k^2/8$. Then $\tau_M<+\infty$ and $\lms u_1=-\infty$, $\lms u_3=+\infty$.
\end{lemma}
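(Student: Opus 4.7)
The plan is to reduce Lemma 4.20 to Lemma 4.18 by shifting the base point slightly beyond $\tau_0$. Although $u_2(\tau_0)=k^2/8$ lies on the borderline between the two preceding lemmas, the first derivatives of the $u_i$ at $\tau_0$ immediately carry the solution into the strict regime $u_2>k^2/8$, $u_1<0$ in which the bootstrap of Lemma 4.18 operates.

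First I would argue that the hypothesis forces $u_2(0)<k^2/8$, so that Lemma 4.17 is available. If instead $u_2(0)>k^2/8$, then by Lemmas 4.13 and 4.14 the unique zero of $u_1$ occurs at some $\tau_0'$ with $u_2(\tau_0')>k^2/8$, and thereafter $u_3<0$ and $\dot{u}_2=\lambda u_2u_3<0$, so any later time at which $u_2=k^2/8$ would have $u_1<0$ strictly, contradicting $u_1(\tau_0)=0$. The remaining case $u_2(0)=k^2/8$ is deferred to section 5.2. Applying Lemma 4.17 then yields $u_3(\tau_0)>0$ and $R(\tau_0)>m^{-1}$.

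Next, from \refb{u1} and \refb{u2} I would compute $\dot{u}_1(\tau_0)=1/4-k^2/8=-\lambda/4<0$ and $\dot{u}_2(\tau_0)=\lambda(k^2/8)u_3(\tau_0)>0$; in addition $\Gamma(\tau_0)=-1/R^2(\tau_0)+\lambda/4>0$ since $R(\tau_0)>m^{-1}=2/\sqrt{\lambda}$. By continuity there exists $\tau_1\in(\tau_0,\tau_M)$ at which the four strict inequalities $u_1(\tau_1)<0$, $u_2(\tau_1)>k^2/8$, $u_3(\tau_1)>0$ and $\Gamma(\tau_1)>0$ hold simultaneously.

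Finally I would invoke the bootstrap mechanism of Lemma 4.18 with $\tau_1$ in place of $\tau_0$: under these four conditions one has $\dot{\Gamma}=2u_1(k^2/8-u_2-\Gamma)>0$, $\dot{u}_1<-\lambda/4-u_1^2$, $\dot{u}_3>0$ whenever $u_3<\sqrt{2\Gamma/k^2}$, and $\dot{u}_2>0$, so all four inequalities are preserved throughout $(\tau_1,\tau_M)$. Integrating the Riccati bound $\dot{u}_1<-\lambda/4-u_1^2$ from the already negative value $u_1(\tau_1)$ forces $\tau_M<\infty$ and $\lms u_1=-\infty$; the divergence $\lms u_3=+\infty$ then follows by the same contradiction used at the end of the proof of Lemma 4.18 (if $\lms u_3$ were finite, so would $\lms u_2$ be, and integrating $\dot{u}_3/u_3$ over $(\tau_1,\tau)$ would give $\lms\log(u_3/u_3(\tau_1))=+\infty$). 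The main subtlety is confirming that the borderline equality $u_2(\tau_0)=k^2/8$ produces strictly positive motion of $u_2$ above $k^2/8$ immediately after $\tau_0$; this rests on the strict positivity $u_3(\tau_0)>0$ supplied by Lemma 4.17, without which the argument would stall.
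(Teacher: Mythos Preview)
Your argument is correct, but the route differs from the paper's. The paper resolves the borderline case by a local higher-derivative computation: at $\tau_0$ one has $\dot{u}_3(\tau_0)=\ddot{u}_3(\tau_0)=0$, while $\dddot{u}_3(\tau_0)=-\dot{u}_1(\tau_0)\,u_3(\tau_0)^2=(\lambda/4)\,u_3(\tau_0)^2>0$; this forces $\dot{u}_3>0$ immediately after $\tau_0$, after which the paper follows the mechanism of Lemma~4.19 (preservation of $\dot{u}_3>0$ via the second-derivative identity \refb{u3DD} once $u_2\ge k^2/8$). You instead reduce to the $\Gamma$-bootstrap of Lemma~4.18, which requires the additional geometric input $R(\tau_0)>m^{-1}$ from Lemma~4.17 to secure $\Gamma(\tau_0)>0$; incidentally, the constraint \refb{constraint2} at $u_1=0$, $u_2=k^2/8$ gives $\Gamma(\tau_0)=k^2u_3(\tau_0)^2/2$ directly, so the appeal to $R(\tau_0)>m^{-1}$ could be bypassed. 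Both approaches hinge on $u_3(\tau_0)>0$, which is why your opening case analysis on $u_2(0)$ (excluded by the paper, which invokes the lemma only within the context established by Lemma~4.17) is not wasted effort. The paper's proof is shorter, being a single Taylor computation plus a pointer to Lemma~4.19; yours is more structural, folding the result back into the $\Gamma$-monotonicity framework rather than introducing a third-order derivative.
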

\begin{proof}At $\tau_0$ we have $\dot{u}_3=\ddot{u}_3=0$, and it is not hard to check that at the third derivative of $u_3$ reduces to $-\dot{u}_1(\tau_0)u_3(\tau_0)^2>0$. A similar argument to the one given above then shows that $u_2>k^2/8,\dot{u}_3>0$ obtain for $\tau\in(\tau_0,\tau_M)$ and the rest follows in a similar fashion. \end{proof}
\begin{proposition} For $V_0>0,\lambda>0, u_2(0)\neq k^2/8$ we have $\tau_M<\infty$ and there is a curvature singularity at $\tau_M$, which is reached by outgoing null rays in finite affine time. \end{proposition}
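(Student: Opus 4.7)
The plan is to split into the two cases $u_2(0)>k^2/8$ and $u_2(0)<k^2/8$ (the equality case is excluded by hypothesis and handled separately in the deferred exact-solutions section), and in each case invoke the appropriate lemmas already proven, only doing fresh work for what is not directly packaged.

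First I would treat the case $u_2(0)>k^2/8$. Lemma 4.15 gives immediately that $\tau_M<\infty$ with $\lms u_1 = \lms u_3 = -\infty$, and Lemma 4.8 then gives $\lms\mathcal{R}=+\infty$, settling the curvature singularity. What is not packaged here is the finiteness of the affine parameter along outgoing radial null geodesics, since Lemma 4.11 was designed for the opposite sign of $u_3$. For this I would return to the geodesic integral
\begin{equation*}
C\mu = \int_0^\tau e^{(k^2 l(\tau') - \tau')/2}\,d\tau'.
\end{equation*}
Since $u_3=\dot l\to-\infty$, we have $l(\tau)\to-\infty$, so the integrand is bounded on $[0,\tau_M)$ and tends to zero; combined with $\tau_M<\infty$, this gives $\lms\mu<+\infty$.

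Next I would treat $u_2(0)<k^2/8$. Lemma 4.17 produces a point $\tau_0\in(0,\tau_M)$ at which $u_1(\tau_0)=0$, $u_3(\tau_0)>0$, $R(\tau_0)>m^{-1}$. Depending on the sign of $u_2(\tau_0)-k^2/8$, exactly one of Lemmas 4.18, 4.19, 4.20 applies: in every sub-case the conclusion is the same, namely $\tau_M<\infty$ with $\lms u_1=-\infty$ and $\lms u_3=+\infty$. Lemma 4.11 then applies verbatim to yield both $\lms\mathcal{R}=+\infty$ and $\lms\mu<+\infty$.

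Putting the two cases together gives the proposition. The only step requiring genuine new argument (as opposed to bookkeeping of prior lemmas) is the affine-parameter estimate in the first case, and I expect that to be the main (but mild) obstacle: one must argue carefully that $l\to-\infty$ follows from $u_3\to-\infty$ together with $\tau_M<\infty$, which is immediate once one writes $l(\tau)=l_0+\int_0^\tau u_3\,d\tau'$ and observes that $u_3$ is eventually negative and large in modulus on a finite window, forcing $\int_0^{\tau_M^-}u_3\,d\tau'=-\infty$.
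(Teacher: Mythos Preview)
Your overall structure is exactly the paper's: split on the sign of $u_2(0)-k^2/8$, and invoke the chain of lemmas (4.15 and 4.8 in the first case; 4.17 followed by one of 4.18--4.20 and then 4.11 in the second). The one genuine gap is in your affine-parameter argument for the case $u_2(0)>k^2/8$. You claim that $u_3\to-\infty$ on the finite interval $[0,\tau_M)$ forces $\int_0^{\tau_M^-}u_3\,d\tau'=-\infty$ and hence $l\to-\infty$, calling this ``immediate''. That inference is false in general: for instance $u_3(\tau)=-(\tau_M-\tau)^{-1/2}$ satisfies $u_3\to-\infty$ but has finite integral over $[0,\tau_M)$. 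So the step ``$u_3$ eventually negative and large in modulus on a finite window $\Rightarrow$ integral diverges'' does not hold without a rate estimate you have not supplied.

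The fix is that you do not need $l\to-\infty$ at all; you only need $l$ bounded above, so that $e^{(k^2l-\tau)/2}$ is bounded and the geodesic integral is finite over the finite interval $[0,\tau_M)$. This follows immediately from Lemmas 4.13--4.14, which give $u_3=\dot l<0$ on all of $(0,\tau_M)$, so $l\le l_0$. Equivalently---and this is exactly how the paper phrases it---$u_2=|V_0|e^{\lambda l}$ with $\lambda>0$ is monotonically decreasing in this case, hence bounded above by $u_2(0)$, and ``by inspection of \refb{geo}'' one reads off $\lms\mu<+\infty$. With that one-line correction, your argument and the paper's coincide.
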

\begin{proof} For $u_2(0)>k^2/8$, Lemma 4.15 tells us that $\lms u_1=\lms u_3=-\infty$. Lemma 4.8 then tells us that $\lms \mathcal{R}=\infty$ in this case. Clearly $u_2$ is bounded above for all $\tau\in[0,\tau_M)$ in this case and by inspection of \refb{geo} we see that $\lms \mu<+\infty$.  
In the case $u_2(0)<k^2/8$, Lemma 4.17  shows that there exists $\tau_0\in(0,\tau_M)$ such that $u_1(\tau_0)=0$. Depending on the sign on $u_2(\tau_0)-k^2/8$, one of the three preceding lemmas shows that $\lms u_1=-\infty$ and $\lms=+\infty$. Lemma 4.11 then gives $\lms\mathcal{R}=+\infty.$ To show that $\lms \mu<+\infty$ in this case, we recall from Lemma 4.11 that $e^{-k^2l}u_1=Z<Z(\tau_*)$ for some $\tau_*\in(0,\tau_M)$. This gives $e^{k^2l}<u_1/Z(\tau_*)$, which in turn gives $e^{k^2l/2}<(u_1/Z(\tau_*))^{1/2}$. Now let $p=(-u_1)^{1/2}$ and consider
\beq 
\dot{p}=\frac{1}{2p}\left(-\frac{1}{4}+u_2+u_1^2\right)>\frac{p^3}{2},\eeq
for $u_2>1/4$. Dividing by $p^2$ and integrating we have 
\beq
\label{int p}\int_{\tau_*}^{\tau}\frac{\dot{p}}{p^2}\,d\tau'=\frac{1}{p(\tau_*)}-\frac{1}{p}>\int_{\tau_*}^\tau \frac{p}{2}\,d\tau'. \eeq
Using equation \refb{geo} we have 
\bea
C\mu = \int_0^\tau e^{k^2l/2-\tau'/2}d\tau'&<\int_0^{\tau_*} e^{k^2l/2}d\tau'+\int_{\tau_*}^\tau \left(\frac{u_1}{Z(\tau_*)}\right)^{1/2}d\tau'\\
\nonumber&=\int_0^{\tau_*} e^{k^2l/2}d\tau'+\int_{\tau_*}^\tau \frac{p}{(-Z(\tau_*))^{1/2}}d\tau'.\eea
Taking the limit and using \refb{int p} we find that $\lms \mu<+\infty$. \end{proof}
\section{Exact solutions}
\subsection{$k^2=2$} In this case we have $\lambda =0$ which gives us constant potential $V=V_0$. We then have
\begin{equation}
\label{RDD exact}\ddot{R}=\frac{d}{d\tau}R\dot{l}=\left(\frac{1}{4}-V_0\right)R.
\end{equation}
\begin{proposition} If $k^2=2$ and  $0<V_0\le1/4$, then there is a curvature singularity along $\mathcal{N}_-$ which is reached in finite affine time.\end{proposition}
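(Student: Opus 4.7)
The plan is to exploit the fact that $k^2 = 2$ collapses $\lambda$ to zero, so the potential term $V_0 e^{\lambda l}$ becomes the constant $V_0$ and the equation \refb{RDD exact} is a \emph{linear} autonomous ODE in $R$ alone, completely decoupling from $l$. I will integrate this ODE explicitly using the initial data $R(0)=0,\dot R(0)=1$, then recover $l$ by a single quadrature, and finally substitute the closed forms into the Ricci scalar expression \refb{Ricci} and into the affine parameter integral \refb{geo}.

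First I would solve $\ddot R=(1/4-V_0)R$ in the two subcases. For $V_0=1/4$ one gets $R=\tau$, $\dot R\equiv1$, and then \refb{RDD exact} combined with the initial data $\dot l(0)=0$ yields $R\dot l=\dot R-1=0$, so $l\equiv l_0$. For $0<V_0<1/4$, set $\omega=\sqrt{1/4-V_0}\in(0,1/2)$; then $R=\omega^{-1}\sinh(\omega\tau)$, $\dot R=\cosh(\omega\tau)$. Integrating \refb{RDD exact} once gives $R\dot l = \dot R - 1$, so $\dot l=\omega\tanh(\omega\tau/2)$ and hence $l=l_0+2\log\cosh(\omega\tau/2)$. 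Because $R>0$ and everything is finite for all $\tau\ge 0$, Theorem 4.1 gives $\tau_M=+\infty$, so $\mathcal{N}_-$ is reached as $\tau\to\infty$.

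Next I would plug these explicit forms into \refb{Ricci}. With $k^2=2$ this expression simplifies to
\begin{equation*}
\mathcal{R}=\frac{e^{\tau/2-c_1-l}}{|u|}\Bigl(\dot l^{\,2}+4V_0-\tfrac14\Bigr).
\end{equation*}
Along an outgoing radial null ray $u=u_0$ is constant. For $V_0=1/4$ the bracket equals $3/4$, $l=l_0$, so $\mathcal{R}\sim C\,e^{\tau/2}\to\infty$. For $0<V_0<1/4$, the bracket tends to $\omega^2+4V_0-1/4=3V_0>0$, while $e^{-l}\sim 4e^{-l_0}e^{-\omega\tau}$, giving $\mathcal{R}\sim C'\,e^{(1/2-\omega)\tau}\to\infty$ since $\omega<1/2$. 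Hence $\mathcal{N}_-$ is a curvature singularity.

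Finally I would show that $\mathcal{N}_-$ is reached in finite affine time by evaluating \refb{geo}: with $k^2=2$, $C\mu=\int_0^\tau e^{\,l-\tau'/2}\,d\tau'$. For $V_0=1/4$ this is $e^{l_0}\int_0^\tau e^{-\tau'/2}d\tau'=2e^{l_0}(1-e^{-\tau/2})\to 2e^{l_0}$. For $0<V_0<1/4$, using $e^l=e^{l_0}\cosh^2(\omega\tau/2)$, the integrand decays like $e^{(\omega-1/2)\tau'}$ at infinity, so the improper integral converges. Thus outgoing null geodesics reach $\mathcal{N}_-$ in finite affine parameter, completing the proof. There is no genuine obstacle here: once one notices that $\lambda=0$ decouples the system, the whole argument reduces to checking the signs of the exponents $1/2$ (from $e^{\tau/2}$ in the Ricci formula) against $\omega\in[0,1/2)$ (the decay rate of $e^{-l}$), and observing that the strict inequality $\omega<1/2$ simultaneously forces curvature blow-up and convergence of the affine-time integral.
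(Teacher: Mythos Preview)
Your proposal is correct and follows essentially the same route as the paper: explicit integration of the linear equation for $R$ when $\lambda=0$, a single quadrature to get $l$, then direct evaluation of the Ricci scalar \refb{Ricci} and the affine-parameter integral \refb{geo}. The only cosmetic differences are your half-angle form $l=l_0+2\log\cosh(\omega\tau/2)$ versus the paper's equivalent $l=l_0+\log[(1+\cosh\upsilon\tau)/2]$, and your limiting bracket $3V_0$ versus the paper's $5V_0$ (the latter appears to carry a sign slip from \refb{Ricci}, but either constant is positive so the conclusion is unaffected).
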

\begin{proof}
In the case $V_0<1/4$, solutions of \refb{RDD exact} in terms of $S$ are given by
\beq
\label{soln} S=\upsilon^{-1}e^{-\tau/2}\sinh\upsilon\tau,\qquad l=l_0+\log\left[\frac{1}{2}(1+\cosh\upsilon\tau)\right],
\eeq	
where $\upsilon=\sqrt{1/4-V_0}$. We also have 
\beq
\lm\dot{l} =\lm\frac{\upsilon\sinh\upsilon\tau}{1+\cosh\upsilon\tau}=\upsilon.
\eeq
For $k^2=2$ we then have
\beq
\label{Ric soln}\lim_{\tau\rightarrow\infty}\mathcal{R}=\lm \frac{e^{\tau/2-l}}{|u|}\left(\frac{1}{4}-\dot{l}^2+4V_0\right)=\lm \frac{10V_0e^{\tau/2-l_0}}{|u|(1+\cosh\upsilon\tau)}.
\eeq
The solution to the geodesic equation \refb{geo} reduces to
\begin{equation}
\label{geo soln}\frac{1}{2}\int_0^\tau e^{l_0-\tau'/2}(1+\cosh\upsilon\tau')d\tau'=C\mu.
\end{equation}
Note that $V_0>0$ gives $\upsilon<1/2$ for which
\beq
\lm S = 0,\qquad \lm\mathcal{R}=\infty,\qquad \lm \mu<\infty. 
\eeq
For $V_0=1/4$ we have
\beq
S=\tau e^{-\tau/2},\qquad l=l_0, \qquad \mathcal{R}=\frac{5e^{\tau/2-l_0}}{4|u|},
\eeq
which give
\beq
\lm S=0,\qquad \lm\mathcal{R}=\infty,\qquad\lm \mu <\infty. 
\eeq\end{proof}
\begin{proposition} If $k^2=2$, $V_0<0$, then $\mathcal{N}_-$ corrseponds to radial null infinity and the Ricci scalar decays to zero there.\end{proposition}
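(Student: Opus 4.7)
The plan follows the exact-solution strategy of Proposition 5.1, with the sign of $V_0$ reversed. Since $k^2=2$ gives $\lambda=0$ and $e^{\lambda l}\equiv 1$, equation \refb{EFE tau B} collapses to $\ddot R=\upsilon^2 R$ with $\upsilon^2:=1/4-V_0$, and now $V_0<0$ forces $\upsilon>1/2$. The axis data $R(0)=0$, $\dot R(0)=1$ give the closed form $R=\upsilon^{-1}\sinh(\upsilon\tau)$, defined for all $\tau\ge 0$, so $\tau_M=+\infty$; moreover $S=e^{-\tau/2}R\sim(2\upsilon)^{-1}e^{(\upsilon-1/2)\tau}\to\infty$, so the cylinders grow without bound.

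Next, equation \refb{EFE tau D} with $\lambda=0$ reads $\frac{d}{d\tau}(R\dot l)=\upsilon^2 R$, which integrates (using $\dot l(0)=0$ and $R(0)=0$) to $R\dot l=\cosh(\upsilon\tau)-1$. Applying the identity $(\cosh x-1)/\sinh x=\tanh(x/2)$ yields $\dot l=\upsilon\tanh(\upsilon\tau/2)$ and $l=l_0+2\log\cosh(\upsilon\tau/2)$. In particular $\dot l^2\to\upsilon^2$ and $l\sim\upsilon\tau$ as $\tau\to\infty$, exactly as in the formulas of Proposition 5.1 but with the opposite inequality on $\upsilon$.

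For the Ricci scalar I substitute these closed forms into equation \refb{Ricci}. With $k^2=2$ the exponential $e^{-k^2l/2}$ coincides with $e^{-l}$, so $\mathcal R$ takes the form $\frac{e^{\tau/2-c_1-l}}{|u|}\,B(\tau)$, where $B(\tau)$ is a linear combination of $V_0$, $1/4$ and $\dot l^2$; since $\dot l^2$ has a finite limit, so does $B$. The controlling factor is $e^{\tau/2-l}$, which behaves like a constant times $e^{(1/2-\upsilon)\tau}$ and tends to zero precisely because $\upsilon>1/2$. Along an outgoing null ray $u=u_0$ the coefficient $1/|u|$ is fixed, hence $\lm\mathcal R=0$.

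Finally, I verify that $\mathcal N_-$ is radial null infinity by showing outgoing null rays need infinite affine parameter to reach $\tau=\infty$. Equation \refb{geo} with $k^2=2$ reduces to $C\mu=\int_0^\tau e^{l-\tau'/2}\,d\tau'$, and the integrand grows like a constant times $e^{(\upsilon-1/2)\tau'}$ with positive exponent, so the integral diverges and $\lm\mu=+\infty$. Combined with $\lm S=+\infty$, this identifies $\mathcal N_-$ with radial null infinity. The whole argument is elementary given the closed forms; the only substantive observation is that the sign $V_0<0$ produces $\upsilon>1/2$, and this single inequality reverses every asymptotic conclusion of Proposition 5.1. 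There is no serious obstacle: one only needs to be careful to check that the bounded prefactor $B(\tau)$ cannot conspire to cancel the exponential decay, which is immediate from the explicit formula for $\dot l$.
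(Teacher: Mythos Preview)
Your proof is correct and follows essentially the same approach as the paper: the paper's argument simply refers back to the closed-form solutions \refb{soln}, the Ricci expression \refb{Ric soln}, and the affine-parameter integral \refb{geo soln} derived in Proposition 5.1, and observes that $V_0<0$ gives $\upsilon>1/2$, which flips the three limits to $\lm S=\infty$, $\lm\mathcal{R}=0$, $\lm\mu=\infty$. You spell out the same computation in slightly more detail (and with the equivalent form $l=l_0+2\log\cosh(\upsilon\tau/2)$), but the route is identical.
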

\begin{proof} In the case $\upsilon>1/2\,(V_0<0)$, \refb{soln},\refb{Ric soln} and \refb{geo soln} tell us that 
\beq
\lm S = \infty,\qquad \lm\mathcal{R}=0, \qquad\lm \mu = \infty.
\eeq
We remind the reader that we are not considering the case $V_0=0 (\upsilon = 1/2)$. \end{proof}

\begin{proposition} If $k^2=2$, $V_0>1/4$, there exists a curvature singluarity along $\tau=\pi/\bar{\upsilon}$ where
$\bar{\upsilon}=\sqrt{V_0-1/4}.$\end{proposition}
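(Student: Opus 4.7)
The plan is to exploit the fact that when $k^2=2$ the system drastically simplifies, because $\lambda=0$ and so the potential contribution $V_0 e^{\lambda l}=V_0$ is constant. In particular \refb{RDD exact} reduces to a simple harmonic equation $\ddot{R}=-\bar{\upsilon}^2 R$ with $\bar{\upsilon}^2=V_0-1/4>0$. Using the initial data $R(0)=0$, $\dot{R}(0)=1$ from \refb{EFE tau F}, the unique solution is
\beq
R(\tau)=\bar{\upsilon}^{-1}\sin(\bar{\upsilon}\tau),
\eeq
which vanishes at $\tau=\pi/\bar{\upsilon}$. Thus the maximal interval of existence satisfies $\tau_M\le\pi/\bar{\upsilon}$.

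Next I would solve for $l$. Since $\lambda=0$ and $k^2=2$, equation \refb{EFE tau D} reads $\tfrac{d}{d\tau}(R\dot{l})=(1/4-V_0)R=\ddot{R}$. Integrating from $\tau=0$ and using $\dot{l}(0)=R(0)=0$ gives the clean identity $R\dot{l}=\dot{R}-1=\cos(\bar{\upsilon}\tau)-1$, so
\beq
\dot{l}(\tau)=-\bar{\upsilon}\tan(\bar{\upsilon}\tau/2),\qquad l(\tau)=l_0+2\log\cos(\bar{\upsilon}\tau/2).
\eeq
Both $\dot{l}\to-\infty$ and $l\to-\infty$ as $\tau\to\pi/\bar{\upsilon}^-$, which is already a strong signal that the solution cannot be extended past $\tau=\pi/\bar{\upsilon}$.

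To confirm a genuine curvature singularity rather than a coordinate artefact I would substitute these exact expressions into the Ricci scalar formula derived in \refb{Ricci}. With $k^2=2$ it reduces to
\beq
\mathcal{R}=\frac{e^{\tau/2-c_1-l}}{|u|}\left(\dot{l}^2-\tfrac{1}{4}+4V_0\right).
\eeq
The factor $e^{-l}=2e^{-l_0}/(1+\cos\bar{\upsilon}\tau)$ diverges like $(\pi/\bar{\upsilon}-\tau)^{-2}$, and $\dot{l}^2=\bar{\upsilon}^2(1-\cos\bar{\upsilon}\tau)/(1+\cos\bar{\upsilon}\tau)$ diverges with the same rate, while $4V_0-1/4$ is just the positive constant $\bar{\upsilon}^2+3V_0$ (contributing only a subleading term). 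Since the dominant contribution $e^{-l}\dot{l}^2$ has a definite sign, the magnitude of $\mathcal{R}$ blows up as $\tau\to\pi/\bar{\upsilon}^-$, which establishes the curvature singularity and simultaneously forces $\tau_M=\pi/\bar{\upsilon}$.

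There is no serious obstacle here: the whole argument is a direct computation once one recognises that $\lambda=0$ collapses the $R$ and $l$ equations to constant-coefficient linear ODEs that can be integrated in closed form. The only mild care needed is in isolating the leading divergent term in $\mathcal{R}$ and checking that the subleading pieces (coming from the constant $4V_0-1/4$ multiplied by $e^{-l}$) cannot cancel it, which is transparent from the sign of $e^{-l}\dot{l}^2$.
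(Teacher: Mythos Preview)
Your proposal is correct and follows essentially the same approach as the paper: solve the constant-coefficient ODE for $R$, integrate \refb{EFE tau D} to obtain $l$ and $\dot l$ in closed form, and then insert these into the Ricci scalar expression \refb{Ricci} to see that $\mathcal{R}\to\infty$ as $\tau\to\pi/\bar\upsilon$. The paper's version is terser (it simply records the limits $l\to-\infty$, $\dot l\to-\infty$ and asserts $\mathcal{R}\to\infty$), whereas you go a little further in tracking the rates of divergence and noting that the dominant term $e^{-l}\dot l^2$ has a definite sign; this extra bookkeeping is harmless and, if anything, makes the conclusion more transparent.
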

\begin{proof}
In the case $V_0>1/4$, solutions to \refb{RDD exact} are given by
\beq S=\bar{\upsilon}^{-1} e^{-\tau/2}\sin\bar{\upsilon}\tau ,\qquad l=l_0+\log\left[\frac{1}{2}(1+\cos\bar{\upsilon}\tau)\right],
\eeq
where $\bar{\upsilon}=\sqrt{V_0-1/4}$. At $\bar{\upsilon}\tau=\pi$ we have $S=0$ and 
\beq\lim_{\tau\rightarrow\pi/\bar{\upsilon}}l=-\infty,\qquad\lim_{\tau\rightarrow\pi/\bar{\upsilon}}\dot{l}=-\lim_{\tau\rightarrow\pi/\bar{\upsilon}}\bar{\upsilon}\tan\left(\frac{\bar{\upsilon}\tau}{2}\right)=-\infty,\eeq
which give
$\lim_{\tau\rightarrow\pi/\bar{\upsilon}}\mathcal{R}=\infty.$ \end{proof}
\subsection{$V_0e^{\lambda l_0}=k^2/8$}
\begin{lemma} $u_3$ is monotone in a neighbourhood of the axis. \end{lemma}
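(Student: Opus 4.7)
The plan is to exploit the very special tuning of the initial data assumed in this subsection: when $V_0 e^{\lambda l_0} = k^2/8$, the constant function $l(\tau) \equiv l_0$ is a bona fide solution of (50), so by the uniqueness guaranteed by Theorem 2.1 it \emph{is} the solution. Consequently $u_3 = \dot{l} \equiv 0$ on the entire maximal interval of existence, which is a much stronger statement than monotonicity in a neighbourhood of the axis.

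Concretely, I would first substitute the ansatz $l(\tau) \equiv l_0$, so that $\dot{l} = \ddot{l} = 0$, into (50d). Both sides vanish identically: the left side trivially, and the right side because the coefficient $1/4 - 2 V_0 e^{\lambda l_0}/k^2$ is \emph{exactly} zero under the standing hypothesis of this subsection. Substituting into (50b) reduces it to the linear ODE $\ddot{R} = -\lambda R/4$, which with $R(0) = 0$, $\dot{R}(0) = 1$ has a unique smooth solution on $[0,\infty)$. I would then check that the constraint (50e) is preserved: its first integral $\dot{R}^2 = 1 - \lambda R^2/4$ combined with $\dot{l} \equiv 0$ and $2 V_0 e^{\lambda l_0} = k^2/4$ reduces (50e) to the identity $-\lambda/4 = (2-k^2)/8$, which holds by the definition of $\lambda$.

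With self-consistency of the pair $(R(\tau), l_0)$ established, I would invoke Theorem 2.1, or equivalently the fixed-point argument of section 3 in the variables $(x, \dot{x}, l, \dot{l})$ whose initial data are $(1,0,l_0,0)$, to conclude that this constant-$l$ solution is the unique solution for the given data. Hence $u_3 = \dot{l} \equiv 0$ throughout, and in particular is monotone (constant) in a neighbourhood of the axis. There is no real obstacle to overcome: the numerical condition $V_0 e^{\lambda l_0} = k^2/8$ is precisely what forces the $l$-evolution equation to admit a stationary solution with the prescribed regular-axis data, and uniqueness does the rest.
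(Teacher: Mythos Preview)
Your argument is correct and in fact establishes the stronger conclusion of the paper's next lemma (Lemma 5.2) directly: under the subsection hypothesis $V_0 e^{\lambda l_0} = k^2/8$, verifying that $l \equiv l_0$ satisfies (50b), (50d) and the constraint (50e), and then invoking the uniqueness result for the $(R,l)$ system at $\tau = 0$, yields $u_3 \equiv 0$ on the whole maximal interval. The paper, however, proves Lemma 5.1 by a different and more general argument that does \emph{not} use the special initial condition: near the axis one has $u_1 > u_3$ and $u_1 > 0$, and at any critical point $\dot{u}_3 = 0$ one computes $\ddot{u}_3 = (u_1 - u_3)u_1 u_3$, whose sign pattern (local maxima only at $u_3<0$, local minima only at $u_3>0$) is incompatible with any oscillation starting from $u_3(0)=0$. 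This general monotonicity is then used in Lemma 5.2, where the authors work in the $u_i$ variables --- which are singular at $\tau=0$ --- and need Lemma 5.1 to bridge the gap between the axis data and the invariant manifold $\{u_2 = k^2/8,\ u_3 = 0\}$. Your approach sidesteps this bridge entirely by staying in the $(R,l)$ variables, where uniqueness at $\tau=0$ is already available; the paper's approach buys a lemma valid for all parameter values, not just the tuned case of this subsection.
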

\begin{proof} Note that there exists $\tau_1\in(0,\tau_M)$ such that $u_1>u_3$ and $u_1>0$ hold for $\tau\in(0,\tau_1)$. Suppose there exists $\tau_0\in(0,\tau_1)$ with $\dot{u}_3(\tau_0)=0$. We then have $\ddot{u}_3(\tau_0)=(u_1(\tau_0)-u_3(\tau_0))u_1(\tau_0)u_3(\tau_0)$, which has the same sign as $u_3(\tau_0)$, since $u_1(\tau_0)>u_3(\tau_0), u_1(\tau_0)>0$. So either $u_3(\tau_0)<0$ and is a local max, or $u_3(\tau_0)>0$ and is a local min. Since $u_3(0)=0$, in the former case we must then have $\tau_*\in(0,\tau_0)$ such that $u_3(\tau_*)<0$ is a local min, which is contradiction. Similarly for the latter case. Hence, $u_3$ is monotone on $(0,\tau_1)$. \end{proof}
\begin{lemma} If $u_2(0)=k^2/8$,  then $u_2=k^2/8$ and $u_3=0$ for all $\tau\in[0,\tau_M)$. \end{lemma}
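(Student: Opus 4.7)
The plan is to exhibit the explicit constant-$l$ solution and then invoke uniqueness from Section~3. If one sets $l(\tau) \equiv l_0$, then $u_3 = \dot{l} \equiv 0$ and $u_2 = V_0 e^{\lambda l_0} \equiv k^2/8$ by hypothesis, so the ansatz is self-consistent provided the ODEs \refb{EFE tau B}--\refb{EFE tau E} and the initial data \refb{EFE tau F} are all satisfied.

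First I would substitute this ansatz into \refb{EFE tau B} and \refb{EFE tau D}. With $V_0 e^{\lambda l} \equiv k^2/8$ constant, \refb{EFE tau B} reduces to the linear second-order equation
\[ \ddot{R} = \left(\tfrac{1}{4} - \tfrac{k^2}{8}\right)R = \tfrac{2-k^2}{8}R, \qquad R(0)=0,\ \dot{R}(0)=1, \]
which has a unique smooth solution (sin/cosh/cos depending on the sign of $2-k^2$). Equation \refb{EFE tau D} collapses identically to $0=0$ since both sides vanish under the ansatz.

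Next I would verify that the Hamiltonian-type constraint \refb{EFE tau E} is propagated. Substituting $\dot{l}\equiv 0$ and $V_0 e^{\lambda l_0}=k^2/8$ reduces \refb{EFE tau E} to the requirement $\dot{R}^2 = 1 + \tfrac{2-k^2}{8}R^2$. But this is precisely the first integral of the $R$-equation above with initial data $R(0)=0,\ \dot{R}(0)=1$, so the constraint holds identically along the flow. The initial conditions in \refb{EFE tau F} are then all satisfied as well.

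Hence the pair $(R, l_0)$ solves the full initial value problem (50). By Proposition~3.4 (uniqueness on an initial interval $[0,\tau_*]$), together with the standard ODE extension of uniqueness to the maximal interval, this is \emph{the} unique solution on $[0,\tau_M)$. Consequently $u_3 \equiv 0$ and $u_2 \equiv k^2/8$ throughout $[0,\tau_M)$. The only delicate point is uniqueness at the singular endpoint $\tau=0$ (where $u_1$ is unbounded), but that is exactly what Proposition~3.4 supplies via the fixed-point argument of Section~3; beyond any $\tau_*>0$ the $(R,l)$-system is smooth and standard Picard--Lindel\"of uniqueness applies.
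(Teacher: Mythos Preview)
Your proof is correct and takes a genuinely different route from the paper's. The paper argues dynamically: it first invokes the preceding Lemma~5.1 to show that $u_3$ (and hence $u_2$) is monotone on an initial interval $(0,\tau_1)$, so that $u_2-k^2/8$ has constant sign there; it then uses the integral representation \refb{ldot} for $\dot l$ to derive a sign contradiction unless $u_2\equiv k^2/8$ on $(0,\tau_1)$. Once the solution is seen to lie on the invariant manifold $\{u_2=k^2/8,\ u_3=0\}$ for some $\tau>0$, ordinary uniqueness for the (now regular) $u$-system keeps it there on all of $(0,\tau_M)$.

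Your argument bypasses Lemma~5.1 entirely: you exhibit the explicit ansatz $l\equiv l_0$, check that it satisfies \refb{EFE tau B}--\refb{EFE tau E} with the data \refb{EFE tau F}, and then appeal directly to the fixed-point uniqueness of Proposition~3.4 at the singular endpoint $\tau=0$, extending by Picard--Lindel\"of thereafter. This is cleaner and more self-contained; the only cost is the side computation verifying the constraint \refb{EFE tau E} reduces to the first integral of the $R$-equation (which, strictly speaking, is redundant since the constraint is propagated by \refb{EFE tau B} and \refb{EFE tau D}, but the direct check is short and avoids appealing to that fact). The paper's approach, by contrast, makes the invariant-manifold structure of the $u$-system explicit and dovetails with the phase-space analysis used elsewhere in Section~4, at the price of needing the auxiliary monotonicity lemma.
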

\begin{proof} First note that $u_2=k^2/8,u_3=0$ is an invariant manifold of the system (97) with $\epsilon=-1$. The system (97) is not defined at $\tau=0$ and so we must show that there exists $\tau_0>0$ such that $u_2(\tau_0)=k^2/8,u_3(\tau_0)=0$. Using the preceeding result, $u_3$ is monotone and, since $u_3(0)=0$, has the same sign while $u_1>u_3$ and $u_1>0$ hold. There must therefore exist $\tau_1$ such that $u_2$ is  monotone on $[0,\tau_1]$. It follows that $u_2-k^2/8$ has the same sign on $(0,\tau_1).$ Suppose that $u_2-k^2/8>0$ on $(0,\tau_1)$. We can choose $\tau_1$ such that $R>0$ on $(0,\tau_1)$. Then, using \refb{ldot} and $R>0$, $\dot{l}=u_3$ must be negative on $(0,\tau_1)$, which is a contradiction. A similar argument rules out $u_2-k^2/8<0$ on $(0,\tau_1)$, so we have $u_2-k^2/8=0$ for all $\tau\in(0,\tau_1)$. If $u_2$ is constant on $(0,\tau_1)$ then $u_3=0$ must also hold there. \end{proof}
\begin{proposition} Recall $m=\sqrt{\lambda}/2$. If $V_0e^{\lambda l_0}=k^2/8$ and $\lambda<0$ then there is a singularity at $\tau=\infty$, which is reached by radial null rays in finite affine time. If $V_0e^{\lambda l_0}=k^2/8$ and $\lambda>0$ then there is a singularity at $\tau=\pi/m$, which is reached by radial null rays in finite affine time. \end{proposition}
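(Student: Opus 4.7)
The plan is to exploit Lemmas 5.4 and 5.5, which show that on the invariant manifold $V_0 e^{\lambda l_0} = k^2/8$ we have $u_2 \equiv k^2/8$ and $u_3 \equiv 0$ throughout $[0,\tau_M)$; hence $l$ is pinned to the constant $l_0$ and the potential term $V_0 e^{\lambda l}$ in (52b) reduces to $k^2/8$. The evolution equation for $R$ therefore becomes the linear, constant-coefficient ODE
\[
\ddot R \;=\; \left(\frac{1}{4} - \frac{k^2}{8}\right) R \;=\; -\frac{\lambda}{4}\, R,
\]
subject to $R(0) = 0$, $\dot R(0) = 1$, with the two signs of the coefficient corresponding precisely to the two cases in the proposition.

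Integrating explicitly, for $\lambda > 0$ and $m = \sqrt{\lambda}/2$ the unique solution is $R(\tau) = m^{-1}\sin(m\tau)$, strictly positive on $(0,\pi/m)$ with $R(\pi/m)=0$ and $\dot R(\pi/m)=-1$; thus $u_1 = \dot R/R \to -\infty$ as $\tau \to \pi/m^-$, and Theorem 4.1 forces $\tau_M = \pi/m$. For $\lambda < 0$, setting $\bar m = \sqrt{-\lambda}/2$, the unique solution is $R(\tau) = \bar m^{-1}\sinh(\bar m\tau)$; here $u_1 = \bar m \coth(\bar m\tau)$ stays bounded on any interval bounded away from zero and $u_2, u_3$ are constant, so $\boldsymbol{u}$ remains bounded on every bounded $\tau$-interval and Theorem 4.1 gives $\tau_M = +\infty$.

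To identify the singular behaviour in the $\lambda < 0$ case I would substitute $l \equiv l_0$, $\dot l = 0$ and the relation $V_0 e^{\lambda l_0} = k^2/8$ into the Ricci-scalar formula (91). Using $\lambda + 1 = k^2/2$ one finds $4V_0 e^{-l_0} = (k^2/2)\,e^{-k^2 l_0/2}$, so the bracket in (91) collapses to the nonzero constant $-\frac{3k^2}{8}e^{-k^2 l_0/2}$, yielding
\[
\mathcal{R} \;=\; \frac{3k^2}{8|u|}\, e^{\tau/2 - c_1 - k^2 l_0/2},
\]
which diverges as $\tau \to \infty$ at fixed $u \neq 0$, establishing a curvature singularity on $\mathcal{N}_-$. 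For $\lambda > 0$ the singularity at $\tau = \pi/m$ is read off directly: $R(\pi/m) = 0$ forces $S = R e^{-\tau/2} \to 0$, while integrating (52c) with $\dot R = \cos(m\tau)$ gives $2\phi = -2\log|\cos(m\tau/2)| + \tau/2 + \mathrm{const.}$, so $\phi \to +\infty$ and the metric degenerates at $\tau = \pi/m$.

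Finally I would verify affine completion along outgoing radial null rays by substituting $l \equiv l_0$ into (93):
\[
C\mu(\tau) \;=\; e^{k^2 l_0/2}\int_0^\tau e^{-\tau'/2}\, d\tau' \;=\; 2\, e^{k^2 l_0/2}\bigl(1 - e^{-\tau/2}\bigr),
\]
whose limit is manifestly finite whether $\tau \to \pi/m$ or $\tau \to \infty$, so the singular locus is reached at finite affine parameter in both cases. The only delicate step is the algebraic reduction of the bracket in (91): one must check that the two terms there do not accidentally cancel on the critical manifold $V_0 e^{\lambda l_0} = k^2/8$, since that cancellation is precisely what would rob us of the Ricci-scalar blowup in the $\lambda < 0$ half of the statement. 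Once this is confirmed, everything else reduces to the explicit quadratures above.
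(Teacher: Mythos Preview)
Your proof is correct and follows essentially the same route as the paper: invoke the invariant-manifold lemma to freeze $l\equiv l_0$, reduce \refb{EFE tau B} to the linear ODE $\ddot R=-\tfrac{\lambda}{4}R$, write down the explicit trigonometric/hyperbolic solutions, and then read off the Ricci-scalar blowup (for $\lambda<0$) and the finite affine parameter from \refb{geo} with $l$ constant. The one point the paper adds that you omit is that in the $\lambda>0$ case the scalar curvature invariants ($\mathcal{R}$, $T^{ab}T_{ab}$, Kretschmann) actually remain \emph{finite} at $\tau=\pi/m$, so the ``metric degeneration'' you identify via $\phi\to+\infty$ is interpreted there as a non-scalar curvature singularity signalled by $L=|u|e^{-\phi}\to0$ rather than by any curvature blowup.
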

\begin{proof}
Using the preceeding result, we have $u_2=k^2/8, u_3=0$, and thus $\ddot{R}=-\lambda R/4$, for all $\tau\in(0,\tau_M)$. The solutions in terms of $S$ are 
\beq
S= \left\{\begin{array}{c}m^{-1}e^{-\tau/2}\sin m\tau, \quad\hspace{7pt}\mbox{if } \quad\lambda >0.\\
    m^{-1}e^{-\tau/2}\sinh m\tau, \quad \mbox{if } \quad\lambda <0.\end{array}\right.
\eeq
Note that the case $\lambda=0, u_2(0)=k^2/8$ is precisely the case $k^2=2,V_0=1/4$ covered in proposition 5.1. If $\lambda<0$ then we clearly have $\tau_M=+\infty$. In this case we also have $m=1/2-k^2/4<1/2$ and so $\lm S=0$. Using $\dot{l}=0$,$V_0e^{\lambda l}=k^2/8$ and \refb{Ricci} the Ricci scalar reduces to 
\beq
\label{R}\mathcal{R}=\frac{3k^2e^{-k^2l_0/2+\tau/2-c_1}}{8|u|},
\eeq
and it immediately apparent that $\lm \mathcal{R}=+\infty$. \\
In the case $\lambda>0$ we have $S(\pi/m)=0$. In the cases studied thus far, surfaces characterised by $S=0$, other than the regular axis, have been singular, which was demonstrated by an infinite curvature invariant. In this case, however, it is clear from \refb{R} above that $\mathcal{R}$ is finite if $\tau$ is finite, and one can check that this is the case for other invariants such as $\mathcal{T}=T^{ab}T_{ab}$ and the Kretschmann scalar $R^{abcd}R_{abcd}$. However, the specific length of the cylinders $L$ limits to zero as $\tau\rightarrow\pi/m$, which violates the regular axis conditions. This may be seen solving \refb{EFE tau C} for $\phi$ given the solutions for $R=e^{\tau/2}S$ given above, which yields
\beq
e^\phi = \frac{e^{\phi_0 +\tau/4}}{\cos(m\tau/2)}. 
\eeq
Recalling that $L=|u|e^{-\phi}$, we have $\lim_{\tau\rightarrow\pi/m}L=0$. We speculate that we have a non-scalar curvature spacetime singularity at $\tau=\pi/m$ in this case. 
The solution to the geodesic equation \refb{geo} with $l=l_0$ shows that $\mu$ is finite for all $\tau>0$ in both cases. \\
\end{proof}
\section{Proof of Theorem 2.3}
In this section we gather the results from the two previous sections which give the proof of Theorem 2.3. \\
\\
\textbf{Proof of Theorem 2.3}
\begin{proof} The proof of cases 1 and 2 are given by Propositions 4.1 and 5.2, respectively. For case 3, part (i) is given by Proposition 4.2 and part (ii) is given by Proposition 4.3. Case 4 part (i) is given by Propositions 4.4, part (ii) is given by Propositions 4.5 and 5.4, and part (iii) by Proposition 5.3. Case 5 is proven by Propositions 5.1 and 5.4. 
\end{proof}
\section{Conclusions and further work}
We have determined the global structure of solutions in the causal past of the singularity at $\mathcal{O}$ for all values of the parameters $V_0$ and $k$ and the initial datum $l_0$. For $k^2\ge2$, the spacetime terminates either on or before the surface $\mathcal{N}_-$. For $k^2<2$, solutions exist on $\mathcal{N}_-$, which are regular, and may be extended into region \textbf{II}. In a follow up paper, we investigate the evolution of these solutions with a view to answering the question of cosmic censorship relative to this class of spacetimes.

\section{Acknowledgments}
BN acknowledges gratefully fruitful discussions with Hideki Maeda, who suggested this problem, and who shared his preliminary calculations with the authors. This project was funded by the Irish Research Council for Science, Engineering and Technology, grant number P07650. 

\section*{References}

\end{document}